\pgfplotsset{compat=newest} 
\newcommand\circledtriangle{%
  \mathbin{\stackinset{c}{}{c}{}{\resizebox{0.9\width}{!}{$\bigtriangleup$}}{\resizebox{\width}{!}{$\bigcirc$}}}%
}
\newcommand{\myx}{x}
\newcommand{\myy}{y}
\newcommand{\myX}{X}
\newcommand{\myY}{Y}
\newcommand{\jelclass}[1]{\thanks{\textit{JEL Classification.} #1}}
\newtheorem{theorem}{Theorem}[section]
\newtheorem{definition}[theorem]{Definition}
\newtheorem{proposition}[theorem]{Proposition}
\newtheorem{lemma}[theorem]{Lemma}
\newtheorem{corollary}[theorem]{Corollary}
\newtheorem{example}[theorem]{Example}
\newtheorem{remark}[theorem]{Remark}
\newtheorem{assumption}[theorem]{Assumption}
\newtheorem{notation}[theorem]{Notation}
\newtheorem*{proposition*}{Proposition}
\newtheorem*{theorem*}{Theorem}
\newtheorem*{criterion*}{Criterion}
\newtheorem{remark*}{Remark}
\DeclareMathOperator{\supp}{supp}
\definecolor{darkgreen}{rgb}{0.0, 0.5, 0.0}
\pgfplotsset{compat=1.18}
\begin{document}

\begin{abstract}
We present a formal framework for the aggregation of financial markets mediated by arbitrage. Our main tool is to characterize markets via utility functions and to employ a one-to-one correspondence to limit order book states. Inspired by the theory of thermodynamics, we argue that the arbitrage-mediated aggregation mechanism gives rise to a \emph{market-dynamical entropy}, which quantifies the loss of liquidity caused by aggregation. As a concrete guiding example, we illustrate our general approach with the Uniswap v2 automated market maker protocol used in decentralized cryptocurrency exchanges, which we characterize as a so-called \emph{ideal market}. We derive its equivalent limit order book representation and explicitly compute the arbitrage-mediated aggregation of two liquidity pools of the same asset pair with different marginal prices. We also discuss future directions of research in this emerging theory of market dynamics.
\end{abstract}

\title{Aggregation of financial markets}

\author{Georg Menz}

\address{Department of Mathematics, University of California, Los Angeles \newline Mathcraft LLC, Los Angeles}
\email{gmenz@math.ucla.edu, g@mathcraft.us}

\author{Moritz Voss}

\address{Department of Mathematics, University of California, Los Angeles}
\email{voss@math.ucla.edu}

\jelclass{C02, D53}

\subjclass[2020]{91G15, 91B50}

\date{\today}

\maketitle

\tableofcontents

\section{Introduction} \label{sec:introduction}

The theme of this article is to develop and study mathematical tools that are useful to describe stationary financial markets in equilibrium. There are various types of financial markets: limit order book exchanges, decentralized liquidity pools, dark pools, alternative trading systems, etc. Therefore, we strive for a fairly general definition of a market. When defining a market, there are two main perspectives: The first one is simplistic and reduces a market to a device or mechanism that allows to exchange one asset into another. Hence, in order to describe a market one would only have to characterize how much of one asset could be exchanged on the market into another asset. The second perspective understands a market as a place where traders meet and interact. Describing a market from this perspective is much more subtle, as one needs to understand and model how traders influence each other and come to an agreement. To meet both views, we use utility functions (see Section~\ref{s_markets}). The super-level set of the utility function describes the set of all possible trades with that market, fully characterizing the market as an exchange mechanism. The behavior of rational traders can also be described via utility functions. A trader is only willing to trade if her utility increases (or at least does not decrease). The utility function of the market can then also be interpreted as the aggregation of the utility functions of the individual traders present in the market. \\

The main goal of this article is to describe a natural aggregation mechanism in financial markets. The described mechanism is quite general. It applies whenever arbitrageurs are present in the market, assuming they are willing and able to close arising arbitrage opportunities. As real financial markets are highly fragmented (see, e.g.,~\cite{SEC2020}) aggregation is an important tool to describe the global market. Specifically, the mechanism allows aggregation of conventional markets like centralized limit order book exchanges with non-conventional markets like decentralized automated market makers. If done correctly, the aggregate market will reflect the preferences and opinions of all market participants, representing the true state of the global financial market. The National Best Bid Offer (NBBO) price execution requirement in equity markets practically forces brokers to act on the aggregated market on behalf of their clients (see Regulation National Market System \cite{SEC2005}). Smart order routing in order to minimize price impact also effectively trades on an aggregated market. \\

Deriving the utility function of the aggregated system from the utility functions of individual agents is a common theme in many areas of research, like in economics or social welfare theory. When aggregating financial markets, one faces several challenges. The first challenge is that the theory must be general enough to bridge from microeconomics to macroeconomics. For example, for an individual trader money might be used as an absolute measure of value, which is not possible anymore on a macroeconomic level. Another difficulty is that the process of aggregation of utility functions is highly nontrivial. The aggregation process models how different agents come to an agreement. Therefore, it is not surprising that there exist a multitude of aggregation mechanisms. For instance, in social welfare theory there are utilitarian, Nash bargaining, Bergson-Samuelson, and many more aggregation mechanisms. Ultimately, there is no ideal canonical mechanism which follows from Arrow's impossibility theorem~\cite{Arr:12}. The theorem states that it is impossible to construct a social welfare function that satisfies a small number of reasonable conditions like, e.g., unanimity, non-dictatorship, and independence of irrelevant alternatives. \\

Our situation differs substantially from Arrow's setting. On the one hand, we face a fairly complex aggregation problem as traders can choose among a continuum of choices and not between a finite number of alternatives. On the other hand, this complexity is compensated by the regularity of utility functions, which makes optimal aggregation easier to achieve. In the end, we pursue a different objective. Instead of trying to find and characterize an optimal theoretical aggregation mechanism, we just strive to identify and characterize a natural mechanism that reflects how financial markets are aggregated in real life. More precisely, the main contribution of our work is to identify a realistic arbitrage-mediated aggregation mechanism and to study its implications. As a byproduct, we also rigorously describe the equivalence between automated market makers and limit order books, which might be of independent interest. \\

To describe the aggregation mechanism, we make strong assumptions. We only consider stationary markets in equilibrium and assume that the market participants are transparent and isolated. Isolated means that the traders have no access to a market before interacting, and therefore cannot change their initial portfolio before aggregation. Transparent means that the traders communicate their intentions via limit orders to the public. Under this assumption, the arbitrage-mediated aggregation is Pareto-optimal: After aggregation, no trader can improve her utility without decreasing another trader's utility. \\

Transparency is obviously not satisfied in real financial markets. In reality, traders are non-transparent as there is no need to reveal their motives. For example, iceberg, fill-or-kill, and immediate-or-cancel orders are not fully visible to the public and can be used to hide intentions. There is even a game-theoretic incentive not to act transparent. If a trader wants to increase her utility more than it would be possible by transparent aggregation, she needs to hide her information. As an analogy, if a player wants to win in a highly competitive card game, she will not show her hand to her opponents. As a consequence, the financial market will never be transparent. We will briefly discuss some further aspects of the non-transparent case in Appendix~\ref{sec:hidden_markets}. \\

Let us sketch the main idea of the aggregation mechanism. Aggregation takes place on the level of (hypothetical) limit order books and relies on the following observations:
\begin{itemize}
    \item Utility functions are characterized up to a monotone transformation by their indifference curves. Therefore, it suffices to describe the indifference curve of the aggregated market (see Section~\ref{s_isoutils}).  
    \item Each participant in a market (trader) is described via a utility function that encodes the trader's preferences for different portfolios. It also characterizes all possible trades with the trader via the utility function's indifference curves (see Section~\ref{s_markets}). We call utility indifference curves \emph{iso-utils} because the utility of the trader's portfolio is constant on these curves. 
    \item It turns out that iso-utils are in one-to-one correspondence to hypothetical limit order book states (see Section~\ref{s_isoutil_LOB}).
    \item On the level of limit order books, aggregation is straightforward: the limit orders of the individual traders are combined into one joint book. 
\end{itemize}

After the aggregation, one faces the following problem: As traders have different opinions, the aggregated hypothetical limit order book might not be settled. This means that buy and sell side of the book might overlap. As we will discuss, it therefore cannot describe a convex iso-util as it is needed to characterize the utility function of the aggregated market. The way out is to settle (or clear) the hypothetical limit order book which effectively results in a natural convexification procedure. This process is mediated by arbitrageurs as the overlap opens arbitrage opportunities. Specifically, arbitrageurs will counter-trade the overlapping limit buy and sell orders, settle the hypothetical limit order book and make a risk-free profit. Such a settlement mechanism can also be observed in real financial markets, for example, in darkpools or via cross-exchange price arbitrage trading. The settled hypothetical limit order book then describes a convex iso-util which defines the utility function of the aggregated market. \\

The aggregation mechanism still has one degree of freedom, namely how traders react to the counter-trading of the arbitrageur. Therefore, we distinguish between two types of traders:
\begin{itemize}
\item An \emph{adiabatic} trader only trades if her utility is strictly increased. This type of trader is not willing to undo any trade.
\item An \emph{iso-util} trader is always willing to trade as soon as their utility is not decreased. This trader is willing to undo any trade which keeps her utility invariant. 
\end{itemize}

These two types of traders lead to two canonical settlement mechanisms:
\begin{itemize}
\item In an \emph{adiabatic settlement} only adiabatic traders are present. Hence, all overlapping limit orders vanish out of the limit order book after the arbitrage-mediated clearing process.
\item In an \emph{iso-util settlement} only iso-util traders are present. As a consequence, overlapping limit orders reappear on the other side of the limit order book after the arbitrage-mediated clearing process.
\end{itemize}

Aggregation in real financial markets is often comprised of a mixture of both settlement mechanisms, reflecting that both adiabatic and iso-util traders might be present in the market. Even if adiabatic aggregation seems to be more natural, iso-util aggregation plays an important role. For instance, protocols of automated market makers are the most prominent example. Indeed, in the Uniswap v2 and v3 protocol (see~\cite{Adams:20, Adams:21}), liquidity providers act iso-util. Moreover, their individual iso-utils are effectively aggregated in an iso-util fashion to determine how the protocol provides liquidity to liquidity takers according to an aggregated iso-util. \\

As individual traders are never in the exact same equilibrium, we have the following central observation:
\begin{center}
\emph{When markets aggregate some liquidity is lost to arbitrage.}
\end{center}

Inspired by thermodynamics, we call this observation the \emph{fundamental law of market dynamics}. The lost liquidity is called \emph{market-dynamical entropy}. These are the first building blocks of what we envision to become a new theory of \emph{market dynamics}. For more details about the inspiration and vision to develop this new theory, we refer to Section~\ref{sec:aggregate_markets} and~\ref{sec:market_dynamics}.\\

The best practical application of our aggregation method is when aggregating individual exchanges to one virtual global exchange. Our treatment is general enough to aggregate both discrete and continuous exchanges. For example, in crypto the trading pair BTC-USDT is traded on many different centralized and decentralized exchanges. Centralized limit order books are discrete markets, whereas decentralized exchanges like Uniswap are usually continuous markets. Aggregating those markets into one global BTC-USDT market gives a better picture of the current supply and demand relationship compared to just focusing on one exchange. \\

The aggregation mechanism described in this article is not restricted to financial markets. It applies whenever an arbitrageur is present and able to close arising arbitrage opportunities. In Appendix~\ref{sec:entropy_in_economics}, we describe how this type of market aggregation can be applied to consumer and producer markets such as the car market. In this setting, the car dealership is playing the role of the arbitrageur. Market-dynamical entropy represents how many cars are sold to customers and how much money got transferred to the producers and the car dealership. That opens a new role of market-dynamical entropy: as a measure of economic activity. In particular, arbitrage-mediated aggregation can be used to describe how microscopic agents aggregate into one global aggregated market; and how economic activity results from this aggregation process. \\

Let us briefly summarize the main contributions of this article:
\begin{itemize}
\item We show that utility functions provide a tool to describe stationary markets in equilibrium. They are complex enough to capture the subtle inter-dependencies of assets and non-linear effects like price impact. An explicit example are automated market maker protocols (see Sections~\ref{s_axiomatic_introduction_markets} and~\ref{s_isoutils}).
\item We discuss that utility functions can be described via iso-utils; and that iso-utils satisfy a one-to-one correspondence to (hypothetical) limit order book states. Our treatment is general enough to study discrete and continuous markets in one unifying framework (see Sections~\ref{s_isoutils} and~\ref{s_isoutil_LOB}).
\item We use this correspondence as a formal aggregation mechanism by which liquidity gets lost due to price arbitrage. Our arbitrage-mediated aggregation mechanism allows to aggregate financial markets from the atomistic level of individual traders to the global market. As a concrete example, we illustrate our approach with the Uniswap v2 protocol and explicitly compute the arbitrage-mediated aggregation of two different liquidity pools of the same asset pair with different marginal prices (see Section~\ref{sec:aggregate_markets}).     
\item We introduce and study new notions like ideal markets, the fundamental law of market dynamics, and the market-dynamical entropy. Additionally we discuss open problems and how to further develop the emerging theory of market dynamics (see Sections~\ref{sec:aggregate_markets} and~\ref{sec:market_dynamics}). \\
\end{itemize}

To the best of our knowledge, formally describing mechanisms for the aggregation of financial markets in a general, unifying framework seems to be barely studied in the literature, especially within our context of centralized limit order book exchanges and decentralized liquidity pools. Limit order books and automated market makers have been compared before, but not with the purpose of aggregation. We refer to~\cite{MilionisMoallemiRoughgarden:24} and the references therein. Other recent papers studying automated market makers include~\cite{CapponiJia:21, CarteaDrissiMonga:23a, CarteaDrissiMonga:23b, EchenimGobetMaurice:24}. Our general idea of describing markets via utility functions is inspired from~\cite{BichuchFeinstein:22} and our description of the equivalence of automated market makers and limit order books generalizes the analysis in~\cite{Young:20}. For
an extensive treatment of limit order books, we refer to~\cite{Bouchaud:18}.\\     

The remainder of the article is structure as follows: In Section~\ref{s_axiomatic_introduction_markets}, we axiomatically introduce markets via utility functions. In Section~\ref{s_isoutils}, we discuss the concept of iso-utils, also known as indifference curves, and how they characterize marginal prices and price impact. Section~\ref{s_isoutil_LOB} comprises the main mathematical work of this article. It deduces a one-to-one relationship between limit order books and iso-utils, and describes how to clear an unsettled (hypothetical) limit order book. In Section~\ref{sec:aggregate_markets}, we apply these concepts to describe the arbitrage-mediated aggregation of markets. We also discuss the fundamental law of market dynamics, market dynamical entropy and its applications. As a concrete example, we consider the adiabatic and iso-util aggregation of ideal markets. In Section~\ref{sec:market_dynamics}, we discuss the emerging theory of market dynamics and open problems. In Appendix~\ref{sec:entropy_in_economics}, we illustrate the aggregation of consumer and producer markets. Appendix~\ref{sec:hidden_markets}  discusses some aspects of aggregation of non-transparent markets. \\

\section{Axiomatic description of markets}\label{s_axiomatic_introduction_markets}  \label{s_markets}

In this section we review the mathematical framework to describe markets. Our approach is straightforward. We describe the financial market as a pure exchange economy of agents trying to improve the utility of their endowed portfolio through trading. Let us look at one individual participant of a market subsequently called trader~$l$. Trader~$l$ owns a portfolio of assets~$(X_1, \ldots, X_n)$ given by a vector~$x = (x_1, \ldots, x_n) \in [0, \infty)^n$, where $x_i$ denotes the number of units of the $i$-th asset held by the trader. Before aggregation the trader is isolated, which means that she cannot change her initial portfolio~$x$, for example by interacting with a market. \\

Obviously, trader~$l$ prefers certain portfolios over others. Those individual preferences can be encoded with the help of a utility function~$U_l : [0, \infty)^n \to \mathbb{R}$ which assigns to any portfolio~$x$ a utility~$U_l(x)$. For a precise definition of utility functions we refer to Definition~\ref{d_markets} below. As a rational trader, $l$ will only trade if it increases (or at least not decreases) her present utility. More precisely, trader~$l$ is willing to trade the portfolio~$x$ for the portfolio~$\tilde x$ if $U_l(x)\leq U_l(\tilde x)$. As a consequence the utility function~$U_l$ does not only encode the preferences of  trader~$l$ but also encodes the set of all possible trades~$\Delta x = \tilde x - x$ with trader~$l$ via the super-level set
\begin{align}
\left\{ \tilde x \in [0, \infty)^n \ | \ U_l(\tilde x ) \geq U_l (x) \right\}.
\end{align}
In classical micro-economics, the trader would interact with a market and optimize her utility given a budget constraint which is imposed by her initial portfolio~$x$. We do not follow this path. In a nutshell, we propose the following alternative for financial markets where arbitrageurs are present: The arbitrageurs take advantage of the information asymmetry and match traders with disagreeing price expectations making a profit. The gain of utility that can be achieved by the interaction between traders is therefore transferred from the individual trader to the arbitrageurs. \\

Let us now turn to markets. What is a market? Oversimplifying, a market is just a device or mechanism that allows one to exchange one good for another. From this perspective, even a single trader~$l$ can be regarded as a market. All possible trades can be derived from her utility function~$U_l$, and the trader's present portfolio~$x$ describes the maximal quantities one can purchase from this atomic market. \\

In common language, a market denotes a place where traders meet and trade goods. Therefore, let us now assume that a couple of individual traders enumerated by~$1$ to~$k$ meet at a market place. It is a central question of economics how the traders, each equipped with individual preferences, come to an overall agreement. Mathematically, this amounts to the question of how the individual utility functions $(U_1, \ldots, U_k)$ are aggregated into one joint utility function~$U = U_1 \bigtriangleup \cdots \bigtriangleup U_k$ of the market. In Section~\ref{sec:aggregate_markets} below, we propose a simple arbitrage-mediated aggregation mechanism. In this section, let us assume that the traders reached agreement. Then the market comprised of the traders~$1, \ldots, k$ can again be described by a utility function~$U$ and a portfolio~$x$. The portfolio~$x$ describes the maximal quantities that are supplied on the market and will be called \emph{supply level} further on. The set of possible trades with this market is again characterized via the super-level set of the utility function
\begin{align}
\left\{ \tilde x \in [0, \infty)^n \ | \ U(\tilde x ) \geq U (x) \right\}.
\end{align}
Let us iterate the meaning of utility~$U(x)$ one more time: It represents the value and preference the market puts on the portfolio~$x$ of assets.

\begin{assumption}
In this article, we only consider transparent stationary markets. This means that the utility functions of traders and markets do not change over time and are observable. We also assume that the traders or markets that are being aggregated are transparent and isolated, i.e., the utility functions and portfolios of each trader/market are observable and the traders/markets cannot interact before aggregation.
\end{assumption}

Let us now turn to the precise definition of a market.

\begin{definition}[Markets, supply levels and utility functions] \label{d_markets}
A market~${\mathcal{M}}$ of the assets ~$(\myX_1, \ldots, \myX_n)$ is defined via supply levels~$(\myx_1, \ldots, \myx_n) \in [0, \infty)^n$ and a utility function~$U:[0, \infty)^n  \to \mathbb{R} \cup \left\{ - \infty \right\}$. The utility function assigns to every supply level~$(\myx_1, \ldots, \myx_n)$ a utility~$U(\myx_1, \ldots ,  \myx_n)$. It has to satisfy the following conditions:
\begin{enumerate}
\item {\bf(Continuity)} The function~$U$ is continuous.
\item {\bf (Quasi concavity)} The function~$U$ is quasi-concave, i.e., for every~$T >0$ the sets
\begin{align}
\left\{ (\myx_1, \dots, \myx_n) \in [0,\infty)^n \ | \  U(\myx_1, \ldots,  \myx_n) \geq  \log T \right\}        
\end{align}
are convex.
\item {(\bf Strict monotonicity)} For any asset~$i \in \left\{1, \ldots, n \right\}$ and supply levels $(\myx_1, \ldots, \myx_n )$ the function~$\myx_i \mapsto U(\myx_1, \dots, \myx_i, \ldots, \myx_n)$ is strict monotone increasing.
\item {\bf (Unbounded from above)} For any asset~$i \in \left\{1, \ldots, n \right\}$ it holds \newline $\lim_{\myx_i \to \infty} U(\myx_1, \ldots, \myx_n) = +\infty$ . 
\end{enumerate}
\end{definition} 

The utility function determines many aspects of a market:
\begin{itemize}
\item The set of possible trades is described via super-level sets of the utility function;
\item The convexity condition encodes the non-existence of arbitrage opportunities (see Section~\ref{s_isoutil_LOB});    \item The marginal and equilibrium prices are a function of the utility landscape (see Section~\ref{s_isoutils});
\item The price impact is a function of the utility landscape (see Section~\ref{s_isoutils}).
\end{itemize}

\begin{remark}
    To keep the presentation simple and accessible we restrict ourselves to markets with only two assets~$(\myX, \myY)$. The supply levels are denoted by~$(\myx, \myy)$. 
\end{remark}

In this manuscript we focus on financial markets, though Definition~\ref{d_markets} also describes non-financial markets. To have a concrete example of an asset pair in mind one could think of the asset pair~$(\myX, \myY)$ as~$(\textrm{US dollar, gold})$ where the unit of gold is measured in ounces.

\begin{definition}[Ideal market] \label{def:ideal_market} A market associated to the utility function~$U(x,y)= \log x + \log y$ is called~\emph{ideal market} (see Figure~\ref{fig:ideal_market_utility_function}). For the origin of this terminology see~Section~\ref{sec:market_dynamics}.
\end{definition}

\begin{figure}[ht]
\centering
\begin{tabular}{@{}c@{}c}
\includegraphics[height=2in]{} &
\begin{tikzpicture}[scale=0.75]
\begin{axis}[
xlabel= Supply level $x$,
ylabel= Supply level $y$,
zlabel style={rotate=-90},
]
\addplot3 [
surf,
domain=0.01:5,
domain y=0.01:5,
] {ln(x) + ln(y)};
\end{axis}
\end{tikzpicture}
\end{tabular}
\caption{Contour plot (left) and 3D-plot (right) of the utility function $U(x,y)=\log x + \log y$.}
\label{fig:ideal_market_utility_function}
\end{figure} 

\begin{example}
The decentralized cryptocurrency exchange protocol Uniswap v2 uses the ideal market for its automated market maker protocol (see~\cite{Adams:20}). The successor protocol Uniswap v3 allows any utility function and therefore is able to characterize the true supply and demand relationship in the market (see~\cite{Adams:21}).
\end{example}

\begin{example}
Another example of a utility function is the Cobb-Douglas production function~$U(x,y)=Ax^\beta y^\alpha$ for some~$A>0$ and~$0< \alpha, \beta< 1$.
\end{example}

\begin{remark}[Discrete supply levels] \label{r_discrete_supply_levels} 
Often supply levels must be discrete as it is not possible to buy fractions of certain assets. Definition~\ref{d_markets} can be extended to discrete supply levels by introducing the set of admissible supply states as a discrete subset~$\mathcal{P} \subset [0, \infty) \times [0, \infty)$. In this scenario, the utility function~$U$ would still be defined on a continuum but supply levels must take values in~$\mathcal{P}$.
\end{remark}

\begin{remark}[Utility functions of real markets]\label{rem:d_explicit_markets}
In real markets utility functions are often not observable. To address this complexity, we propose the following classification of markets: 
\begin{itemize}
\item In \emph{implicit} markets one cannot directly observe the utility function. However, as we explain later, it is possible to deduce some information about the utility function from observing trades, prices, volume and activity of the market. An example would be Over-the-Counter (OTC) markets.
\item In \emph{semi-explicit} markets only certain features of the utility function can be observed. An example are centralized Limit Order Book (LOB) exchanges like stock exchanges or foreign exchange markets. It is possible to read off the present so-called iso-utility from the current limit order book state (see Section~\ref{s_isoutil_LOB}).
\item In \emph{explicit} markets the utility function of the market is explicitly given. An example are Automated Market Makers (AMMs). For instance, the original Uniswap protocol uses the utility function~$U(\myx, \myy) = \log(\myx) + \log (\myy)$. For more information and further examples we refer to~\cite{BichuchFeinstein:22}.
\end{itemize}
\end{remark}

\begin{remark} [Interpretation of the supply level~$(\myx, \myy)$] Tied to a market~$\mathcal{M}$ is the notion of supply levels~$(\myx, \myy)$ of the asset pair $(\myX, \myY)$ which has the following interpretation:
\begin{itemize}
\item The number~$\myy$ denotes the maximal number of units of the asset~$\myY$ that can be traded for the asset~$\myX$. The number $\myx$ denotes the maximal number of units of the asset~$\myX$ that can be traded for the asset~$\myY$ (see Section~\ref{s_isoutil_LOB}).
\item The level~$(\myx, \myy)$ may represent the equilibrium point of the supply and demand curve of the asset pair~$(\myX, \myY)$.
\end{itemize}
\end{remark}

\begin{remark}[Money as an absolute versus relative measure of value] 
In traditional economic theory, the role of money is paramount in establishing the interplay of supply and demand. Often, money is employed as an absolute measure of value in the sense that one dollar always has the exact same value. This works well in microeconomics because of the relatively small quantities involved. The use of money as an absolute measure of value becomes problematic when attempting to bridge the gap between micro- and macroeconomics. When developing a theory of markets that can seamlessly transition from micro- to macroeconomic markets via aggregation, it becomes necessary to employ money as a relative measure of value, i.e., the value of one US dollar is relative and depends on many factors; e.g., on the overall money supply or on the overall supply of consumer goods. For this reason, our framework assigns fiat money (US dollar), usually denoted as~$X$, the same role as any other asset or good, usually denoted as~$Y$. The value of money is derived from its use, namely that it can be easily exchanged for any other asset or good on a market~$(X,Y_1, \ldots, Y_n)$. 
\end{remark}

We note that the conditions in Definition~\ref{d_markets} have some important structural consequences on the utility function~$U$.

\begin{lemma}[Regularity of the utility function~$U$]\label{p_regularity_U}
Consider a market~$\mathcal{M}$ with utility function~$U$. Then the utility function~$U$ is almost everywhere differentiable with 
\begin{align}\label{e_sign_derivatives_U}
\infty >  \partial_\myx U >0 \qquad \mbox{and} \qquad  \infty >\partial_\myy U >0 \qquad \mbox{for a.e.~$(\myx, \myy)$.}
\end{align}
\end{lemma}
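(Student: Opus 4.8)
The plan is to split the statement into its four component claims---existence of both partial derivatives a.e., their finiteness, their nonnegativity, and genuine (total) differentiability a.e.---and to address each with a different hypothesis of Definition~\ref{d_markets}. Strict monotonicity (condition (3)) will supply existence, finiteness and the correct sign of the partials a.e.; quasi-concavity (condition (2)) will be needed to upgrade a.e.\ existence of the partials to a.e.\ total differentiability; and strict monotonicity together with continuity will be used for the strict inequalities.

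First I would exploit the one-dimensional structure. Fix $y_0 > 0$ and consider the section $x \mapsto U(x, y_0)$; by condition (3) it is monotone increasing, so Lebesgue's theorem on monotone functions gives that it is differentiable with a finite, nonnegative derivative for every $x$ outside a null set $N_{y_0} \subset (0,\infty)$, and likewise for the vertical sections $y \mapsto U(x_0, y)$. Applying Fubini's theorem to $\{(x,y) : x \in N_y\}$ and to $\{(x,y) : y \in N_x\}$ shows that both exceptional sets are Lebesgue-null in the plane. Hence $\partial_x U$ and $\partial_y U$ exist, are finite, and satisfy $\partial_x U \geq 0$ and $\partial_y U \geq 0$ for a.e.\ $(x,y)$: finiteness is exactly the finiteness clause of the one-dimensional theorem, and nonnegativity holds because every difference quotient of an increasing function is nonnegative.

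It remains to promote a.e.\ existence of both partials to a.e.\ total differentiability and to sharpen the sign to strict positivity, and this is where I expect the real work. For differentiability I would exploit quasi-concavity: condition (2) makes every superlevel set $\{U \geq c\}$ convex, and combined with monotonicity this forces each indifference curve to be the graph of a convex decreasing function $y = \phi_c(x)$ (the iso-util of Section~\ref{s_isoutils}). A foliation by convex graphs is precisely what excludes the pathologies by which a merely coordinate-monotone function can fail to be totally differentiable, since separate monotonicity does not control the mixed second difference. Concretely I would try to compare $U$ locally with a genuinely concave function carrying the same level geometry and then invoke Rademacher's theorem, converting the quasi-concave problem into a Lipschitz one; this conversion is the technical heart. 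The strict positivity is the subtlest point: strict monotonicity of a one-dimensional section does not by itself force a positive derivative a.e.---singular strictly increasing functions have vanishing derivative almost everywhere---so $\partial_x U > 0$ cannot be read off from condition (3) in isolation and must instead be extracted from the interaction of strict monotonicity with the convex level-set structure. I therefore expect the main obstacle to be this combined step: producing genuine differentiability and strict positivity out of quasi-concavity, rather than the routine monotone-function input that handles existence, finiteness and nonnegativity.
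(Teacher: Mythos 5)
Your proposal is an honest plan rather than a finished proof: the two claims that carry all the content --- a.e.\ (total) differentiability and strict positivity of the partials --- are exactly the steps you flag as ``the real work'' and leave open. The paper's own proof is much shorter and disposes of the first by citation: it invokes the known fact that quasi-convex (hence quasi-concave) functions are differentiable almost everywhere (comment after Theorem~5.3 in~\cite{CrZa:05}). So the machinery you sketch --- Lebesgue's monotone differentiation theorem plus Fubini for the sections, followed by a local comparison with a genuinely concave function and Rademacher --- is not what the paper does; the differentiability step is outsourced to the quasi-convexity literature, and your Fubini argument, correct as far as it goes, only yields existence of the two partial derivatives separately rather than the a.e.\ differentiability the cited theorem provides. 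Your instinct that quasi-concavity is the hypothesis doing the work in that step is nevertheless the right one.

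On strict positivity your diagnosis is sharper than the paper's own treatment. The paper asserts that the lower bound in~\eqref{e_sign_derivatives_U} ``follows directly'' from strict monotonicity, which is precisely the one-line deduction you correctly reject: a continuous, strictly increasing singular function has vanishing derivative almost everywhere. Unfortunately, your hope that positivity can instead be ``extracted from the interaction of strict monotonicity with the convex level-set structure'' does not pan out either: taking $U(x,y)=g(x+y)$ with $g$ continuous, strictly increasing, singular, and tending to $+\infty$ produces a function satisfying every condition of Definition~\ref{d_markets} whose partial derivatives vanish almost everywhere. So this step is a genuine gap --- in your proposal, but equally in the paper's proof --- and closing it requires strengthening the hypotheses (for instance, local Lipschitz continuity or absolute continuity of the coordinate sections) rather than a cleverer argument from the stated conditions.
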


\begin{proof}
It is a well-known fact that quasi-convex functions are differentiable almost everywhere (see, e.g., comment after Theorem~5.3 in~\cite{CrZa:05}). This implies the existence of the partial derivatives~$\partial_\myx U$ and~$\partial_\myy U$. The lower bound in~\eqref{e_sign_derivatives_U} follows directly from the fact that a utility function~$U$ is strict monotone increasing by Definition~\ref{d_markets}. 
\end{proof}

\begin{remark}[Origins of the Definition~\ref{d_markets}]
In~\cite{BichuchFeinstein:22} a similar approach was used to describe automated market makers via utility functions. In this work, we use utility functions to describe general markets. Our assumptions on the utility function~$U$ given in Definition~\ref{d_markets} are less restrictive compared to~\cite{BichuchFeinstein:22}. Specifically, we excluded condition
\begin{align}
\text{\bf (Unbounded from below)} \qquad \lim_{\myx \to 0} U(\myx, \myy) = \lim_{\myy \to 0} U(\myx, \myy) = - \infty. 
\end{align}
Otherwise, Definition~\ref{d_markets} would not include a very important example, namely limit order book markets (see Section~\ref{s_isoutil_LOB}). Condition~{\bf (Unbounded from below)} forces level sets -- called indifference curves or iso-utils in our manuscript -- to always have an unbounded domain~$(0, \infty)$. However, when considering an iso-util coming from a (finite) limit order book, the domain of the iso-util is bounded. For more details we refer to Section~\ref{s_isoutils}, Proposition~\ref{p_iso_util_function}, and Section~\ref{s_isoutil_LOB}.
\end{remark}

\begin{remark}[Conditions in Definition~\ref{d_markets}]\label{r_conditions_markets}
The condition {\bf (Quasi concavity)} plays an important role. It means that the traders who constitute a market reached consensus, and hence rules out price arbitrage opportunities (cf.~Remark~\ref{rem:quasi_concave_no_arbitrage} below). If {\bf (Quasi concavity)} is not satisfied we say that the market is not settled. Under relatively weak regularity assumptions, quasi concavity can be strengthened to concavity, after a smooth, strict monotone reordering of preferences (see Theorem~3 in~\cite{ConRas:15}).  The condition {\bf (Strict monotonicity)} implies that more supply always implies more utility. {\bf (Unboundedness from above)} corresponds to unsaturability.  The combination of {\bf (Quasi concavity)} and {\bf (Strict monotonicity)} entails that on a market with higher utility more assets can be exchanged with less price impact. If one would use {\bf (Strict concavity)} instead of~{\bf (Quasi concavity)} then distributed portfolios would have higher utility than concentrated portfolios.
\end{remark}

\begin{remark}[Additional conditions on the utility function~$U$] There are several additional conditions that can be added to the definition of a utility function, as for example Inada, Inada+ or single-crossing conditions. Those conditions serve several purposes; for instance, to ensure existence of solutions to optimization problems or path independence of a trading loop. We refer to ~\cite{BichuchFeinstein:22} for more details.
\end{remark}


\section{Temperature, iso-utils and marginal prices}\label{s_isoutils}

In this section we review how (marginal) prices are calculated in markets. 
\begin{definition}[Temperature and mean activity] \label{d_def_temperature}
We consider a market~$\mathcal{M}$ with utility function~$U$ at supply level~$(\myx, \myy)$. Then we call
\begin{align}
T: = e^{U(\myx, \myy)}
\end{align}
the temperature of the market~$\mathcal{M}$ at supply level~$(\myx, \myy)$. We call~$A := \sqrt{T}$ the mean activity of the market~$\mathcal{M}$ at supply level~$(\myx, \myy)$.
\end{definition}

\begin{remark}
We call~$T$ temperature because of an analogy to thermodynamics (cf.~Example~\ref{rem:isoutils} below). Moreover, we call~$\sqrt{T}$ mean activity because it coincides with the mean arrival rate of buyers and sellers in a canonical microscopic model of a market, modeling the arrivals of buyers and sellers via Poisson processes. 
\end{remark}

Definition~\ref{d_def_temperature} of temperature motivates the definition of iso-utils which identify supply levels~$(\myx, \myy)$ with the same temperature. In microeconomics, iso-utils are known under the name~\emph{indifference curves}~(see, e.g.,~\cite{HicAll:34}). We prefer to use the name iso-util to point out similarities to iso-thermals in thermodynamics. 

\begin{definition}[Iso-utils]
We consider a market~$\mathcal{M}$ with utility function~$U$. The iso-utils of~$\mathcal{M}$ are defined as the level sets of the utility function~$U$, i.e., the iso-util~$I_T$ of~$U$ associated to the temperature~$T \in \mathbb{R}_+$ is defined by
\begin{align}
I_T:= \left\{ (\myx, \myy) \ | \  U (\myx, \myy) = \log T \right\}.
\end{align}
With slight abuse of notation we suppress index~$T$ and write $I$ instead of~$I_T$.
\end{definition}

\begin{figure}
\centering
\includegraphics[scale=0.25]{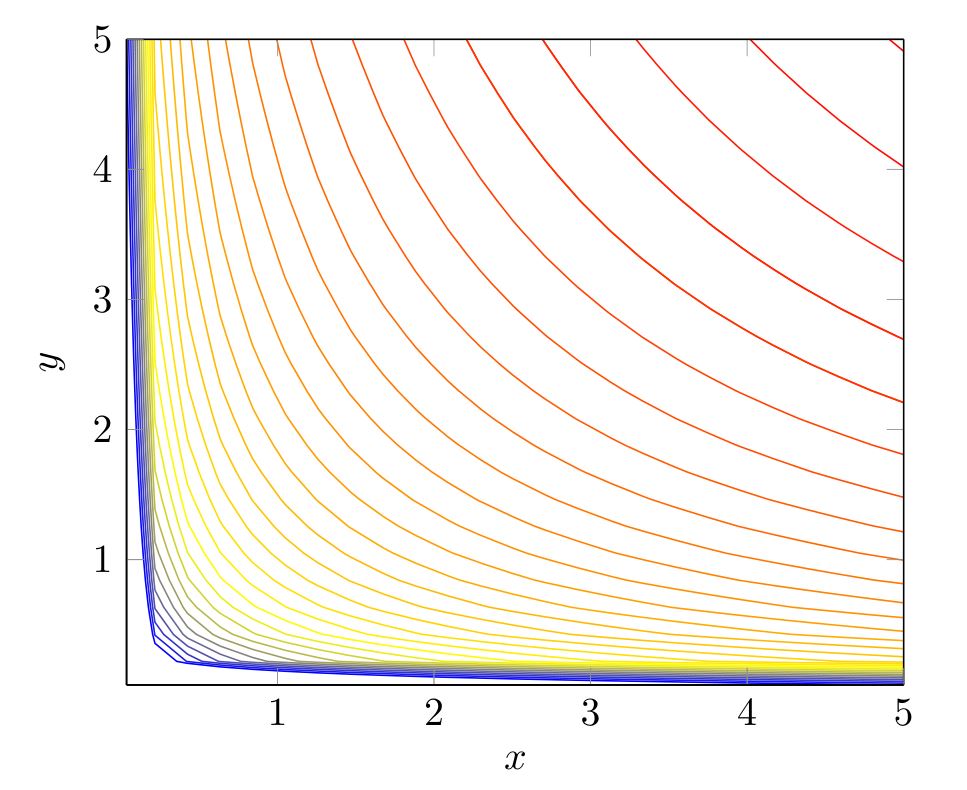}
\caption{Iso-utils of an ideal market with utility function~$U(x,y)=\log x + \log y$}
\label{fig:ideal_market_iso_utils}
\end{figure}

\begin{example}[Graphs of Iso-utils] \label{rem:isoutils}
In Figure~\ref{fig:ideal_market_iso_utils}, we plot the iso-utils of an ideal market (see Definition~\ref{def:ideal_market}). They are given by the formula $x \cdot y = T$, which we call \emph{ideal market law} in analogy to the ideal gas law of thermodynamics. The ideal gas law is given by the formula~$P \cdot V = c \cdot T$, where~$P$, $V$, and $T$ denote pressure, volume, and temperature, respectively, of the underlying thermodynamic system. In Figure~\ref{fig:LOB_isoutil_settled} below, we illustrate the iso-util of a limit order book.
Figure~\ref{fig:LOB_BTCUSDT_isoutil} illustrates an iso-util that is associated to the Binance trading pair BTC-USDT at a particular snapshot in time. We refer to Example~\ref{ex:binance_BTC_USDT} for more details.
\end{example}

\begin{remark}[Role of iso-utils]
Iso-utils play a central role in our theory of market dynamics as they characterize utility functions up to strict monotone transformations. Together with the current supply level, iso-utils determine the best possible trades in a market, and therefore (marginal) prices as well as price impact; see Definition~\ref{d_prices} below.
\end{remark}

Iso-utils can always be described as a graph of a function.

\begin{proposition}[Function representation of an iso-util]\label{p_iso_util_function}
Consider a market~$\mathcal{M}$ with utility function~$U$. For arbitrary~$T\in (0, \infty)$ we consider a non-empty iso-util
\begin{align}
I = \left\{ (\myx, \myy) \ | \  U (\myx, \myy) = \log T \right\} \neq \emptyset.
\end{align}
Then there exist a number~$d_f \in \mathbb{R}_+ \cup \left\{ \infty\right\}$ and a function~$f: D_f \to [0, \infty)$ defined on $D_f = (0, d_f)$ such that:
\begin{enumerate}
\item[(i)] The function~$f$ is convex.
\item[(ii)] It holds~$\lim_{x\to d_f} f(x) =0$.
\item[(iii)] The iso-util~$I$ is the graph of~$f$, i.e.,
\begin{align}
I = \left\{ (\myx, f(\myx)) \ | \ \myx \in D_f \right\}.
\end{align}
\end{enumerate}
The function~$f$ is called function representation of the iso-util~$I = \left\{ U = \log T \right\}$. Moreover, the left and right derivatives~$f_{-}'$ and~$f_{+}'$ of the function $f$ exist everywhere on~$D_f$ and $f$ is differentiable almost everywhere (a.e.) on~$D_f$. It also holds that
\begin{align}\label{eq:derivative_of_function_representation}
f'(x) =   -\frac{\partial_\myx U (\myx, f(\myx))}{\partial_\myy U (\myx, f(\myx))} \qquad \text{a.e.}
\end{align}
The functions~$f_{-}'$, $f'$, and $f_{+}'$ are non-decreasing and satisfy     
\begin{align}\label{eq:ordering_left_right_hand_derivative1}
- \infty< f_{-}'  \leq f_{+}'< 0
\end{align}
everywhere and
\begin{align}\label{eq:ordering_left_right_hand_derivative2}
- \infty<  f_{-}'  \leq  f' \leq f_{+}' < 0 \qquad \text{a.e.}
\end{align}
If the utility function~$U$ is unbounded from below then the domain of the function~$f$ is given by~$D_f =(0, \infty)$.
\end{proposition}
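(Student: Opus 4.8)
The plan is to construct $f$ pointwise from the defining equation $U(x,f(x))=\log T$ and then to read off every claimed property from the structural hypotheses on $U$ together with one-dimensional convex analysis. First I would fix $x>0$ and study the section $y\mapsto U(x,y)$: by \textbf{(Continuity)} it is continuous, by \textbf{(Strict monotonicity)} strictly increasing, and by \textbf{(Unbounded from above)} it tends to $+\infty$. Hence the equation $U(x,y)=\log T$ has a (necessarily unique) solution $y\ge 0$ if and only if $U(x,0)\le \log T$; I define $f(x)$ to be this solution and set
\begin{align}
D_f := \{x>0 \ : \ U(x,0) < \log T\}.
\end{align}
Since $x\mapsto U(x,0)$ is increasing by \textbf{(Strict monotonicity)} and continuous, $D_f$ is downward closed and open in $(0,\infty)$, so $D_f=(0,d_f)$ with $d_f:=\sup D_f\in(0,\infty]$; it is nonempty because $I\neq\emptyset$. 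On $D_f$ one has $f(x)>0$, and a point $(x,y)$ with $x,y>0$ lies in $I$ exactly when $x\in D_f$ and $y=f(x)$, the only point of $I$ not captured this way being the intersection with the $x$-axis, recovered through the limit in (2).

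Next I would derive convexity of $f$ (property (1)) from \textbf{(Quasi concavity)}. For each $x$, \textbf{(Strict monotonicity)} in $y$ shows that the vertical section of the superlevel set $S:=\{U\ge \log T\}$ is the upper ray $\{y : y\ge f(x)\}$. Thus $S$, intersected with the open right half-plane, is precisely the epigraph of the function equal to $f$ on $(0,d_f)$ and to $0$ for $x\ge d_f$. Because a planar set whose vertical sections are upper rays is the epigraph of a convex function exactly when the set itself is convex, the convexity of $S$ supplied by \textbf{(Quasi concavity)} forces $f$ to be convex.

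For property (2), $f$ is convex, bounded below by $0$, and strictly decreasing (if $x_1<x_2$ then $U(x_2,f(x_1))>U(x_1,f(x_1))=\log T=U(x_2,f(x_2))$ forces $f(x_1)>f(x_2)$), so $L:=\lim_{x\to d_f}f(x)\ge 0$ exists. If $d_f<\infty$, continuity yields $U(d_f,L)=\log T=U(d_f,0)$, and \textbf{(Strict monotonicity)} in $y$ gives $L=0$; if $d_f=\infty$, then $L>0$ would give $U(x,L)\le U(x,f(x))=\log T$ for all $x$, contradicting \textbf{(Unbounded from above)}, so again $L=0$. The final assertion is then immediate: if $U$ is unbounded from below, $\lim_{y\to 0}U(x,y)=-\infty$ forces $U(x,0)=-\infty<\log T$ for every $x>0$, whence $D_f=(0,\infty)$.

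Finally, for the derivative statements I would combine convex analysis with Lemma~\ref{p_regularity_U}. Since $f$ is convex on the open interval $(0,d_f)$ it is locally Lipschitz, so the one-sided derivatives exist, are finite, satisfy $f_-'\le f_+'$, are non-decreasing, and coincide off an at most countable set; this gives differentiability a.e.\ and all of \eqref{eq:ordering_left_right_hand_derivative1} and \eqref{eq:ordering_left_right_hand_derivative2} apart from the sign. The bound $f_+'<0$ follows because $f_+'$ is non-decreasing and $f$ strictly decreasing (otherwise $f$ would be non-decreasing on some $[x_0,d_f)$). For \eqref{eq:derivative_of_function_representation}, at any $x$ where $f$ is differentiable and $U$ is differentiable at $(x,f(x))$, the chain rule applied to $U(x,f(x))=\log T$ gives $\partial_\myx U+\partial_\myy U\,f'(x)=0$, and the bounds $0<\partial_\myx U,\partial_\myy U<\infty$ from Lemma~\ref{p_regularity_U} turn this into $f'(x)=-\partial_\myx U/\partial_\myy U\in(-\infty,0)$. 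I expect the genuine obstacle to be exactly the phrase \emph{almost everywhere} here: Lemma~\ref{p_regularity_U} only gives differentiability of $U$ off a two-dimensional null set, whereas the graph of $f$ is itself a planar null set, so one must show separately that $U$ is differentiable at $(x,f(x))$ for a.e.\ $x\in D_f$. I would handle this by bounding $f_-'$ and $f_+'$ along the curve using the everywhere-existing one-sided directional derivatives of the quasi-concave, monotone $U$, and only then collapse these one-sided bounds to the stated equality at the (a.e.) points where $f'$ exists.
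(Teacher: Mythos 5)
Your proposal is correct and follows the same overall skeleton as the paper's proof (graph representation, convexity via the identification of the epigraph with the superlevel set $\{U\ge \log T\}$, the limit at $d_f$ from unboundedness above, one-dimensional convex analysis for the derivative claims, and the chain rule for \eqref{eq:derivative_of_function_representation}), but your execution of the first step is genuinely different and in fact tighter. The paper obtains the graph property indirectly: it asserts that the tangent vector of $I$ exists a.e., computes it from $\nabla U$, observes it points into the lower-right quadrant, and concludes that each vertical line meets $I$ exactly once --- an everywhere statement extracted from an a.e.\ one. Your route solves $U(x,\cdot)=\log T$ directly using \textbf{(Continuity)}, \textbf{(Strict monotonicity)} and \textbf{(Unbounded from above)} of the section $y\mapsto U(x,y)$, which yields existence and uniqueness of $f(x)$ pointwise and identifies $D_f=(0,d_f)$ cleanly; you also supply arguments the paper leaves implicit (strict decrease of $f$, the case $d_f<\infty$ in property (2)). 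The one place where your proof is not complete is the one you explicitly flag: the a.e.\ validity of \eqref{eq:derivative_of_function_representation} does not follow from Lemma~\ref{p_regularity_U} alone, since a.e.\ differentiability of $U$ with respect to planar Lebesgue measure says nothing about points on the null set $\{(x,f(x))\}$, and you only sketch how to repair this via one-sided directional derivatives. You should be aware that the paper's own proof silently commits exactly this step (it invokes the tangent-vector formula ``a.e.'' along the curve without justification), so your diagnosis is a genuine observation about the argument rather than a defect peculiar to your write-up; carrying out the sketched repair would make your version strictly more rigorous than the original.
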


\begin{proof}[Proof of Proposition~\ref{p_iso_util_function}]
The aim is to define a function~$f: (0, d_f) \to [0, \infty)$ which satisfies the desired conditions. We start with observing that by definition of quasi concavity the iso-util~$\left\{  U = \log  T \right\}$ is the boundary of the convex set~$\left\{   U \geq  \log T \right\}$.
Let us consider the set $D_f$ as given by the projection of~$I$ onto the~$\myx$-axis. Because the set~$I$ is convex it follows that~$D_f = (0, d_f)$ for some~$d_f \in \mathbb{R} \cap \left\{ \infty \right\}$. Both observations imply that the normalized tangent vector~$\vec{v}(\myx, \myy)$ of the iso-util at point~$(\myx, \myy)$ exists almost everywhere and is orthogonal to the gradient~$\nabla U (\myx, \myy)$. We thus have
\begin{align}\label{eq:tangent_of_isoutil}
\vec{v}(\myx, \myy) = \frac{1}{|\nabla U(\myx, \myy)|} \left( \partial_\myy U(\myx, \myy) ,  - \partial_\myx U (\myx, \myy)\right)^\top.
\end{align}
By~\eqref{e_sign_derivatives_U} it holds~$\partial_\myy U>0$ and hence~$\vec{v}$ can be written as
\begin{align}
\vec{v}(\myx, \myy) = \frac{1}{\left|\left( 1 ,  -\frac{\partial_\myx U}{\partial_\myy U} \right)^\top\right|} \left( 1 ,  -\frac{\partial_\myx U}{\partial_\myy U} \right)^\top.
\end{align}
Moreover, we also obtain by~\eqref{e_sign_derivatives_U} that
\begin{align}\label{eq:sign_derivative_raw}
- \infty<  -\frac{\partial_\myx U}{\partial_\myy U} <0,
\end{align}
which implies that the tangent vector~$\vec{v} (\myx, \myy)$ always points to the lower right quadrant. Therefore, for any~$\myx_0 \in D_f$ the iso-util~$I$ contains exactly one element~$(\myx_0, \myy) \in I$ with~$\myx$-coordinate given by~$\myx_0$. This is enough to show that the iso-util~$I$ can be written as the graph of a function~$f$. 
    
Let us now verify the claimed properties of the function~$f$. First, it must hold that~$\lim_{x\to d_x} f(x) =0$, otherwise we would have a contradiction to the fact that the utility function~$U$ is unbounded from above. Next, we observe that the epigraph of the function~$f$ is given by the super-level set~$\left\{(x,y) \ | \ U(x,y) \geq \log T \right\}$, which is convex by the quasi concavity of the utility function~$U$. Therefore, the function~$f$ is convex; and this implies existence and monotonicity of~$f_{-}'$,~$f_{+}'$ everywhere; as well as the existence and monotonicity of~$f'$ a.e. The desired identity~\eqref{eq:derivative_of_function_representation} follows from~\eqref{eq:tangent_of_isoutil} and the desired inequality~\eqref{eq:ordering_left_right_hand_derivative1} follows from a combination of~\eqref{eq:derivative_of_function_representation} and~\eqref{eq:sign_derivative_raw}; as well as from the convexity of the function~$f$. In addition, the outer estimates in~\eqref{eq:ordering_left_right_hand_derivative2} follow from~\eqref{eq:ordering_left_right_hand_derivative1} and the observation that the function~$f$ is differentiable a.e; the inner inequality in~\eqref{eq:ordering_left_right_hand_derivative2} follows again directly from the convexity of the function~$f$. Finally, it  remains to show that~$D_f = (0, \infty)$ if the utility function is unbounded from below. To achieve this it suffices to show that for any~$x >0$ it holds~$x \in D_f$. Since the utility function~$U$ is unbounded from below, i.e., $\lim_{y \to 0} U(x,y) = - \infty$, unbounded from above, i.e., $\lim_{y \to \infty} U(x,y) = \infty$, and continuous it follows that there exists a~$y > 0$ such that~$U(x,y)= e^T$. But this implies~$x \in D_f$.  
\end{proof}

\begin{remark}
The same statement as in Proposition~\ref{p_iso_util_function} is true with the roles of~$\myx$ and~$\myy$ interchanged. Specifically, the iso-util~$I$ can also be written as the graph of a function~$g$ via $I = \left\{ (g(\myy), \myy) \ | \ \myy \in D_g \right\}$.
\end{remark}

\begin{remark}\label{rem:charcaterization_function_representation}
Let~$d_f \in \mathbb{R}_+ \cup \left\{ \infty \right\}$. We observe that as soon as a function~$f: (0, d_f) \to [0, \infty]$ is convex and satisfies~$\lim_{x\to d_f} f(x) =0$, then the graph~$I:=\left\{(x, f(x)) \ | \ x \in (0, d_f) \right\}$ defines an iso-util of some utility function~$U$.
\end{remark}

We distinguish between two parts of the iso-util, the ask part and the bid part. The reason for this becomes apparent in Section~\ref{s_isoutil_LOB} below.

\begin{definition}\label{def:bid_ask_part_isoutil}
Let us consider a non-empty iso-util
\begin{align}
I = \left\{ (\myx, \myy) \ | \  U (\myx, \myy) = \log T \right\} \neq \emptyset.
\end{align}
Assume that the present supply level of the market is given by~$(x_0,y_0)$. Then the set
\begin{align}
I_a : =  \left\{ (\myx, \myy) \in I \  \ | \  x \leq x_0 \right\}
\end{align}
is called the bid part of the iso-util~$I$, and the set
\begin{align}
I_b : =  \left\{ (\myx, \myy) \in I \  \ | \  x \geq x_0 \right\}
\end{align}
is called the ask part of the iso-util~$I$. 
\end{definition}

For an illustration we refer to Figure~\ref{fig:LOB_isoutil_settled}. Therein, the bid part of the iso-util is colored in blue and the ask part is colored in red. 

\begin{example}[Iso-utils of an explicit market]
As described in~\cite{BichuchFeinstein:22}, Uniswap v2 protocol iso-utils are given by the equations~$\myx \cdot \myy = T$ for all~$T \in \mathbb{R}_+$. They are illustrated in Figure~\ref{fig:ideal_market_iso_utils}.
\end{example}

\begin{example}[Iso-util of a limit order book market]
A limit order book can be understood as defining an iso-util of the underlying utility function and vice versa. For more details, we refer to Section~\ref{s_isoutil_LOB}.
\end{example}

\begin{remark}[Iso-utils and automated market makers]~\label{r_trading_automated_market_makers}
As mentioned above, an example of an explicit market are automated market maker liquidity pools. These are protocols that facilitate trading on blockchains. Simplifying, they work the following way. The protocol provides two pools: The first one is filled with~$\myx_0$ units of the asset~$\myX$ and the second with~$\myy_0$ units of the asset~$\myY$. Hence, the current supply level of the market is given by~$(\myx_0, \myy_0)$. When a trader wants to exchange~$\myx$ units of~$\myX$ into the asset~$\myY$ via the protocol, the protocol will take the~$\myx$ units of the trader and add them to the first urn. Then, it determines via a formula how many units~$\myy$ of~$\myY$ the trader receives out of the second pool. After the exchange there are~$\myx_0+ \myx$ many units of~$\myX$ in the first pool, and~$\myy_0 -\myy$ many units of~$\myY$ in the second pool. Therefore, the new supply level of this market is given by~$(x_0 + x, y_0 - y)$. How many units~$\myy$ the trader receives from the protocol is usually calculated via the help of an iso-util, which guarantees that after the trade the overall utility does not decrease. As described in~\cite{BichuchFeinstein:22}, this is achieved via the formula
\begin{align*}
\myy = \max \left\{ \myy \ | \ U(\myx_0 + \myx, \myy_0 - \myy) \geq  U(\myx_0, \myy_0)  \right\}.
\end{align*}
If the utility function is sufficiently regular, the maximum~$\myy$ is obtained on an iso-util, i.e., $U(\myx_0 + \myx, \myy_0 - \myy) =  U(\myx_0, \myy_0)$, or, equivalently,  $\myy = f(\myx_0) - f(\myx_0 + \myx)$ in terms of the iso-util's function representation $f$ from Proposition~\ref{p_iso_util_function}.
\end{remark}

Let us now explain how iso-utils determine prices in a market.

\begin{definition}[Realized price, marginal price, and price impact of asset~$\myY$]\label{d_prices}

Assume that in a market~$\Delta \myx>0$ units of~$\myX$ got exchanged into~$\Delta \myy>0$ units of~$\myY$. Then the realized price~$ P_{\frac{\myx}{\myy}}$ of one unit of~$\myY$ in terms of units of~$\myX$ in this trade is given by
\begin{align}
P_{\frac{\myx}{\myy}}(\Delta \myx ) = \frac{\Delta \myx}{ \Delta \myy}.
\end{align}
The realized price~$P_{\frac{\myy}{\myx}}$ in terms of units of~$\myY$ in this trade is given by
\begin{align}
P_{\frac{\myy}{\myx}}(\Delta \myx ) = P_{\frac{\myx}{\myy}}(\Delta \myx )^{-1} = \frac{\Delta \myy}{ \Delta \myx}.
\end{align}
If this trade caused an iso-util supply change, i.e., the supply level~$(\myx, \myy)$ before the trade and the supply level~$(\myx + \Delta \myx, \myy - \Delta \myy )$ after the trade are on the same iso-util~$I=\{U = \log T\}$, then~$P_{\frac{\myy}{\myx}}(\Delta \myx )$ is given by
\begin{align}
P_{\frac{\myy}{\myx}}(\Delta \myx ) =  \frac{f(x) - f(x + \Delta x)}{\Delta x}
\end{align}
where~$f$ denotes the function representation of the iso-util~$I$ (see Proposition~\ref{p_iso_util_function}). Sending the trading amount~$\Delta x \downarrow 0$ yields the marginal price. More precisely, we call
\begin{align}
P_{\frac{\myy}{\myx}} : = \lim_{\Delta \myx \downarrow 0 } P_{\frac{\myy}{\myx}}(\Delta \myx ) =  - f_{+}'(\myx) 
\end{align}
and
\begin{align}
P_{\frac{\myx}{\myy}} : = - \frac{1}{ f_{+}'(\myx)}
\end{align}
the marginal prices of~$\myY$ at supply level~$(\myx, \myy)$. Here,~$f_{+}'(x)$ denotes the right derivative of the function representation~$f$, which exists everywhere. The difference between the realized price and the marginal price is called price impact.     
\end{definition}

In economics, marginal prices are known by the name marginal rate of substitution.

\begin{remark}[Interpretation of the marginal prices~$P_{\frac{\myx}{\myy}}$ and $P_{\frac{\myy}{\myx}}$]
Consider a market at supply level~$(\myx, \myy)$. Then:
\begin{itemize}
\item $P_{\frac{\myx}{\myy}}$ expresses how many units of~$\myX$ one has to pay for one unit of~$\myY$ if the trade would be infinitesimal small.
\item $P_{\frac{\myy}{\myx}}$ expresses how many units of~$\myY$ one gets for one unit of~$\myX$ if the trade would be infinitesimal small.
\end{itemize}
\end{remark}

\begin{remark}[Realized price, marginal price, and price impact of asset~$\myX$]
Similar to Definition~\ref{d_prices} one can define the realized price, marginal price, and price impact of the asset~$\myX$. Then, the marginal price of the asset~$\myX$ is given by 
\begin{align}
\hat P_{\frac{y}{x}} = - f_{-}'(x) \quad \mbox{and} \quad 
\hat P_{\frac{\myx}{\myy}}  = - \frac{1}{ f_{-}'(\myx)}.
\end{align}
Here,~$f_{-}'$ denotes the left derivative of~$f$. The marginal price of~$X$ has following interpretation:
\begin{itemize}
\item $\hat P_{\frac{\myx}{\myy}}$ expresses how many units of~$\myX$ one receives for selling one unit of~$\myY$ if the trade would be infinitesimal small.
\item $\hat P_{\frac{\myy}{\myx}}$ expresses how many units of~$\myY$ one needs to sell in order to receive one unit of~$\myX$ if the trade would be infinitesimal small.
\end{itemize}
\end{remark}

\begin{remark}[Choice of num\'eraire]\label{rem:numeraire}
When quoting prices one needs to choose a num\'eraire, i.e., the standard by which value is measured. In the US stock market the num\'eraire is the US dollar, as prices of shares of stocks are expressed as an equivalent amount of US dollars. Note that the difference between~$P_{\frac{x}{y}}$ and~$P_{\frac{y}{x}}$ is the choice of the num\'eraire. For $P_{\frac{x}{y}}$, the num\'eraire is the asset~$X$ (which we usually think of as representing US dollar), and for~$P_{\frac{y}{x}}$ the num\'eraire is the asset~$Y$. 
\end{remark}

\begin{example}
Figure~\ref{fig:iso_util_prices} illustrates marginal and realized prices for a smooth iso-util of an ideal market (cf.~Definition~\ref{def:ideal_market}) of the form $x \cdot y = 37500$. Note that the marginal price of $Y$ in terms of units of $\myX$ is given by the inverse of the absolute value of the slope of the tangent line (green) through the current supply level~$(x,y)$, i.e., $P_{\frac{\myx}{\myy}} = - \frac{1}{ f_{+}'(\myx)} = \frac{75}{500} = 0.15$ with $f(x) = \frac{37500}{x}$. In contrast, the realized price for exchanging $\Delta x >0$ units of $X$ into $\Delta y > 0$ units of $Y$ is given by the inverse of the absolute value of the slope of the secant line (red) through the supply levels~$(x,y)$ and~$(x+\Delta x,y - \Delta y)$, that is, $P_{\frac{\myx}{\myy}}(\Delta \myx ) = \frac{\Delta \myx}{ \Delta \myy} = \frac{75}{250} = 0.3$.
\end{example}

\begin{figure}
\centering
\includegraphics[scale=0.65]{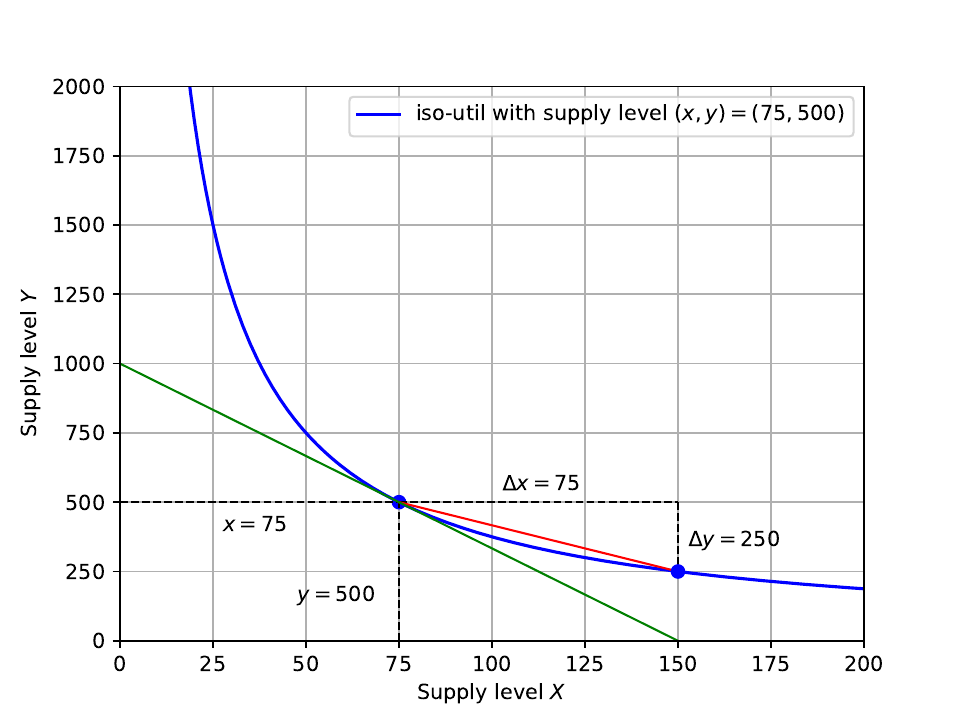}
\caption{Illustration of the realized price $P_{\frac{\myx}{\myy}}(\Delta \myx ) = \frac{\Delta \myx}{ \Delta \myy} = \frac{75}{250} = 0.3$ of one unit of $Y$ in terms of units of $X$ (inverse of the absolute value of the slope of the red secant line) and the corresponding marginal price $P_{\frac{\myx}{\myy}} = - \frac{1}{ f_{+}'(\myx)} = \frac{75}{500} = 0.15$ (inverse of the absolute value of the slope of the green tangent line) of an iso-util supply change in an ideal market with iso-util $x \cdot y = 37500$.}
\label{fig:iso_util_prices}
\end{figure}

\begin{example}
Figure~\ref{fig:LOB_isoutil_settled} below shows an iso-util of a limit order book market. We observe that the iso-util is pieceise linear and has kinks. At the supply levels corresponding to the kinks, the marginal prices of~$X$ and~$Y$ do not coincide. More precisely, at the current supply level~$(x,y)$ the best ask price of the limit order book corresponds to~$P_{\frac{x}{y}}$ which is equal to the inverse of the absolute value of the right derivative of~$f$ at the current supply level. The best bid price of the limit order book corresponds to~$\hat P _{\frac{x}{y}}$ and is equal to the inverse of the absolute value of the left derivative of~$f$. The difference ~$|P_{\frac{\myx}{\myy}} - \hat P_{\frac{\myx}{\myy}}|$ of both prices represents the current bid-ask spread. For more details we refer to Section~\ref{s_isoutil_LOB}.
\end{example}

The next proposition shows how marginal prices can be directly computed via the utility function~$U$.

\begin{proposition}\label{p_gradient_characterization_prices}
We consider a market~$\mathcal{M}$ with utility function~$U$. Let~
\begin{align}
\nabla_{\min} U = (\partial_{x,\min} U, \partial_{y,\min} U)^\top \quad \mbox{and} \quad \nabla_{\max} U = (\partial_{x, \max} U, \partial_{y, \max} U)^\top        
\end{align}
denote the minimal and maximal element of the sub-gradient~$\partial U$. Then the marginal price of the asset~$X$ at supply level~$(x,y)$ is given by  \begin{align}\label{e_marginal_price_of_X_via_utility_function}
\hat P_{\frac{x}{y}} = \frac{\partial_{y,\min} U(x,y)}{\partial_{x,\min} U (x,y)}
\end{align}
and the marginal price of the asset~$Y$ at supply level~$(x,y)$ is given by
\begin{align}\label{e_marginal_price_of_Y_via_utility_function}
P_{\frac{x}{y}} = \frac{\partial_{y,\max} U (x,y)}{\partial_{x,\max} U (x,y)}.
\end{align}
If~$U$ is differentiable at~$(x,y)$ then the marginal prices of~$Y$ and $X$ coincide and are given by 
\begin{align} \label{e_prices_via_utility_function}
P_{\frac{\myx}{\myy}} = \hat P_{\frac{\myx}{\myy}} = \frac{\partial_{\myy} U (x,y)}{\partial_{\myx} U(x,y)}.
\end{align}
\end{proposition}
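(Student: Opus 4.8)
The plan is to reduce everything to the one-sided derivatives $f_{-}'(\myx)$ and $f_{+}'(\myx)$ of the function representation $f$ from Proposition~\ref{p_iso_util_function}, for which the marginal prices are already known, and then to identify these one-sided slopes with the extreme elements of the subgradient $\partial U$ through the geometry of the superlevel set. First I would recall from Definition~\ref{d_prices} and the subsequent remark that, writing the iso-util through $(\myx,\myy)$ as the graph of its convex function representation $f$, the marginal prices are
\begin{align}
P_{\frac{\myx}{\myy}} = -\frac{1}{f_{+}'(\myx)} \qquad \text{and} \qquad \hat P_{\frac{\myx}{\myy}} = -\frac{1}{f_{-}'(\myx)}.
\end{align}
By~\eqref{eq:ordering_left_right_hand_derivative1} both one-sided derivatives lie in $(-\infty,0)$ with $f_{-}'(\myx) \leq f_{+}'(\myx)$, so these prices are finite and satisfy $\hat P_{\frac{\myx}{\myy}} \leq P_{\frac{\myx}{\myy}}$ (bid below ask). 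It therefore suffices to express $f_{-}'(\myx)$ and $f_{+}'(\myx)$ through the subgradient of~$U$.

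Next I would use that, by the proof of Proposition~\ref{p_iso_util_function}, the superlevel set $\{U \geq \log T\}$ is exactly the epigraph of $f$ and is convex by quasi concavity. At the boundary point $(\myx,\myy) = (\myx, f(\myx))$ the supporting lines of this convex set have precisely the slopes $m \in [f_{-}'(\myx), f_{+}'(\myx)]$, i.e.\ the subdifferential of the convex function~$f$. The subgradient $\partial U$ at $(\myx,\myy)$ consists of the corresponding upward normals to these supporting lines; since $\partial_\myy U > 0$ by~\eqref{e_sign_derivatives_U}, every element of $\partial U$ can be written as
\begin{align}
(\partial_\myx U, \partial_\myy U) = \lambda\,(-m, 1), \qquad m \in [f_{-}'(\myx), f_{+}'(\myx)], \quad \lambda > 0,
\end{align}
so that its slope equals $\partial_\myy U / \partial_\myx U = -1/m$. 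This is the step I expect to be the main obstacle: for the merely quasi-concave $U$ one must argue that the appropriate generalized gradient is exactly this family of normals, and fix the sign and scaling conventions consistently with~\eqref{e_sign_derivatives_U} (for $m<0$ and $\lambda>0$ one indeed gets $\partial_\myx U = -m\lambda > 0$), checking that it reduces to~\eqref{eq:derivative_of_function_representation} wherever $U$ is differentiable.

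Finally, since $f_{-}'(\myx) \leq f_{+}'(\myx) < 0$, the map $m \mapsto -1/m$ is increasing on $[f_{-}'(\myx), f_{+}'(\myx)]$; hence the slope $\partial_\myy U/\partial_\myx U$ attains its minimum over $\partial U$ at $m = f_{-}'(\myx)$ and its maximum at $m = f_{+}'(\myx)$. Identifying $\nabla_{\min} U$ with the minimal-slope element of $\partial U$ and $\nabla_{\max} U$ with the maximal-slope element yields
\begin{align}
\frac{\partial_{y,\min} U}{\partial_{x,\min} U} = -\frac{1}{f_{-}'(\myx)} = \hat P_{\frac{\myx}{\myy}}, \qquad \frac{\partial_{y,\max} U}{\partial_{x,\max} U} = -\frac{1}{f_{+}'(\myx)} = P_{\frac{\myx}{\myy}},
\end{align}
which are~\eqref{e_marginal_price_of_X_via_utility_function} and~\eqref{e_marginal_price_of_Y_via_utility_function}. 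For the differentiable case I would invoke Lemma~\ref{p_regularity_U}: where $U$ is differentiable the subdifferential collapses to the single gradient, so $f_{-}'(\myx) = f'(\myx) = f_{+}'(\myx)$, and~\eqref{eq:derivative_of_function_representation} gives $-1/f'(\myx) = \partial_\myy U / \partial_\myx U$, which is precisely~\eqref{e_prices_via_utility_function}.
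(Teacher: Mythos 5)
Your proof is correct and follows essentially the same route as the paper: both reduce the marginal prices to the one-sided derivatives $f_{-}'$, $f_{+}'$ of the function representation via Definition~\ref{d_prices} and then identify these with the extreme elements of the subgradient $\partial U$, treating the differentiable case by collapse of the subgradient to the gradient. The only difference is that you supply the supporting-line/normal-cone justification for the identity $f_{+}' = -\partial_{x,\max}U/\partial_{y,\max}U$ (and its analogue for $f_{-}'$), which the paper simply asserts from convexity of $f$.
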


\begin{remark} [Interpretation of formula~\eqref{e_prices_via_utility_function}] 
We observe that by~\eqref{e_prices_via_utility_function} a small~$\partial_\myx U$ and a large~$\partial_\myy U$ results in a high price. The reason is that more supply of~$\myX$ only yields a small gain in utility, whereas more supply of~$\myY$ yields a large gain of utility. Hence, traders exchange excess of~$\myX$ directly in~$\myY$ resulting in a higher price for~$\myY$. 
\end{remark}
    
\begin{proof}[Proof of Proposition~\ref{p_gradient_characterization_prices}]
Let us focus on deducing the formula in~\eqref{e_marginal_price_of_Y_via_utility_function}. The desired forumla in~\eqref{e_marginal_price_of_X_via_utility_function} follows from a similar argument.  Let the function~$f$ parameterise the iso-util of the market~$\mathcal{M}$ with supply level~$(\myx, \myy)$. Then by Definition~\ref{d_prices} the price~$P_{\frac{\myy}{\myx}}$ is given by
\begin{align}
P_{\frac{\myy}{\myx}}= - f_{+}'(\myx),
\end{align}
where~$f_{+}$ denotes the right derivative of~$f$. Because the function~$f$ is convex it holds that 
\begin{align}
f_{+}' = -\frac{\partial_{x,\max} U}{\partial_{y,\max} U}.
\end{align}
Using the formula~$P_{\frac{\myy}{\myx}} = \left( P_{\frac{\myy}{\myx}}\right)^{-1}$ yields the desired result. Finally, suppose that~$U$ is differentiable. Then the sub-gradient consists of exactly one element, namely the gradient, and the desired formula in~\eqref{e_prices_via_utility_function} readily follows.
\end{proof}    

\begin{corollary}[Monotonicity of prices]\label{p_monotonicity_of_prices}
We consider a market~$\mathcal{M}$ with utility function~$U$. If the utility function~$U$ is concave, then the marginal price $P_{\frac{\myx}{\myy} }$ is increasing in~$\myx$ and decreasing in~$\myy$. If the the utility function~$U$ is strictly concave then the marginal price $P_{\frac{\myx}{\myy} }$ is strictly increasing in~$\myx$ and strictly decreasing in~$\myy$.
\end{corollary}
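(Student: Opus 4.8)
The plan is to reduce the statement to the differentiable formula for the marginal price and then to a pointwise Hessian inequality. By Proposition~\ref{p_gradient_characterization_prices}, at every supply level $(x,y)$ where $U$ is differentiable the two marginal prices coincide and
\[
P_{\frac{x}{y}}(x,y)=\frac{\partial_y U(x,y)}{\partial_x U(x,y)},
\]
and by Lemma~\ref{p_regularity_U} this holds for a.e.\ $(x,y)$ with $\partial_x U,\partial_y U\in(0,\infty)$. I would first establish the claim assuming $U\in C^2$, and then recover the general case by mollification, approximating $U$ by smooth concave functions and passing to the limit in the monotonicity inequalities (using that monotone limits preserve non-strict inequalities, and handling the non-smooth price via the minimal/maximal subgradient representation~\eqref{e_marginal_price_of_X_via_utility_function}--\eqref{e_marginal_price_of_Y_via_utility_function} together with the monotonicity of subdifferentials).

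In the $C^2$ case I would simply differentiate the ratio. Writing $U_x,U_y,U_{xx},U_{xy},U_{yy}$ for the partials,
\[
\partial_x P_{\frac{x}{y}}=\frac{U_{xy}U_x-U_{xx}U_y}{U_x^2},
\qquad
\partial_y P_{\frac{x}{y}}=\frac{U_{yy}U_x-U_{xy}U_y}{U_x^2}.
\]
Thus the statement is equivalent to the two pointwise inequalities $U_{xy}U_x-U_{xx}U_y\ge 0$ and $U_{yy}U_x-U_{xy}U_y\le 0$, with strict inequalities in the strictly concave case. Concavity of $U$ supplies exactly the tools one expects: $U_{xx}\le 0$ and $U_{yy}\le 0$ give the diagonal contributions $-U_{xx}U_y\ge 0$ and $U_{yy}U_x\le 0$ the correct sign (here I use $U_x,U_y>0$ from Lemma~\ref{p_regularity_U}), while the determinant condition $U_{xx}U_{yy}\ge U_{xy}^2$ controls the off-diagonal term. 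If $U_{xy}\ge 0$, both desired inequalities are immediate since every term then carries the right sign; note that the monotonicity in $y$ is just the second displayed identity, so it requires no separate argument.

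The main obstacle is the complementary case $U_{xy}<0$ (substitutable assets), where the off-diagonal term fights the diagonal one. Here the plan is to feed the Cauchy--Schwarz-type bound $|U_{xy}|\le\sqrt{|U_{xx}|\,|U_{yy}|}$ coming from concavity into $U_{xy}U_x-U_{xx}U_y=|U_{xx}|U_y-|U_{xy}|U_x$, which reduces the first inequality to $\sqrt{|U_{xx}|}\,U_y\ge\sqrt{|U_{yy}|}\,U_x$ and the second to the reverse comparison; these two sufficient conditions are complementary, so the determinant bound alone does not close the argument. To reconcile them I would bring in the quasi-concavity identity
\[
-U_x^2U_{yy}+2U_xU_yU_{xy}-U_y^2U_{xx}\ge 0,
\]
which is precisely the combination $U_y\,(U_{xy}U_x-U_{xx}U_y)+U_x\,(U_{xy}U_y-U_{yy}U_x)$ of the two target quantities, together with the global conditions of Definition~\ref{d_markets} (strict monotonicity and unboundedness from above) in order to constrain the admissible gradient ratio $U_y/U_x$ relative to the curvature ratio $|U_{yy}|/|U_{xx}|$. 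This reconciliation of the substitutable case is the delicate point and is where I expect the real work to lie; for the strict statement one then uses that strict concavity makes the determinant bound strict, upgrading the conclusions to strict monotonicity.
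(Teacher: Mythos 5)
Your computation of the two partial derivatives of $P_{\frac{x}{y}}=U_y/U_x$ is correct, and you have put your finger on exactly the right spot: everything closes when $U_{xy}\ge 0$, and the case $U_{xy}<0$ is where the trouble lies. The gap is that the reconciliation you defer to the bordered--Hessian inequality cannot succeed. That inequality is, as you note, precisely $U_y\,A-U_x\,B\ge 0$ with $A=U_{xy}U_x-U_{xx}U_y$ and $B=U_{yy}U_x-U_{xy}U_y$; it constrains only one positive linear combination of the two quantities whose individual signs you need, it holds automatically for every concave $U$ with positive gradient, and so it adds no information beyond concavity. In fact the statement is false as stated. Fix $A>0$ and a smooth, convex, increasing $\psi$ with $\psi(t)=t^2$ for $t\le A/8$ and $\psi'\le A/4$ everywhere, and set $U(x,y)=A(x+y)-\tfrac12\psi(x+2y)$. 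This $U$ is continuous, concave (hence quasi-concave), strictly increasing in each variable (since $U_x=A-\tfrac12\psi'\ge \tfrac78 A$ and $U_y=A-\psi'\ge\tfrac34 A$) and unbounded above, so it satisfies Definition~\ref{d_markets}. Yet on the open region $\{x+2y<A/8\}$ one computes $U_{xx}=-1$, $U_{xy}=-2$, $U_{yy}=-4$ and $U_{xy}U_x-U_{xx}U_y=-2\bigl(A-(x+2y)\bigr)+\bigl(A-2(x+2y)\bigr)=-A<0$, so $P_{\frac{x}{y}}$ is strictly \emph{decreasing} in $x$ there. No amount of work in your ``delicate point'' can overcome this.

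For comparison: the paper's own proof is the two-line observation that concavity makes $x\mapsto\partial_x U$ and $y\mapsto\partial_y U$ decreasing, followed by an appeal to the ratio formula~\eqref{e_prices_via_utility_function}. That argument tacitly also needs $\partial_y U$ non-decreasing in $x$ and $\partial_x U$ non-decreasing in $y$, i.e.\ $U_{xy}\ge 0$, and is therefore incomplete in exactly the way your case analysis exposes. The honest fix is to add the hypothesis $\partial_x\partial_y U\ge 0$ (complementarity), under which your first case --- and the paper's intended argument --- both go through; with $\partial_x\partial_y U>0$ together with strict concavity of the diagonal one gets the strict version. Your mollification step for non-smooth $U$ is reasonable but secondary; the substantive issue is the missing sign condition on the cross partial.
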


\begin{proof}[Proof of Corollary~\ref{p_monotonicity_of_prices}]
We observe that because the utility function~$U$ is concave it follows that for fixed~$\myy$  the function $\myx \mapsto \partial_\myx U (\myx, \myy)$ is decreasing as a function of~$\myx$, and similarly that $\myy \mapsto \partial_\myy U (\myx, \myy)$ is a decreasing as a function of~$\myy$. Hence, the formula~\eqref{e_prices_via_utility_function} yields the desired statement.  
\end{proof}

\begin{remark}
The statement of Corollary~\ref{p_monotonicity_of_prices} does not hold if the assumptions are relaxed to quasi concave utility functions~$U$. However, this is not very restrictive as in many situations quasi concavity can be strengthened to concavity (see also Remark~\ref{r_conditions_markets}).
\end{remark}


\section{Limit order book markets}\label{s_isoutil_LOB}

As discussed in Remark~\ref{rem:d_explicit_markets}, limit order book markets are semi-explicit. The utility function is not observable, but the current iso-util can be derived from a snapshot of the limit order book. It turns out that this correspondence is one-to-one. This section is partly inspired by~\cite{Young:20}, where the relation between automated market makers and limit order book markets is studied for smooth iso-utils. As limit order books are naturally discrete objects and give rise to piecewise linear iso-utils, we extend the analysis to the non-smooth case. This is important because the jumps of the slopes in non-smooth iso-utils contain important information, e.g., the current bid-ask spread. In our approach, we define the limit order book via supply and demand measures covering the smooth and non-smooth case simultaneously. For an extensive treatment of market microstructure and limit order books we refer to~\cite{Bouchaud:18}.

\subsection{Supply and demand measures and limit order books}

Supply and demand measures model the supply and demand of a certain asset for a given price range. We will use them to define limit order books. The advantage of this approach is that it can distinguish between settled and unsettled limit order books. In unsettled limit order books there is a price-disagreement in the sense of overlapping buy and sell limit orders. Unsettled limit order books arise often. For example, the opening and closing auctions at a stock exchange result in an unsettled limit order book. When aggregating markets (see Section~\ref{sec:aggregate_markets} below) unsettled limit order books also appear naturally. 

\begin{remark}
In this section, we fix a market~$\mathcal{M}$ of the asset pair~$(X,Y)$. We think of the asset~$X$ as the num\'eraire, e.g., US dollar (see Remark~\ref{rem:numeraire}).
\end{remark}

\begin{definition}[Supply and demand measures~$\mu_s$ and~$\mu_d$ and the limit order book~$\mathscr{L}$.]\label{def:supply_demand_measures} A positive Borel measure~$\mu_s$ on the space~$(0, \infty)$ is called supply measure if
\begin{align}\label{eq:d_supply_measure}
\lim_{p \to 0} \mu_s ((0,p]) =0.
\end{align}
A positive Borel measure~$\mu_d$ on the space~$(0, \infty)$ is called demand measure if
\begin{align}\label{eq:d_demand_measure}
\lim_{p \to \infty} \mu_d ((p, \infty)) =0.
\end{align}
The associated unsettled limit order book~$\mathscr{L}$ is given by the ordered pair~$\mathscr{L}= (\mu_d, \mu_s)$. We say that the limit order book is settled if there is a number~$m \in [0, \infty)$, called mid price, such that $\supp(\mu_d) \subset (0, m]$ and~$\supp(\mu_s) \subset [m, \infty)$. Here $\supp(\mu)$ denotes the support of the measure $\mu$. The best bid price is defined as $p_b:=\sup \supp \mu_d$ and the best ask price as~$p_a:=\inf \supp \mu_s$. In particular, the limit order book is settled if and only if $p_b \leq p_a$.
\end{definition}

The measures $\mu_s$ and $\mu_d$ are also sometimes referred to as volume profile or depth profile in the literature.

\begin{remark}
We point out that in Definition~\ref{def:supply_demand_measures} a settled (or cleared) limit order book includes the special case where there is no bid-ask spread, i.e., best-bid and best-ask price are allowed to coincide. This becomes relevant for the notion of iso-util clearing of a limit order book introduced in Section~\ref{subsec:clearing}, as well as when we associate a smooth iso-util to a limit order book as discussed in Section~\ref{subsec:equivalence}.   
\end{remark}

\begin{remark}[Interpretation of supply and demand measures~$\mu_s$ and~$\mu_d$]
For a Borel-measurable subset~$A \subset (0,\infty)$ the number~$\mu_d(A) \geq 0$ represents the total buy-side volume of asset~$Y$ at prices in the set~$A$. Similarly, the number~$\mu_s(A) \geq 0$ represents the total sell-side volume of asset~$Y$ at prices in the set~$A$. The interpretation of~\eqref{eq:d_supply_measure} is that as the price decreases to zero the supply of the asset~$Y$ vanishes. The interpretation of~\eqref{eq:d_demand_measure} is that as the price increases to infinity the demand for the asset~$Y$ vanishes.
\end{remark}

\begin{example}
We give a hypothetical example of a discrete settled and unsettled limit order book in Table~\ref{tab:discrete_LOBs}. The unsettled values (left) are chosen to be extreme on purpose to get a better graphical illustration. Let us focus on the settled limit order book (right). Each limit order consists of three components: the sign of the order (buy or sell order); quantity or volume of the asset; and the limit price. Limit buy orders appear on the bid side and limit sell orders appear on the ask side. The corresponding demand measure~$\mu_d$ of the limit order book is given by 
\begin{align}
\mu_d = 50 \delta_{10} + 30 \delta_{40} + 20 \delta_{80} + 10 \delta_{94} + 12 \delta_{100},
\end{align}
where~$\delta_x$ denotes the Dirac measure at the point~$x$. The supply measure $\mu_s$ is given by
\begin{align}
\mu_s = 12 \delta_{110} + 20 \delta_{140} + 30 \delta_{170} + 50 \delta_{250} + 50 \delta_{500}. 
\end{align}
The limit order book is settled as the support of the demand and supply measures satisfy
\begin{align}
\supp(\mu_d) \subset (0, 100] \qquad \mbox{and} \qquad \supp(\mu_s) \subset [110, \infty).
\end{align}
Therefore, the mid price~$m$ can be chosen as any number in the bid-ask spread~$m \in (100, 110)$ with best bid price at 100 and best ask price at 110; i.e., the mid price~$m$ is not unique. Often, the convention is to use the midpoint of the bid-ask spread, which in our case would correspond to choosing~$m=105$. 
\end{example}

\begin{table}[ht]
\centering
{\small
\begin{tabular}{cccc}
\toprule
\multicolumn{2}{c}{{\color{blue}Bid}} & \multicolumn{2}{c}{\color{red}Ask} \\
\midrule
Price & Quantity & Price & Quantity \\
\midrule
\$300 & 10 & \$50 & 10 \\
\$135 & 20 &  \$100 & 12 \\
\$110 & 19 &  \$105 & 14\\
\$100 & 12 & \$110 & 25 \\
\$94 & 10 & \$140 & 20 \\
\$80 & 20 & \$170 & 30 \\
\$40 & 30 & \$250 & 50 \\
\$10 & 50 & \$500 & 50 \\
\bottomrule
\end{tabular}
\hspace{1em}
\begin{tabular}{cccc}
\toprule
\multicolumn{2}{c}{{\color{blue}Bid}} & \multicolumn{2}{c}{\color{red}Ask} \\
\midrule
Price & Quantity & Price & Quantity \\
\midrule
\$100 & 12 & \$110 & 12 \\
\$94 & 10 & \$140 & 20 \\
\$80 & 20 & \$170 & 30 \\
\$40 & 30 & \$250 & 50 \\
\$10 & 50 & \$500 & 50 \\ \\ \\ \\
\bottomrule
\end{tabular}
}
\caption{Example of a settled limit order book (right) which results from the adiabatic clearing introduced in Proposition~\ref{prop:settling_a_market} below of the unsettled limit order book (left).}
\label{tab:discrete_LOBs}
\end{table}

\begin{example} \label{ex:LOB_BTCUSDT_measures}
Figure~\ref{fig:LOB_BTCUSDT_measures} illustrates the discrete supply measure $\mu_s$ (red) and the discrete demand measure $\mu_d$ (blue) of Binance's limit order book for the asset pair $(X,Y) = (\text{USDT},\text{BTC})$, which was observed on January 2, 2024 at 05:21:16 a.m. (UTC). Best bid price is $45,265.81$, best ask price is $45,265.82$, and the depicted price range is from $45,000.00$ to $45,500.00$.  
\end{example}

\begin{figure}[ht]
\centering
 \includegraphics[scale=0.65]{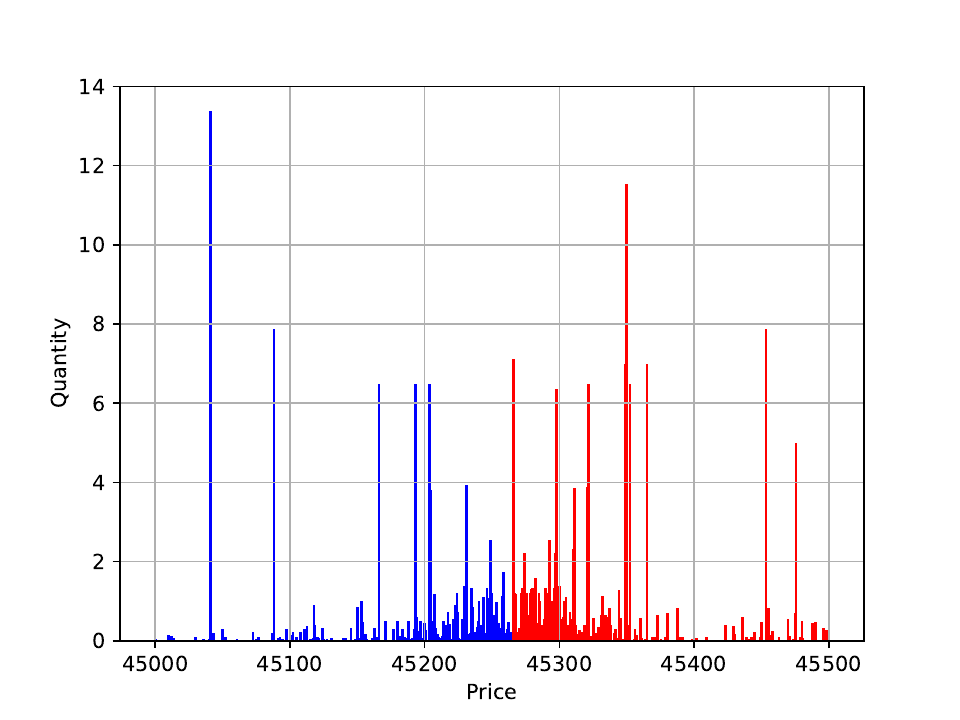} 
\caption{Illustration of the discrete supply measure $\mu_s$ (red) and discrete demand measure $\mu_d$ (blue) for the asset pair (USDT, BTC) traded on Binance's limit order book on January 2, 2024 at 05:21:16 AM (UTC); best bid price is 45,265.81 USDT and best ask price is 45,265.82 USDT.}
\label{fig:LOB_BTCUSDT_measures}
\end{figure}

Conditions~\eqref{eq:d_supply_measure} and~\eqref{eq:d_demand_measure} allow an alternative characterization of supply and demand measures via the \emph{Remaining Demand Function} (RDF) and the \emph{Remaining Supply Function} (RSF). 

\begin{definition}[Remaining demand and remaining supply functions]\label{def:remaining_demand_supple_function}
A function~$F_d: (0, \infty) \to (0, \infty) $ is called remaining demand function (RDF) if it satisfies the following conditions:
\begin{enumerate}
\item[(i)] $F_d$ is non-increasing,
\item[(ii)] $F_d$ is right-continuous,
\item[(iii)] $F_d$ vanishes at infinity, i.e., $\lim_{p \to \infty} F_d(p) =0$.
\end{enumerate}
A function~$F_s: (0, \infty) \to (0, \infty) $ is called remaining supply function (RSF) if it satisfies the following conditions:
\begin{enumerate}
\item[(i)] $F_s$ is non-decreasing,
\item[(ii)] $F_s$ is right-continuous,
\item[(iii)] $F_s$ vanishes at~$0$, i.e., $\lim_{p \to 0} F_s(p) =0$.
\end{enumerate}
We also use the notation $RDF(p) := F_d(p)$ and $RSF(p) := F_s(d)$. 
\end{definition}

In the literature, RSF and RDF are also referred to as bid and ask depth or liquidity curves. They play a similar role as the cumulative distribution function and the complementary cumulative distribution function (or tail distribution function) which are known to characterize Borel probability measures on the real line. 

\begin{proposition}[Characterization of supply measures via remaining supply functions]\label{prop:supply_measures_rsf}
Consider a Borel measure~$\mu_s$ on the space~$(0, \infty)$. Then~$\mu_s$ is a supply measure if and only if the function~$F_s:(0, \infty) \to (0, \infty) $ given by~$F_s(p) := \mu_s ((0,p]) $ is a remaining supply function in the sense of Definition~\ref{def:remaining_demand_supple_function}.
\end{proposition}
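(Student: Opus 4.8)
The plan is to read this as the supply-side analogue of the classical bijection between locally finite Borel measures on the line and their non-decreasing, right-continuous distribution functions, as the paper itself hints in comparing the RSF to a cumulative distribution function. The only genuinely new ingredient beyond the standard theory is matching the two ``vanishing near $0$'' normalizations built into the respective definitions. I would split the argument into the two implications of the equivalence. The reverse implication is essentially immediate: if $F_s$ is a remaining supply function in the sense of Definition~\ref{def:remaining_demand_supple_function}, then $\mu_s$ is a positive Borel measure by hypothesis, and the only condition left to verify for $\mu_s$ to be a supply measure is $\lim_{p\to 0}\mu_s((0,p]) = 0$. But this is literally the statement $\lim_{p\to 0} F_s(p) = 0$, which is one of the three defining properties of an RSF, so nothing further is needed.

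For the forward implication I would assume $\mu_s$ is a supply measure, set $F_s(p):=\mu_s((0,p])$, and verify the three RSF properties in order. Monotonicity is free: $p_1\le p_2$ gives $(0,p_1]\subseteq(0,p_2]$, so monotonicity of the measure yields $F_s(p_1)\le F_s(p_2)$, i.e.\ $F_s$ is non-decreasing. For right-continuity at any $p_0$, I would take a sequence $p_n\downarrow p_0$, note that $\bigcap_n (0,p_n]=(0,p_0]$ so the sets $(0,p_n]$ decrease to $(0,p_0]$, and apply continuity from above of the measure to conclude $F_s(p_n)\to F_s(p_0)$. The normalization $\lim_{p\to 0}F_s(p)=0$ is again exactly the defining condition of a supply measure, so it holds by assumption.

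The step requiring real care, and the main obstacle, is \emph{finiteness}: the codomain of an RSF is $(0,\infty)$, so I must show $F_s(p)<\infty$ for every $p$, and moreover continuity from above is only legitimate once the enclosing set has finite measure. I would handle both at once by writing $(0,p]=(0,\delta]\cup(\delta,p]$ for a small $\delta\in(0,p)$. The supply-measure condition $\lim_{p\to0}\mu_s((0,p])=0$ forces $\mu_s((0,\delta])$ to be finite (indeed arbitrarily small) once $\delta$ is small enough, while local finiteness of the Borel measure gives $\mu_s((\delta,p])\le\mu_s([\delta/2,p])<\infty$ since $[\delta/2,p]$ is a compact subset of $(0,\infty)$ containing $(\delta,p]$. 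Adding the two bounds yields $F_s(p)<\infty$ for all $p$, which simultaneously places $F_s$ in the correct codomain and justifies the continuity-from-above step invoked for right-continuity. I would flag explicitly that this is where the implicit convention that ``Borel measure'' means locally finite (Radon) is used; without it one could place infinite mass on a bounded set away from $0$, producing an object that satisfies the literal supply-measure condition yet has $F_s$ taking the value $+\infty$, so the local-finiteness assumption is exactly what closes the forward direction.
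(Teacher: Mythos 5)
Your proposal is correct and follows exactly the route the paper indicates: the paper leaves this proof as an exercise, saying only that one should reproduce the standard argument characterizing probability measures by their cumulative distribution functions, with the vanishing-near-zero condition playing the role of finite total mass, and that is precisely what you carry out (monotonicity of the measure for the non-decreasing property, continuity from above for right-continuity, and the supply-measure condition for the normalization at $0$). Your explicit decomposition $(0,p]=(0,\delta]\cup(\delta,p]$ to secure finiteness of $F_s(p)$, and your observation that this step silently requires the Borel measure to be locally finite, is a genuine point of care that the paper's one-line sketch glosses over and is worth recording.
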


\begin{proposition}[Characterization of demand measures via remaining demand functions]\label{prop:demand_measures_rdf}
Consider a Borel measure~$\mu_d$ on the space~$(0, \infty)$. Then~$\mu_d$ is a demand measure if and only if the function~$F_d:(0, \infty) \to (0, \infty) $ given by~$F_d(p) := \mu_d ((p, \infty)) $ is a remaining demand function in the sense of Definition~\ref{def:remaining_demand_supple_function}.
\end{proposition}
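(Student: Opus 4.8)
The plan is to mirror the proof of Proposition~\ref{prop:supply_measures_rsf}, transporting the classical Lebesgue--Stieltjes correspondence between Borel measures and monotone right-continuous functions to the \emph{tail} (complementary) formulation appropriate for demand. As in the supply case, the only structural difference from the probabilistic setting is that the total mass need not be finite; the role played there by normalization is taken over here by the vanishing condition~\eqref{eq:d_demand_measure}. I would prove the two implications separately.

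For the forward implication, suppose $\mu_d$ is a demand measure and set $F_d(p) := \mu_d((p,\infty))$. Monotonicity is immediate: if $0 < p_1 < p_2$ then $(p_2,\infty) \subset (p_1,\infty)$, so monotonicity of the measure gives $F_d(p_2) \leq F_d(p_1)$, i.e.\ $F_d$ is non-increasing. Right-continuity follows from continuity of measures along increasing sequences of sets: for any $p>0$ and $q_n \downarrow p$ one has $(q_n,\infty) \uparrow (p,\infty)$, whence $F_d(q_n) = \mu_d((q_n,\infty)) \uparrow \mu_d((p,\infty)) = F_d(p)$. Finally, $\lim_{p\to\infty} F_d(p) = 0$ is precisely the defining property~\eqref{eq:d_demand_measure} of a demand measure, so $F_d$ is a remaining demand function.

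For the converse, suppose $F_d$ is a remaining demand function and construct the measure by reversing the recipe above: for $0 < a < b$ declare
\begin{align}
    \mu_d((a,b]) := F_d(a) - F_d(b),
\end{align}
which is non-negative because $F_d$ is non-increasing. Equivalently, $\mu_d$ is the Lebesgue--Stieltjes measure associated to the non-decreasing right-continuous function $-F_d$, so right-continuity of $F_d$ guarantees that this set function is a premeasure on the semiring of half-open intervals $(a,b] \subset (0,\infty)$, and the Carathéodory extension theorem produces a unique Borel measure $\mu_d$ realizing these values -- this is the step that reproduces verbatim the standard construction of a measure from a cumulative-type distribution function. It then remains to check the demand condition~\eqref{eq:d_demand_measure}. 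Using continuity from below applied to $(p,b]\uparrow(p,\infty)$ as $b\to\infty$, together with $\lim_{b\to\infty}F_d(b)=0$, we recover
\begin{align}
    \mu_d((p,\infty)) = \lim_{b\to\infty}\bigl(F_d(p) - F_d(b)\bigr) = F_d(p),
\end{align}
and therefore $\lim_{p\to\infty}\mu_d((p,\infty)) = \lim_{p\to\infty}F_d(p) = 0$, so $\mu_d$ is a demand measure.

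The step I expect to require the most care is exactly the one where, in the probabilistic proof, one invokes finiteness of the total mass: here that is replaced by the monotone limit $\mu_d((p,\infty)) = \lim_{b\to\infty}(F_d(p)-F_d(b)) = F_d(p)$, which is legitimate only because $F_d$ vanishes at infinity. One should also note a harmless boundary subtlety -- if $\mu_d$ has bounded support or is the zero measure, then $F_d$ can vanish on a half-line, so the codomain is most naturally read as $[0,\infty)$ rather than $(0,\infty)$; this affects neither implication. Everything else is the routine translation of the cumulative-distribution-function argument, and as in Proposition~\ref{prop:supply_measures_rsf} the remaining verifications can be left as an exercise.
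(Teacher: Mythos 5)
Your proof is correct and follows exactly the route the paper intends: it reproduces the classical correspondence between Borel measures and monotone right-continuous functions (Lebesgue--Stieltjes/Carath\'eodory), with the vanishing-at-infinity condition~\eqref{eq:d_demand_measure} standing in for finiteness of total mass, which is precisely the adaptation the paper's proof sketch of Proposition~\ref{prop:supply_measures_rsf} describes and then leaves as an exercise. Your remark about the codomain being more naturally $[0,\infty)$ is a fair, harmless observation that the paper glosses over.
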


\begin{proof}[Proof of Proposition~\ref{prop:supply_measures_rsf} and~\ref{prop:demand_measures_rdf}]
The proof is straightforward. One reproduces the arguments that probability measures on the real line are characterized by cumulative distribution functions. The main difference is that one needs to use condition~\eqref{eq:d_supply_measure} instead of the fact that probability measures have finite overall mass. We leave the details as an exercise.
\end{proof}

\begin{remark}[Interpretation of RDF and RSF]
Note that the remaining demand function $F_d(p)$ represents the total volume of limit buy orders with a price strictly greater than $p$. The remaining supply function $F_s(p)$ represents the total volume of limit sell orders with a price less than or equal to $p$.
\end{remark}

Using Proposition~\ref{prop:supply_measures_rsf} and Proposition~\ref{prop:demand_measures_rdf}, we get the following  immediate characterization of settled limit order books.

\begin{corollary}\label{prop:settled_lob_via_rsf_rdf}
A limit order book~$\mathscr{L}$ is settled if and only if there is a number~$m \in (0, \infty)$ such that~$\lim_{m^+ \downarrow m} F_d(m^+)= \lim_{m^- \uparrow m} F_s(m^-) =0 $, where $F_d$ and $F_s$ denote the LOB's remaining demand and supply function, respectively. In this case the number~$m$ is a mid price of the settled limit order book. 
\end{corollary}

\begin{example}[Unsettled and settled limit order book]
In Figure~\ref{fig:RDF_RSF_smooth}, we give an illustrative example of smooth unsettled remaining demand and remaining supply functions, which overlap, as well as their settled counterparts after the adiabatic clearing introduced in Proposition~\ref{prop:settling_a_market} below. Similarly, in Figure~\ref{fig:RDF_RSF_discrete} we depict the piecewise linear unsettled and settled remaining demand and remaining supply functions, which correspond to the limit order books in Table~\ref{tab:discrete_LOBs}.
\end{example} 

\begin{figure}[ht]
\centering
\includegraphics[scale=0.6]{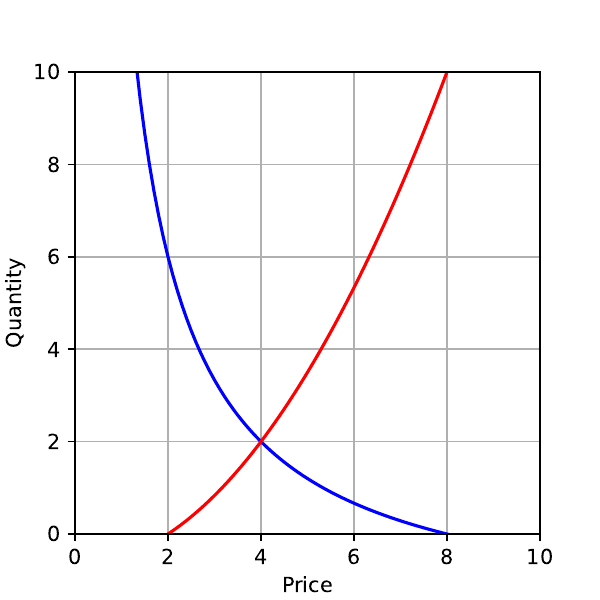}
\includegraphics[scale=0.6]{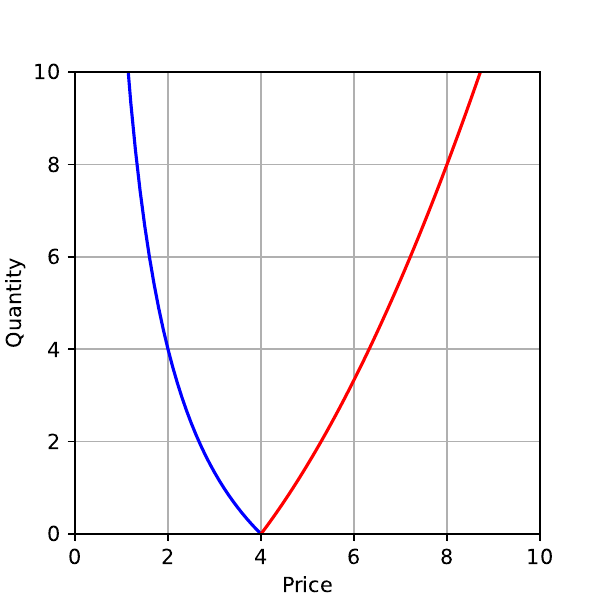}
\caption{Illustration of a smooth remaining demand function (blue) and remaining supply function (red) of a settled limit order book (right panel) which results from the adiabatic clearing from Proposition~\ref{prop:settling_a_market} of the unsettled limit order book (left panel).}
\label{fig:RDF_RSF_smooth}
\end{figure}

\begin{figure}[ht]
\centering
\includegraphics[scale=0.614]{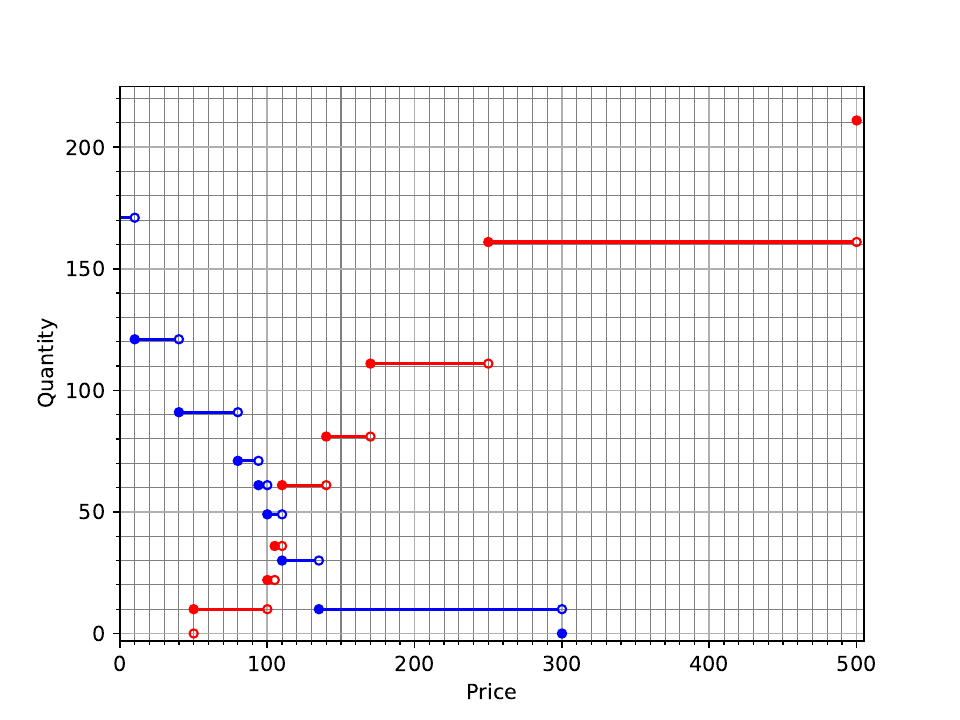}
{\scriptsize
\begin{tabular}{|c|c|c|c|c|c|c|c|c|c|c|c|c|c|c|c|}
\hline
price & 0 & 10 & 40 & 50 & 80 & 94 & 100 & 105 & 110 & 135 & 140 & 170 & 250 & 300 & 500 \\ \hline
$RDF$ & 171 & 121 & 91 & 91 & 71 & 61 & 49 & 49 & 30 & 10 & 10 & 10 & 10 & 0 & 0 \\
$RSF$ & 0 & 0 & 0 & 10 & 10 & 10 & 22 & 36 & 61 & 61 & 81 & 111 & 161 & 161 & 211 \\ \hline
\end{tabular}
}
\\
\includegraphics[scale=0.614]{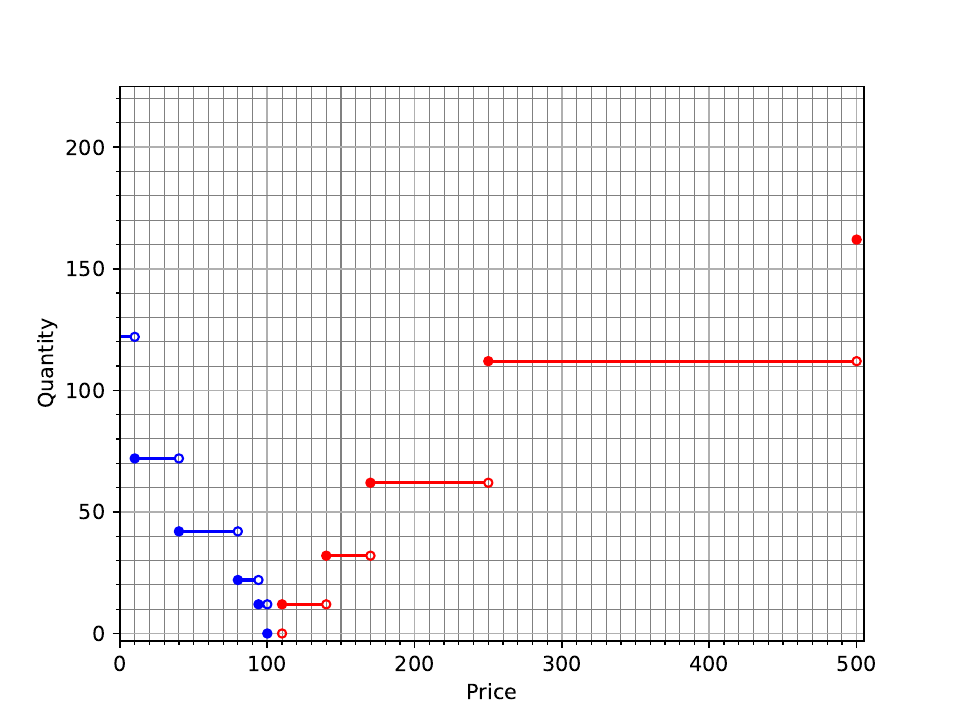}
{\scriptsize
\begin{tabular}{|c|c|c|c|c|c|c|c|c|c|c|c|c|c|c|c|}
\hline
price & 0 & 10 & 40 & 50 & 80 & 94 & 100 & 105 & 110 & 135 & 140 & 170 & 250 & 300 & 500 \\ \hline
$RDF$ & 122 & 72 & 42 & 42 & 22 & 12 & 0 & 0 & 0 & 0 & 0 & 0 & 0 & 0 & 0 \\
$RSF$ & 0 & 0 & 0 & 0 & 0 & 0 & 0 & 0 & 12 & 12 & 32 & 62 & 112 & 112 & 162 \\ \hline
\end{tabular}
}
\caption{Illustration of the piecewise linear remaining demand function (blue) and remaining supply function (red) of the settled limit order book (lower panel) from Table~\ref{tab:discrete_LOBs} (right) which results from the adiabatic clearing from Proposition~\ref{prop:settling_a_market} of the unsettled limit order book (upper panel) from Table~\ref{tab:discrete_LOBs} (left).}
\label{fig:RDF_RSF_discrete}
\end{figure}

\begin{remark}[Treatment of limit order books in~\cite{Young:20}]
Let us compare our approach to the analysis in~\cite{Young:20}. Therein, only settled limit order books are considered, which implies by Corollary~\ref{prop:settled_lob_via_rsf_rdf} that the remaining demand function~$F_d$ and remaining supply function~$F_s$ have disjoint support. As a consequence, the settled limit order book in~\cite{Young:20} is characterized by the function~$G = F_d+F_s$, i.e., a function~$G: (0, \infty) \to (0, \infty)$ that satisfies the following conditions:
\begin{enumerate}
\item[(i)] $G$ is right-continuous;
\item[(ii)] there exists a point~$m \in (0, \infty)$, namely the midpoint, such that
\begin{itemize}
\item $G$ is non-increasing on~$(0,m)$;
\item it holds~$G(m)=0$;
\item $G$ is non-decreasing~$(0,m)$.
\end{itemize}
\end{enumerate}
\end{remark}

\subsection{Adiabatic and iso-util clearing of a limit order book}
\label{subsec:clearing}

Clearing describes the process of transforming an unsettled limit order book $\mathscr{L}$ into a settled limit order book $\mathscr{L}_\sigma$. When the limit order book is unsettled there are two canonical ways for clearing it. We call the first mechanism {\bf adiabatic clearing} and the second mechanism {\bf iso-util clearing}. \\

First, we consider the adiabatic clearing process. In this mechanism, overlapping buy and sell limit orders are matched by an arbitrageur who cashes in the price difference as a risk-less profit, and the orders are removed from the limit order book. To make this more precise, let~$p>0$ and recall that the left-limit~$\lim_{\hat p \uparrow p} RDF (\hat p)$ denotes the number of units of~$Y$ that can be sold to limit buy orders with a price higher than or equal to~$p$ (note that $RDF$ is defined to be right-continuous). Similarly,~$RSF(p)$ denotes the number of units that can be bought from limit sell orders with a price lower than or equal to~$p$. We define the quantity
\begin{align} \label{def:auction_volume}
Z(p) := \min \left\{ \lim_{\hat p \uparrow p} RDF(\hat p), RSF(p)   \right\}.
\end{align}  
In a settled limit order book we have $Z(p)=0$ for all~$p>0$. In contrast, if the market is unsettled, some traders misprice the asset~$Y$ in the sense that some are willing to buy at higher prices than those at which others are willing to sell, and we have~$Z(p) >0$ for some~$p>0$. This means that $Z(p)$ denotes the total volume that can be offset by an arbitrageur for a risk-free profit. In this case, the arbitrageur will seek to maximize her profit. This corresponds to finding the clearing price
\begin{align} \label{def:auction_price}
p_e := \arg \max_p Z(p),
\end{align}
and then to trade all limit sell orders with a price lower than $p_e$ against all limit buy orders with a price higher than $p_e$. This is straightforward if the maximizer~$p_e$ is unique. The next lemma studies the case where the maximizer~$p_e$ is not unique. 

\begin{lemma}\label{prop:clearing_auxiliary_lemma}
We consider an unsettled limit order book~$\mathscr{L}$ with an unsettled remaining demand function~$RDF_u$ and an unsettled remaining supply function~$RSF_u$. Let  
\begin{align}
p_d := \inf \left\{ p >0 \ | \ RDF_u(p) \leq RSF_u(p) \right\} 
\end{align}
denote the smallest possible price such that supply overcomes demand and
\begin{align}
p_s := \sup \left\{ p >0 \ | \ RDF_u(p) \geq RSF_u(p) \right\} 
\end{align}
the highest possible price such that demand overcomes supply. \newline
    
Then $p_d \leq p_s$, and $p_d < p_s$ implies that 
\begin{align}\label{equ:p_d_neq_p_s}
RDF_u(p)= RSF_u(p) \quad \text{for all} \quad p_d < p < p_s.
\end{align} 
It also holds that
\begin{align}\label{equ:clearing_volume}
Z(p_s) = Z(p_d) = Z(p) \quad \text{for all } \quad p_d < p < p_s 
\end{align}
and
\begin{align}\label{equ:clearing_volume_2}
Z(p_s) = \max_p Z(p). 
\end{align}
We call $Z:= Z(p_s)$ the clearing volume of the asset~$Y$.
\end{lemma}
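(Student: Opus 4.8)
The plan is to encode all hypotheses into the single scalar function $D(p) := RDF_u(p) - RSF_u(p)$. Since $RDF_u$ is non-increasing and $RSF_u$ is non-decreasing, $D$ is non-increasing; and since both are right-continuous, so is $D$. In these terms the two thresholds read $p_d = \inf\{p>0 \mid D(p)\le 0\}$ and $p_s = \sup\{p>0 \mid D(p)\ge 0\}$. The inequality $p_d \le p_s$ then follows by contradiction: were there a $p$ with $p_s < p < p_d$, the bound $p > p_s$ would force $D(p) < 0$ while $p < p_d$ would force $D(p) > 0$.

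Assuming $p_d < p_s$, I would prove \eqref{equ:p_d_neq_p_s} by a squeezing argument. For any interior $p \in (p_d, p_s)$ the definitions of infimum and supremum provide a point of $\{D \le 0\}$ below $p$ and a point of $\{D \ge 0\}$ above $p$; monotonicity of $D$ then traps $D(p)$ between $0$ and $0$. Consequently $RDF_u \equiv RSF_u =: c$ on $(p_d, p_s)$, and since a non-increasing function here agrees with a non-decreasing one, both are in fact constant and equal to $c$ throughout $(p_d, p_s)$.

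With this I would compute the clearing volume. For interior $p$ both the left limit $\lim_{\hat p \uparrow p} RDF_u(\hat p)$ and $RSF_u(p)$ equal $c$, so $Z(p) = c$. At the endpoints I would invoke right-continuity: $RSF_u(p_d) = \lim_{\hat p \downarrow p_d} RSF_u(\hat p) = c$ together with $\lim_{\hat p \uparrow p_d} RDF_u(\hat p) \ge c$ yields $Z(p_d) = c$, and symmetrically $Z(p_s) = c$, which is \eqref{equ:clearing_volume}. For the maximality statement \eqref{equ:clearing_volume_2} I would bound $Z$ off the interval: for $p > p_s$ every $\hat p \in (p_s, p)$ satisfies $RDF_u(\hat p) \le c$, hence $Z(p) \le \lim_{\hat p \uparrow p} RDF_u(\hat p) \le c$; dually $Z(p) \le RSF_u(p) \le c$ for $p < p_d$. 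Combined with $Z \equiv c$ on $[p_d, p_s]$, this gives $\max_p Z(p) = c = Z(p_s)$.

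The step I expect to be most delicate is the degenerate case $p_d = p_s =: p_*$, where there is no interior interval on which to run the squeeze, compounded by the fact that $Z$ is built from the \emph{left} limit of $RDF_u$ rather than its value. Here I would use right-continuity of $D$ to obtain $D(p_*) \le 0$, while the left limit gives $D(p_*^-) \ge 0$; these two facts yield the lower bounds $Z(p_*) \ge RDF_u(p_*)$ and $Z(p_*) \ge \lim_{\hat p \uparrow p_*} RSF_u(\hat p)$. Bounding $Z(p) \le \lim_{\hat p \uparrow p} RDF_u(\hat p) \le RDF_u(p_*) \le Z(p_*)$ for $p > p_*$ and $Z(p) \le RSF_u(p) \le \lim_{\hat p \uparrow p_*} RSF_u(\hat p) \le Z(p_*)$ for $p < p_*$ then closes the maximality claim, while \eqref{equ:clearing_volume} holds trivially since $p_d = p_s$.
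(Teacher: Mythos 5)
Your proof is correct and follows essentially the same route as the paper's: the paper likewise reduces everything to monotonicity and right-continuity of $RDF_u$ and $RSF_u$ together with the equivalent characterizations $p_d = \sup\{RDF_u > RSF_u\}$ and $p_s = \inf\{RDF_u < RSF_u\}$, which is just your function $D$ in disguise. The only difference is one of completeness: the paper dismisses the remaining steps as ``straightforward'' and sets aside the case $p_d = p_s$, whereas you carry out the endpoint computations and the degenerate case for \eqref{equ:clearing_volume_2} explicitly, and your treatment of that case via $D(p_*) \leq 0$ and $\lim_{\hat p \uparrow p_*} D(\hat p) \geq 0$ is sound.
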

\begin{proof}[Proof of Lemma~\ref{prop:clearing_auxiliary_lemma}]
The statements~$p_d \leq p_s$ and~\eqref{equ:p_d_neq_p_s} follow directly from the observation that the function~$RDF$ is non-increasing, the function~$RSF$ is non-decreasing, and the definitions of~$p_d$ and~$p_s$ are equivalent to
\begin{align}
p_d = \sup \left\{ p >0 \ | \ RDF_u(p) > RSF_u(p)    \right\}
\end{align}
and
\begin{align}
p_s = \inf \left\{ p >0 \ | \ RDF_u(p) <  RSF_u(p)    \right\}. 
\end{align}
Let us now turn to the identity in~\eqref{equ:clearing_volume}. If~$p_d = p_s$ there is nothing to show. Hence, we assume~$p_d < p_s$. The desired statements in~\eqref{equ:clearing_volume} and~\eqref{equ:clearing_volume_2} then follow from a straightforward argument using right-continuity and monotonicity of $RDF_u$ and $RSF_u$, as well as the identity in~\eqref{equ:p_d_neq_p_s}.
\end{proof}

\begin{proposition} [Adiabatic clearing of a limit order book]\label{prop:settling_a_market}
We consider the situation of Lemma~\ref{prop:clearing_auxiliary_lemma} with an unsettled limit order book $\mathscr{L}$. Then
\begin{align}\label{e_adiabatic_RDF}
RDF_a(p) := 
\begin{cases}
\max \{ RDF_u(p) - Z ,0 \} & \text{for } p < p_d,\\
0 & \text{for } p \geq p_d,
\end{cases}
\end{align}
and
\begin{align}\label{e_adiabatic_RSF}
RSF_a(p) := 
\begin{cases}
0  & \text{for } p < p_s, \\
\max \{ RSF_u(p) - Z , 0 \} & \text{for } p \geq p_s, 
\end{cases}
\end{align}
define a remaining demand function and remaining supply function of the settled limit order book $\mathscr{L}_\sigma$. Moreover,~$p_d$ and~$p_s$ denote the best bid and best ask price, respectively.
\end{proposition}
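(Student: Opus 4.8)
The plan is to reduce the statement to the characterizations of remaining demand/supply functions. By Proposition~\ref{prop:demand_measures_rdf} and Proposition~\ref{prop:supply_measures_rsf}, once I verify that $RDF_a$ satisfies the three defining properties of a remaining demand function and $RSF_a$ those of a remaining supply function (Definition~\ref{def:remaining_demand_supple_function}), the associated measures $\mu_{d,a}$, $\mu_{s,a}$ exist and define the book $\mathscr{L}_\sigma=(\mu_{d,a},\mu_{s,a})$; it then only remains to check that $\mathscr{L}_\sigma$ is settled. Throughout I will use the inequality $p_d\le p_s$ from Lemma~\ref{prop:clearing_auxiliary_lemma} and treat the two pieces of each function separately, paying attention to the junction points $p_d$ and $p_s$ where the truncation $\max(\,\cdot\,,0)$ meets the constant~$0$.

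First I would verify that $RDF_a$ in~\eqref{e_adiabatic_RDF} is a remaining demand function. On $(0,p_d)$ it equals $\max(RDF_u-Z,0)$, which is non-increasing and right-continuous, since $RDF_u$ has these properties and both are preserved under subtracting the constant $Z$ and applying $\max(\,\cdot\,,0)$; on $[p_d,\infty)$ it is identically $0$. At $p_d$ the value drops to $0$, so the left limit $\lim_{p\uparrow p_d}RDF_a(p)\ge 0=RDF_a(p_d)$ holds automatically (no upward jump), which is exactly the non-increasing condition, and right-continuity holds because the right-hand branch equals $0=RDF_a(p_d)$. Vanishing at infinity is immediate. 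The verification for $RSF_a$ in~\eqref{e_adiabatic_RSF} is the mirror image around $p_s$: it equals the non-decreasing, right-continuous $\max(RSF_u-Z,0)$ on $[p_s,\infty)$ and $0$ on $(0,p_s)$, the value $\max(RSF_u(p_s)-Z,0)\ge 0$ produces no downward jump at $p_s$, right-continuity at $p_s$ follows from that of $RSF_u$, and vanishing at $0$ is clear.

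It then remains to establish settledness. By construction $RDF_a(p)=0$ for all $p\ge p_d$ and $RSF_a(p)=0$ for all $p<p_s$, so the associated measures satisfy $\supp\mu_{d,a}\subseteq(0,p_d]$ and $\supp\mu_{s,a}\subseteq[p_s,\infty)$. Combined with $p_d\le p_s$ this yields $p_b=\sup\supp\mu_{d,a}\le p_d\le p_s\le\inf\supp\mu_{s,a}=p_a$, so the best bid does not exceed the best ask, which is the settledness condition. In the generic case $p_d<p_s$ I can moreover exhibit an explicit mid-price: any $m\in[p_d,p_s)$ satisfies $RDF_a(m)=RSF_a(m)=0$, so $\mathscr{L}_\sigma$ is settled in the sense of Corollary~\ref{prop:settled_lob_via_rsf_rdf}.

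I expect the main obstacle to be the borderline case $p_d=p_s$, where the interval $[p_d,p_s)$ is empty. Here the cleared book can have $p_b=p_a$, i.e.\ the remaining bid and ask just touch: for instance when a single sell order is only partially consumed by the overlapping demand strictly above its price, leftover supply remains at exactly the best bid level. In this situation there need not exist a point $m$ with $RDF_a(m)=RSF_a(m)=0$, so settledness must be read in the non-strict sense $p_b\le p_a$ furnished by the support inclusions rather than via a strictly separating mid-price. Verifying those inclusions carefully is the one place where the definition of $Z$ via Lemma~\ref{prop:clearing_auxiliary_lemma} is genuinely needed: for $p<p_d$ one has $RDF_u(p)\ge\lim_{\hat p\uparrow p_d}RDF_u(\hat p)\ge Z$, so the subtraction in~\eqref{e_adiabatic_RDF} never over-subtracts and the truncation only activates at $p_d$, and symmetrically for~\eqref{e_adiabatic_RSF} at $p_s$. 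This confirms that the $Z$ removed units all lie above $p$ (respectively below $p$), which is what makes the piecewise formulas consistent with an actual matching of overlapping orders.
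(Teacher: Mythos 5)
Your proof is correct, and in fact it supplies an argument the paper does not: the paper explicitly leaves the proof of Proposition~\ref{prop:settling_a_market} as an exercise, so there is no in-text proof to compare against. Your route --- verify the three defining properties of Definition~\ref{def:remaining_demand_supple_function} piecewise, handle the junction points $p_d$ and $p_s$, and then read off settledness from the support inclusions $\supp\mu_{d,a}\subseteq(0,p_d]$ and $\supp\mu_{s,a}\subseteq[p_s,\infty)$ together with $p_d\le p_s$ --- is exactly the intended one, and your observation that $Z\le\lim_{\hat p\uparrow p_d}RDF_u(\hat p)\le RDF_u(p)$ for $p<p_d$ (so the truncation $\max(\cdot,0)$ is never active before $p_d$, and symmetrically for the supply side) is the one place where Lemma~\ref{prop:clearing_auxiliary_lemma} genuinely enters. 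Your discussion of the borderline case $p_d=p_s$ is also well taken: there the cleared book may only satisfy $p_b\le p_a$ without a strictly separating mid-price, which is consistent with the paper's stated criterion in terms of best bid and best ask but sits uneasily with the strict inclusions $\supp(\mu_d)\subset(0,m)$, $\supp(\mu_s)\subset(m,\infty)$ in the definition of a settled book (and with Corollary~\ref{prop:settled_lob_via_rsf_rdf}, which can fail to produce such an $m$ when remaining demand accumulates at $p_d$ from below while an atom of supply sits at $p_s=p_d$). That is a minor inconsistency in the paper's definitions rather than a gap in your argument; flagging it and adopting the non-strict reading $p_b\le p_a$ is the right call.
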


The best way to understand the formulas~\eqref{e_adiabatic_RDF} and~\eqref{e_adiabatic_RSF} is to observe that the quantity~$Z$ denotes the total volume of~$Y$ which was matched and offset by the arbitrageur's trading. As a consequence, this amount has to be subtracted from the limit order book. The proof of Proposition~\ref{prop:settling_a_market} is straightforward and omitted.

\begin{definition}
The procedure described in Proposition~\ref{prop:settling_a_market} is called adiabatic settlement or clearing, and we say that~$\mathscr{L}_\sigma$ is the adiabatic-settled limit order book of the unsettled limit order book~$\mathscr{L}$.
\end{definition}

\begin{example}\label{ex:clearing}
The settled limit order book in Table~\ref{tab:discrete_LOBs} (right) is the adiabatic clearing of the unsettled limit order book in Table~\ref{tab:discrete_LOBs} (left). The corresponding unsettled remaining demand and supply functions $RDF_u$ and $RSF_u$ as well as their adiabatic-settled counterparts $RDF_a$ and $RSF_a$ are depicted in Figure~\ref{fig:RDF_RSF_discrete}. The adiabatic clearing procedure is also described in Table~\ref{tab:example_adiabatic_isoutil} (left panel). Therein, the clearing volume is $Z=49$ and all crossed-out limit buy and sell orders are offset by the arbitrageur.
\end{example}

\begin{table}[ht]
\centering
{\small
\begin{tabular}{cccc}
\toprule
\multicolumn{2}{c}{{\color{blue}Bid}} & \multicolumn{2}{c}{\color{red}Ask} \\
\midrule
Price & Quantity & Price & Quantity \\
\midrule
\$300 & \textbf{\color{black}\cancel{10}} 0 & \$50 & \textbf{\color{black}\cancel{10}} 0 \\
\$135 & \textbf{\color{black}\cancel{20}} 0 &  \$100 & \textbf{\color{black}\cancel{12}} 0 \\
\$110 & \textbf{\color{black}\cancel{19}} 0 &  \$105 & \textbf{\color{black}\cancel{14}} 0 \\
\$100 & 12 & \$110 & \textbf{\color{black}\cancel{25}} 12 \\
\$94 & 10 & \$140 & 20 \\
\$80 & 20 & \$170 & 30 \\
\$40 & 30 & \$250 & 50 \\
\$10 & 50 & \$500 & 50 \\
\bottomrule
\end{tabular}
\hspace{1em}
\begin{tabular}{cccc}
\toprule
\multicolumn{2}{c}{{\color{blue}Bid}} & \multicolumn{2}{c}{\color{red}Ask} \\
\midrule
Price & Quantity & Price & Quantity \\
\midrule
\$110 & \textbf{13} & \$110 & 12 + \textbf{19} = 31 \\
\$105 & \textbf{14} & \$135 & \textbf{20} \\
\$100 & 12+\textbf{12} = 24 & \$140 & 20 \\
\$94 & 10 & \$170 & 30 \\
\$80 & 20 &  \$250 & 50 \\
\$50 & \textbf{10} & \$300 & \textbf{10} \\
\$40 & 30 & \$500 & 50 \\
\$10 & 50 & & \\
\bottomrule 
\end{tabular}
}
\caption{\emph{Left:} Adiabatic clearing procedure of the unsettled limit order book in Table~\ref{tab:discrete_LOBs} (left). Bold crossed-out orders are matched by the arbitrageur. \emph{Right:} Resulting limit order book from the iso-util clearing of the unsettled limit order book in Table~\ref{tab:discrete_LOBs} (left). Bold quantity values indicate the volume of limit orders that were flipped from one side of the order book to the other (cf.~also the bold crossed-out limit orders in the left table).} 
\label{tab:example_adiabatic_isoutil}
\end{table}

\begin{remark}[Comparison of adiabatic clearing to opening and closing auctions]
The adiabatic clearing mechanism is very similar to the procedure of an opening and closing auction on a stock exchange where an algorithm computes the auction price~$p_e$ in~\eqref{def:auction_price} in order to maximize the total clearing volume when overlapping buy and sell orders are matched. In contrast to the adiabatic settlement, all overlapping buy and sell orders are then executed at the same auction price $p_e$.   
\end{remark}

Next, we describe the iso-util clearing. This mechanism adds an additional step to the adiabatic clearing where the matched limit orders in fact \emph{re-appear} on the opposite side of the limit order book. 

\begin{proposition} [Iso-util clearing of a limit order book]\label{prop:iso_util_settling_a_market}
We consider an unsettled limit order book~$\mathscr{L}$ with an unsettled remaining demand function~$RDF_u$ and an unsettled remaining supply function~$RSF_u$ and its adiabatic-settled counterparts~$RDF_a$ and~~$RSF_a$.  Then 
\begin{align}
\text{RDF}_i(p) = 
\begin{cases}
\text{RDF}_a(p) +Z - \min \left\{ \text{RSF}_u(p) ,Z \right\}  & \text{if } p < p_d, \\
0  & \text{if } p \geq  p_d 
\end{cases} 
\end{align}
and
\begin{align}
\text{RSF}_i(p) = 
\begin{cases}
0  & \text{if } p <  p_s,\\ 
\text{RSF}_a(p) + Z -  \min \left\{ \text{RDF}_u(p) , Z \right\} & \text{if } p \geq p_s
\end{cases}
\end{align}
define a remaining demand function and remaining supply function of the settled limit order book~$\mathscr{L}_\sigma$. Moreover,~$p_d$ and~$p_s$ denote the best bid and best ask price, respectively.
\end{proposition}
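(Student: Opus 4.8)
The plan is to verify directly that $RDF_i$ and $RSF_i$ satisfy the three defining axioms of a remaining demand and a remaining supply function from Definition~\ref{def:remaining_demand_supple_function} (monotonicity, right-continuity, and the correct vanishing behaviour), and then to read off the best bid and best ask prices from where these functions pass between positive values and zero. Since the two assertions are completely symmetric under exchanging the roles of demand and supply together with the regions $\{p < p_d\}$ and $\{p \geq p_s\}$, I would carry out the argument in full only for $RDF_i$ and obtain the statement for $RSF_i$ by the mirror computation.

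The crucial preliminary observation is that on the region $p < p_d$ the maximum defining $RDF_a$ never truncates. Indeed, by Lemma~\ref{prop:clearing_auxiliary_lemma} we have
\begin{align}
Z = Z(p_s) = \min\{\textstyle\lim_{\hat p \uparrow p_s} RDF_u(\hat p),\, RSF_u(p_s)\} \leq \lim_{\hat p \uparrow p_s} RDF_u(\hat p),
\end{align}
and since $RDF_u$ is non-increasing with $p < p_d \leq p_s$, this yields $RDF_u(p) \geq \lim_{\hat p \uparrow p_s} RDF_u(\hat p) \geq Z$. Hence $RDF_a(p) = RDF_u(p) - Z$ for $p < p_d$ by Proposition~\ref{prop:settling_a_market}, and substituting this into the definition collapses the formula to the clean expression
\begin{align}
RDF_i(p) = RDF_u(p) - \min\{RSF_u(p),\, Z\} \qquad \text{for } p < p_d.
\end{align}
From here everything is immediate: $RDF_u$ is non-increasing and $p \mapsto \min\{RSF_u(p), Z\}$ is non-decreasing, so $RDF_i$ is non-increasing on $(0,p_d)$; it is right-continuous there as a difference of right-continuous functions; and it is strictly positive there because $\min\{RSF_u(p), Z\} \leq RSF_u(p) < RDF_u(p)$, the last inequality being the characterization $p_d = \sup\{p : RDF_u(p) > RSF_u(p)\}$ established in Lemma~\ref{prop:clearing_auxiliary_lemma} together with monotonicity of the difference $RDF_u - RSF_u$. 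Gluing with the constant value $0$ on $\{p \geq p_d\}$ preserves monotonicity (the only jump, at $p_d$, is downward from a positive value to $0$) and right-continuity (the right limit at $p_d$ equals $0$), and it forces $\lim_{p \to \infty} RDF_i(p) = 0$. This shows $RDF_i$ is a remaining demand function; the symmetric computation, using $RSF_u(p) \geq Z$ for $p \geq p_s$ to get $RSF_i(p) = RSF_u(p) - \min\{RDF_u(p), Z\}$ there, shows $RSF_i$ is a remaining supply function.

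For the price identification I would pass to the underlying measures via Propositions~\ref{prop:supply_measures_rsf} and~\ref{prop:demand_measures_rdf}. The analysis above shows $RDF_i(p) > 0$ for every $p < p_d$ and $RDF_i(p) = 0$ for $p \geq p_d$, so the demand measure $\mu_{d,i}$ of $\mathscr{L}_\sigma$ puts no mass in $(p_d, \infty)$ yet positive mass in $(p_d - \varepsilon, p_d]$ for every $\varepsilon > 0$; hence $\sup \supp(\mu_{d,i}) = p_d$, i.e.\ $p_d$ is the best bid price. Symmetrically $RSF_i$ vanishes on $(0, p_s)$ and is positive on $(p_s, \infty)$, giving $\inf \supp(\mu_{s,i}) = p_s$ as the best ask price; since $p_d \leq p_s$ by Lemma~\ref{prop:clearing_auxiliary_lemma}, the book $\mathscr{L}_\sigma$ is settled.

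The only genuinely non-routine step is the collapse of the defining formula through the inequality $RDF_u(p) \geq Z$ on $\{p < p_d\}$ (and its mirror $RSF_u(p) \geq Z$ on $\{p \geq p_s\}$); once this is in hand, all three function-space axioms reduce to the corresponding properties of $RDF_u$ and $RSF_u$, and the identification of the best bid and ask becomes a matter of locating where $RDF_i$ and $RSF_i$ switch between positive and zero.
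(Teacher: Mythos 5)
The paper itself gives no proof of this proposition (it is ``left as an exercise''), so there is no argument to compare against; your verification is the natural one and it is essentially correct. The key step --- showing that $RDF_u(p)\geq Z$ on $\{p<p_d\}$ and $RSF_u(p)\geq Z$ on $\{p\geq p_s\}$ so that the maxima in the adiabatic formulas never truncate and the definitions collapse to $RDF_i(p)=RDF_u(p)-\min\{RSF_u(p),Z\}$ and $RSF_i(p)=RSF_u(p)-\min\{RDF_u(p),Z\}$ --- is exactly the right observation, and from there the three axioms of Definition~\ref{def:remaining_demand_supple_function} and the identification of $p_d$ and $p_s$ as best bid and best ask via the supports of the associated measures all go through as you describe.

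One claim at the end is wrong, however: you conclude that ``since $p_d\leq p_s$ \ldots the book $\mathscr{L}_\sigma$ is settled.'' The proposition deliberately does not assert this, and the paper explicitly points out (in the example following the proposition, cf.\ Table~\ref{tab:limitorderbook_isoutil_clearing}) that iso-util clearing need not produce a settled book: when $p_d=p_s$ both $\mu_{d,i}$ and $\mu_{s,i}$ can carry an atom at that common price, and then no $m$ with $\supp(\mu_{d,i})\subset(0,m)$ and $\supp(\mu_{s,i})\subset(m,\infty)$ exists. Your own support computation shows only $\sup\supp(\mu_{d,i})=p_d\leq p_s=\inf\supp(\mu_{s,i})$, which is weaker than settledness under the paper's definition (and note that the flipped orders land exactly at the boundary prices, so the degenerate case $p_d=p_s$ is not exotic --- it occurs in the paper's running example at the price \$110). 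Delete that final sentence and the proof is complete as a proof of what the proposition actually states.
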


The proof is straightforward and omitted.

\begin{definition}
The procedure described in Proposition~\ref{prop:iso_util_settling_a_market} is called iso-util settlement or clearing, and we say that~$\mathscr{L}_\sigma$ is the iso-util-settled limit order book of the unsettled limit order book~$\mathscr{L}$.
\end{definition}

\begin{example}
The limit order book in Table~\ref{tab:example_adiabatic_isoutil} (right) is the iso-util clearing of the unsettled limit order book in Table~\ref{tab:discrete_LOBs} (left). Observe that all crossed-out limit buy and sell orders from the adiabatic clearing in Table~\ref{tab:example_adiabatic_isoutil} (left) just reappear on the opposite side of the book. The resulting remaining demand and supply functions $RDF_i$ and $RSF_i$ are plotted in Figure~\ref{fig:RDF_RSF_discrete_isoutil_clearing} (compare also with Figure~\ref{fig:RDF_RSF_discrete}).
\end{example}

\begin{figure}[ht]
\centering
\includegraphics[scale=0.6]{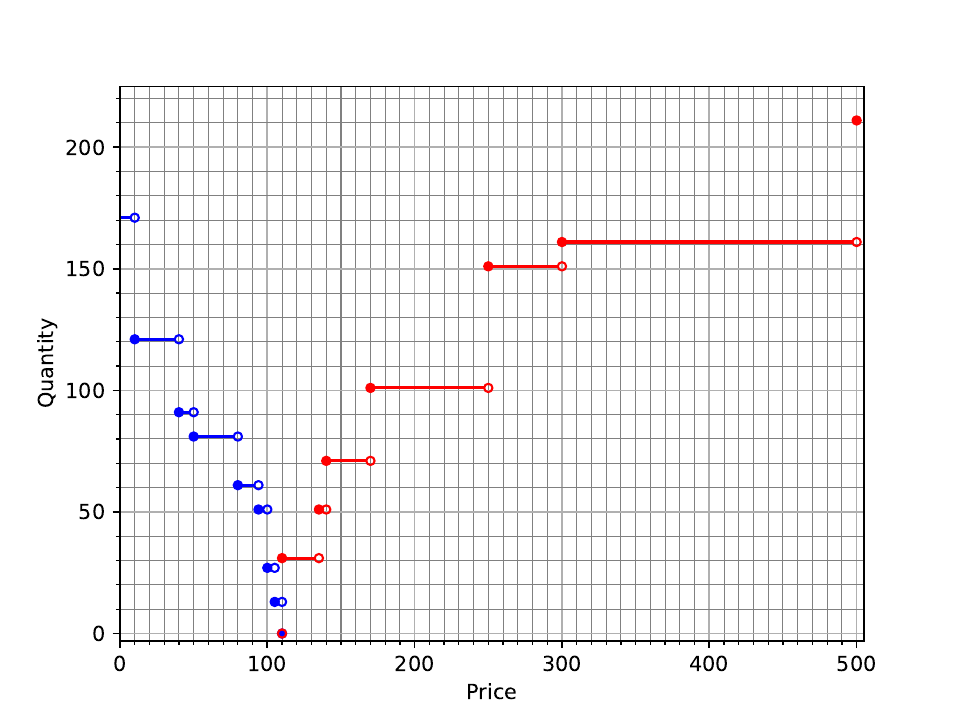}
{\scriptsize
\begin{tabular}{|c|c|c|c|c|c|c|c|c|c|c|c|c|c|c|c|}
\hline
price & 0 & 10 & 40 & 50 & 80 & 94 & 100 & 105 & 110 & 135 & 140 & 170 & 250 & 300 & 500 \\ \hline
$RDF_i$ & 171 & 121 & 91 & 81 & 61 & 51 & 27 & 13 & 0 & 0 & 0 & 0 & 0 & 0 & 0 \\
$RSF_i$ & 0 & 0 & 0 & 0 & 0 & 0 & 0 & 0 & 31 & 51 & 71 & 101 & 151 & 161 & 211 \\ \hline
\end{tabular}
}
\caption{Illustration of the piecewise linear remaining demand function (blue) and remaining supply function (red) of the limit order book in Table~\ref{tab:example_adiabatic_isoutil} (right) after the iso-util clearing of the unsettled limit order book in Table~\ref{tab:discrete_LOBs} (left).}
\label{fig:RDF_RSF_discrete_isoutil_clearing}
\end{figure}

Let us now elaborate on the background and motivation for these two clearing procedures. To this end, we introduce the following notion.

\begin{definition}[Transparent trader] \label{d_transparent_tradent}
We call a market participant a \emph{transparent trader} if the trader communicates her iso-util by posting the associated limit orders in the limit order book. 
\end{definition}

\begin{definition}[Iso-util trader]
We call a market participant an \emph{iso-util trader} if the trader is willing to do any trade which does not decrease her utility. 
\end{definition}

First, we consider what happens if a (buy or sell) limit order of a transparent iso-util trader gets filled. We remind the reader that in this article only the stationary case is studied. That is, we assume that the iso-util of the trader is unaffected by the fact that her limit order gets executed. Since the trader acts iso-util, she would be willing to undo this trade simply because this would leave her utility invariant. Given that the trader is transparent, she would also communicate this intention by submitting a limit order of the same price and same volume, but of opposite type.

\begin{remark} \label{rem:isoutil_liquidprov}
A practical example of iso-util traders is given by liquidity providers in an automated market maker liquidity pool like Uniswap v2 or v3 (see~\cite{Adams:20, Adams:21}). They effectively act as transparent iso-util traders, acting on the iso-util specified in the protocol (with the exception of collected fees, which represent the marginal gain of utility for the liquidity providers). In Uniswap v2, the iso-util is prescribed in the protocol, in Uniswap v3 the user has some freedom to determine her iso-util. For more details we refer to~\cite{EchenimGobetMaurice:24}.
\end{remark}

Suppose a market is comprised of transparent iso-util traders. If the associated limit order book is not settled, the asset is mispriced. As in the adiabatic case, an arbitrageur will take advantage and match overlapping orders for a riskless profit. The main difference to the adiabatic settlement is that the trades of the arbitrageur are executed against limit orders of transparent iso-util traders. Hence, as explained above, these filled limit orders will reappear on the opposite side of the limit order book, explaining the formula of the~$RDF_i$ and~$RSF_i$ in Proposition~\ref{prop:iso_util_settling_a_market}. \\

Let us now shed more light on the adiabatic settlement process and introduce the following notion.

\begin{definition}[Adiabatic trader]
We call a trader $\varepsilon$-adiabatic if the trader is only willing to do trades that increase her utility by at least~$\varepsilon>0$. We call a trader adiabatic if the trader is not willing to undo a trade.
\end{definition}

We describe how a transparent~$\varepsilon$-adiabatic trader acts. Such a trader will only communicate limit orders which increase her utility. Even if small, the trader increases with each trade her utility. In addition, undoing a trade would correspond to a decrease in utility which an adiabatic trader would not tolerate. Therefore, if a limit order of an~$\varepsilon$-adiabatic trader gets filled, it will disappear from the limit order book. Taking the limit~$\varepsilon \to 0$, motivates the definition of an adiabatic trader. The adiabatic clearing process given in Proposition~\ref{prop:settling_a_market} applies to an unsettled market of adiabatic traders. As every trader strives to increase her utility, the adiabatic clearing process is closer to reality; up to the fact that traders might not be transparent and communicate their intentions via limit orders. \\

This raises the question whether iso-utils are a meaningful tool to study financial markets, given the fact that traders typically tend to act adiabatic. Actually, iso-utils are very useful when considering the case of vanishing utility gain. Indeed, the trading curve of an~$\varepsilon$-adiabatic trader approximates her iso-util 
as~$\varepsilon \to 0$. We claim that even in the case of non-small~$\varepsilon$, iso-utils are a useful tool to study the behavior of adiabatic traders, as it may be possible to interpret the trading curve of an adiabatic trader as an iso-util of a new utility function, only that trades are not reversible on this iso-util. \\

The next statement determines the profit from arbitrage when unsettled markets clear. Observe that the arbitrageur behaves similarly in both the adiabatic and iso-util setting. Therefore, the arbitrageur's profit is the same in both clearing mechanisms.

\begin{proposition}[Arbitrage profit in an unsettled market]\label{p_entropy_unsettled_market}
Let us consider an unsettled limit order book $\mathscr{L}$ given by the unsettled remaining supply and demand functions $\text{RSF}_u(p)$ and $\text{RDF}_u(p)$. We define the functions
\begin{align}
A_d(p):= \min\{\text{RDF}_u(p), Z\}  \quad \text{and} \quad   A_s(p) := \min\{\text{RSF}_u(p), Z\}.
\end{align}
When clearing the market, both in an adiabatic or iso-util way, the profit from arbitrage is given by
\begin{align}\label{e_arbitrage_profit}
P = \int_{[p_d,\infty)} p \ d A_d(p)  - \int_{[0,p_s]}  p \ dA_s(p). 
\end{align}
Here, $d A_s(p)$ and $d A_d(p)$ denote the Lebesgue-Stieltjes integral with respect to $A_s(p)$ and $ A_d(p)$. 
\end{proposition}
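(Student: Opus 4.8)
The plan is to compute the arbitrageur's profit directly as the cash received for selling to the matched buy orders minus the cash paid for the matched sell orders, and then to rewrite these two cash flows as Lebesgue--Stieltjes integrals against $A_d$ and $A_s$. First I would record that the two clearing procedures yield the same arbitrage profit. In both Proposition~\ref{prop:settling_a_market} and Proposition~\ref{prop:iso_util_settling_a_market} the arbitrageur executes exactly the same matching trades---she buys a total volume $Z$ from the cheapest limit sell orders and sells the same volume $Z$ to the most expensive limit buy orders---and the two mechanisms differ only in whether the filled orders vanish or are re-posted on the opposite side afterwards. Since the re-posting does not involve the arbitrageur, her realized cash flow, and hence her profit, is identical, so it suffices to compute $P$ once.

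The key structural step is to identify the measures describing the matched volume. Writing $\nu_d$ for the positive sub-measure of $\mu_d$ consisting of the matched buy orders and $\nu_s$ for the matched sub-measure of $\mu_s$, I would show that
\begin{align}
\nu_d((p,\infty)) = \min\{\text{RDF}(p), Z\} = A_d(p) \quad\text{and}\quad \nu_s((0,p]) = \min\{\text{RSF}(p), Z\} = A_s(p).
\end{align}
The point is that the matched buy orders are precisely the top $Z$ units of demand (the highest-priced ones): if fewer than $Z$ units sit above price $p$ then all of them are matched and the matched tail equals $\text{RDF}(p)$, whereas if more than $Z$ units sit above $p$ then exactly $Z$ are matched and the matched tail saturates at $Z$. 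The symmetric statement for the cheapest $Z$ units of supply produces $A_s$. Here I would invoke Lemma~\ref{prop:clearing_auxiliary_lemma} to control the clearing price(s): all matched buy orders sit at prices $\ge p_d$ and all matched sell orders at prices $\le p_s$, which justifies the integration ranges $[p_d,\infty)$ and $[0,p_s]$; and in the degenerate case $p_d<p_s$ the identity~\eqref{equ:p_d_neq_p_s} guarantees that demand and supply coincide on $(p_d,p_s)$, so that the matched mass there is distributed consistently with $A_d$ and $A_s$.

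Finally, the arbitrageur's profit is $P = \int p\, d\nu_d(p) - \int p\, d\nu_s(p)$, the first integral being the cash received for selling $Z$ units into the high bids and the second the cash paid to lift the low asks. Converting to $A_d$ and $A_s$ via the relations above yields~\eqref{e_arbitrage_profit}: $A_s$ is an increasing distribution function, so $d\nu_s = dA_s$ on $[0,p_s]$, while $A_d$ is the decreasing tail of $\nu_d$, so that $\int p\, d\nu_d = \int_{[p_d,\infty)} p\, dA_d$ once the orientation of the Stieltjes measure of the non-increasing $A_d$ is fixed so that matched volume is counted positively.

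The main obstacle I anticipate is the bookkeeping at the clearing price. When $\mu_d$ or $\mu_s$ has an atom straddling the threshold it is only partially filled (as with the $\$110$ sell order in Table~\ref{tab:example_adiabatic}), and one must verify that $A_d$ and $A_s$, being the truncations of $\text{RDF}$ and $\text{RSF}$ at level $Z$, assign exactly the correct partial mass to such atoms. Together with fixing the sign and orientation convention for the Stieltjes integral against the decreasing function $A_d$, this is the delicate part of the argument, whereas the remaining manipulations are routine.
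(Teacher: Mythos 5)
Your proposal is correct and follows essentially the same route as the paper's own (very brief) proof: the profit is independent of the settlement mechanism because the arbitrageur's matching trades are identical in both cases, and the two integrals are exactly the cash received from the over-priced bids and the cash paid for the under-priced asks. Your identification of $A_d$ and $A_s$ as the tail and cumulative functions of the matched sub-measures, together with the discussion of partially filled atoms and the sign convention for the non-increasing $A_d$, simply makes explicit the bookkeeping that the paper leaves implicit.
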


\begin{proof}[Proof of Proposition~\ref{p_entropy_unsettled_market}]
Arbitrage arises from matching mispriced limit orders. This is independent from the fact that those limit orders vanish out of the limit order book or re-appear on the other side of the book. Hence, the profit from arbitrage is independent of adiabatic or iso-util settlement. The second integral in equation~\eqref{e_arbitrage_profit} denotes the amount of money needed to buy the supply from the under-priced limit sell orders. The first integral denotes the amount of money made from selling this supply to the over-priced limit buy orders. 
\end{proof}

\begin{example}
The arbitrage profit from clearing the unsettled limit order book in Table~\ref{tab:discrete_LOBs} (left) is given by $(10 \cdot 300 + 20 \cdot 135 + 19 \cdot 110) - (10 \cdot 50 + 12 \cdot 100 + 14 \cdot 105 + 13 \cdot 110) = 7790 - 4600 = 3190$; cf. also~Table~\ref{tab:example_adiabatic_isoutil} (left). 
\end{example}

We conclude this section by introducing the terms \textbf{adiabatic} and \textbf{iso-util entropy}.

\begin{definition}[Adiabatic and iso-util entropy] \label{d_adiabatic_iso_util_entropy}
We consider an unsettled limit order book $\mathscr{L}$. The iso-util entropy~$S_i$ is given by the vector
\begin{align} \label{d_adiabatic_iso_util_entropy_s_i}
S_i := (P, 0),
\end{align}
where~$P$ denotes the arbitrage profit given in~\eqref{e_arbitrage_profit}.
The adiabatic entropy~$S_a$ is given by the vector 
\begin{align} \label{d_adiabatic_iso_util_entropy_s_a}
S_a := \left( \int_{[p_d,\infty) } p \ d A_d(p), Z\right),
\end{align}
where $Z$ denotes the clearing volume from Lemma~\ref{prop:clearing_auxiliary_lemma}.
\end{definition}

The adiabatic and iso-util entropy~$S_a$ and~$S_i$ denote the \emph{amount of liquidity that gets lost} by adiabatic or iso-util clearing of an unsettled limit order book. The use of the term \emph{entropy} is inspired by an analogy to the second law of thermodynamics. The entropy plays an important role in the aggregation of financial markets, which is explained in detail in Section~\ref{sec:aggregate_markets}. Its role for economic activity is illustrated in Appendix~~\ref{sec:entropy_in_economics}.\medskip

We observe that the iso-util entropy is smaller than the adiabatic entropy, i.e.~$S_i \leq S_a$. The reason is that in adiabatic settlement, all matched limit orders vanish out of the limit order book. The number~$Z$ denotes how many units of~$Y$ were sold in the clearing process, and the number~$\int_{[p_d,+\infty)} p \ d A_d(p)$ denotes how many dollars were payed for it.

\begin{example}
Simple calculations show that for the unsettled limit order book market in Table~\ref{tab:discrete_LOBs} (left) the adiabatic and iso-util entropies are $S_a = (7790,49)$ and $S_i = (3190,0)$, respectively.
\end{example}

\begin{proposition}\label{p_adiabatic_iso_util_ebntropy}
Consider an unsettled limit order book market $\mathscr{L}$ with unsettled remaining supply and demand functions~$\text{RSF}_u$ and~$\text{RDF}_u$. Let~$(x_u,y_u)$ denote the supply levels of the assets $(X,Y)$ before clearing, i.e.,
\begin{align} \label{p_adiabatic_iso_util_ebntrop_eq}
x_u := \int_{[0,\infty) } p \ d \text{RDF}_u(p) \quad \text{and} \quad y_u := \lim_{p \to \infty} \text{RSF}_u(p).
\end{align}
After clearing the supply levels $(x_a, y_a)$ of the settled market are given by 
\begin{align} \label{p_adiabatic_iso_util_ebntropy_a}
(x_a , y_a) = (x_u,y_u) - S_a
\end{align}
in the case of adiabatic clearing and by
\begin{align} \label{p_adiabatic_iso_util_ebntropy_i}
(x_i, y_i) = (x_u,y_u) - S_i
\end{align}
in the case of iso-util clearing.
\end{proposition}

The proof consists of a straightforward calculation and is omitted. 

\begin{example}
The supply levels of the asset pair $(X,Y)$ in the unsettled limit order book in Table~\ref{tab:discrete_LOBs} (left) are given by $x_u = 300 \cdot 10 + 135 \cdot 20 + 110 \cdot 19 + 100 \cdot 12 + 94 \cdot 10 + 80 \cdot 20 + 40 \cdot 30 + 10 \cdot 50 = 13230$ and $y_u = 10+12+14+25+20+30+50+50 = 211$. Hence, it follows that $(x_a,y_a)=(13230,211) - (7790,49) = (5440,162)$ for the adiabatic clearing and $(x_i,y_i)=(13230,211) - (3190,0) = (10040,211)$ for the iso-util clearing; compare also with the corresponding limit order books in Table~\ref{tab:discrete_LOBs} (right) and Table~\ref{tab:example_adiabatic_isoutil} (right). 
\end{example}

\subsection{Equivalence of iso-utils and limit order books}
\label{subsec:equivalence}

In this section, we will describe how to associate an iso-util to a limit order book and vice versa. For the sake of clarity, we start by explaining the procedure for a simple example first, namely for piecewise linear iso-utils, which turn out to correspond to discrete limit order books. Later, we provide general formulas extending the ones in~\cite{Young:20}.

\begin{figure}[ht]
\centering
\includegraphics[scale=0.65]{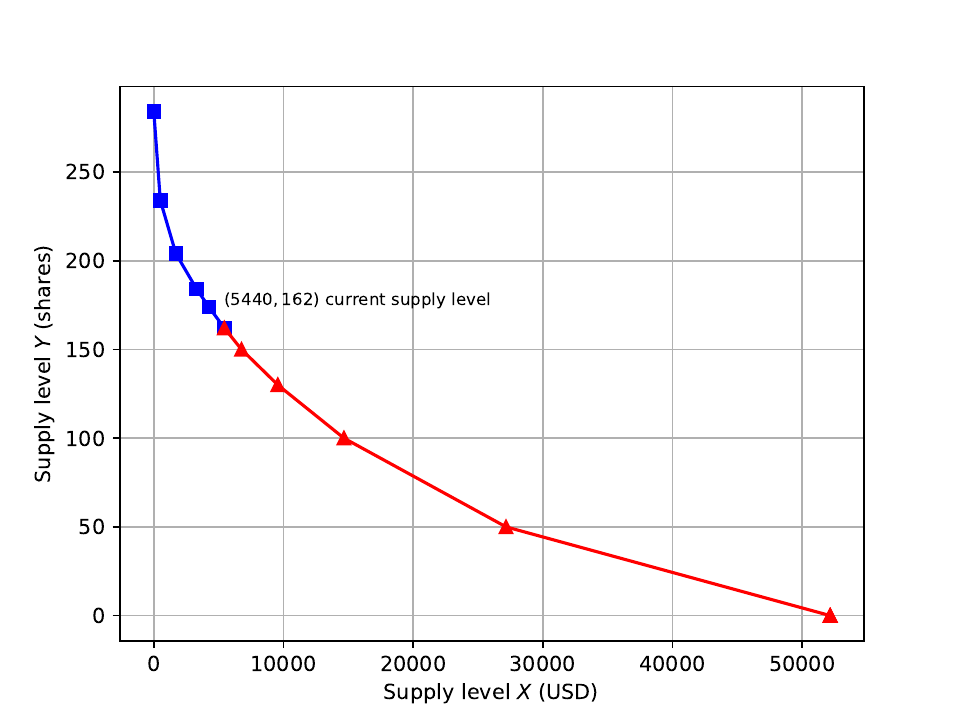} 
\caption{Iso-util generated from the settled limit order book given in Table~\ref{tab:discrete_LOBs} (right).}
\label{fig:LOB_isoutil_settled}
\end{figure}

\begin{table}[ht]
\centering
\caption{Supply levels of the limit order book in Table~\ref{tab:discrete_LOBs} (right).}
\label{tab:supply_levels}
{\small
\begin{tabular}{|c|c|c|c|c|c|c|c|c|c|c|c|}
\hline
 points & 1 & 2 & 3 & 4 & 5 & 6 & 7 & 8 & 9 & 10 & 11 \\
\hline
$x$ & 0 & 500 & 1700 & 3300 & 4240 & 5440 & 6760 & 9560 & 14660 & 27160 & 52160 \\
\hline
$y$ & 284 & 234 & 204 & 184 & 174 & 162 & 150 & 130 & 100 & 50 & 0 \\
\hline
\end{tabular}
}
\end{table}

\begin{example}[Associating a limit order book to a piecewise linear iso-util]
Consider the piecewise linear iso-util illustrated in Figure~\ref{fig:LOB_isoutil_settled}, which results from the linear interpolation of the supply levels summarized in Table~\ref{tab:supply_levels}. The current supply level of the market is given by the 6th data point~$(5440, 162)$. The bid part of the iso-util is marked in blue and the ask part is marked in red (cf.~Definition~\ref{def:bid_ask_part_isoutil}). Observe that with an iso-util trade one could move the supply level from the 6th data point to the 7th, implying that the supply level of $X$ would increase by $6760 -5440= 1320$ dollars and the supply level of $Y$ would decrease by $162-150=12$ units of $Y$. In other words, one could buy $12$ units of $Y$ for $1320$ dollars. Consequently, the first limit sell order in the associated limit order book is $12$ units of $Y$ for $1320/12=110$ dollars each. Similarly, moving from the 7th data point to the 8th would correspond to a limit sell order of $150-130=20$ units of $Y$ at the price of $\frac{9560-6760}{20}=140$ dollars for each unit. Continuing with the remaining data points on the ask part of the iso-util would recover the ask side of the limit order book in Table \ref{tab:discrete_LOBs} (right). Recovering the bid side of the limit order book from the bid part of the iso-util is very similar: An iso-util trade moving the supply levels from the 6th data point to the 5th would imply that $174-162=12$ units of $Y$ could be sold for $5440-4240=1200$ dollars in total, corresponding to a limit buy order of $12$ units of $Y$ for the price of $100$ dollars per unit. Continuing in this way, one would recover the bid side of the limit order book given in Table~\ref{tab:discrete_LOBs} (right).
\end{example}

\begin{example}[Associating an iso-util and supply levels to a discrete limit order book]
Consider the limit order book given in Table~\ref{tab:discrete_LOBs} (right). First, we need to determine the current supply level $(x,y)$. The supply $y$ is given by the total number of shares that can be bought in the limit order book, which corresponds to $y=12+20+30+50+50 = 162$ shares. The supply $x$ is given by the total amount of dollars that can be received by selling as much of the asset $Y$ as possible. In our example, this amounts to $y = 100 \cdot 12 + 94 \cdot 10 + 80 \cdot 20 + 40 \cdot 30 + 10 \cdot 50 = 5440$ dollars. Thus, the current supply level is given by $(5440,162)$ and corresponds to the current liquidity of the asset pair $(X,Y)$ in the limit order book. Next, we determine the pivotal supply levels collected in Table~\ref{tab:supply_levels}. The corresponding iso-util~$I$ is then given by linear interpolation of these points. For example, with the first limit sell order in the limit order book, 12 units of $Y$ can be purchased for $110 \cdot 12 = 1320$ dollars in total. Hence, the point $x=5440+1320 = 6760$ and $y=162-12=150$ is also on the iso-util, recovering the 7th data point. Similarly, one can recover all data points in Table~\ref{tab:supply_levels}, and therefore obtain the iso-util depicted in Figure~\ref{fig:LOB_isoutil_settled}. 
\end{example}

\begin{remark}
We summarize how a discrete limit order book is associated to a piecewise linear iso-util as follows:
\begin{enumerate}
\item[(i)] The current supply level $y$ corresponds to the sum of the volumes of all limit sell orders, and the current supply level $x$ corresponds to the sum of the dollar volumes of all limit buy orders.
\item[(ii)] Each line segment of the iso-util corresponds to a limit order: Line segments in the ask part (i.e., to the right of the current supply level) correspond to limit sell orders (ask side of the book); line segments in the bid part (i.e., to the left of the current supply level) correspond to limit buy orders (bid side of the book).
\item[(iii)] The slope of each line segment corresponds to~$-P_{\frac{y}{x}}$, i.e., the negative of the inverse of the price of $Y$ of the limit order. 
\item[(iv)] The height loss of a line segment corresponds to the volume of the limit order.
\end{enumerate}
\end{remark}

\begin{remark}
In Figure~\ref{fig:LOB_isoutil_settled}, notice that the slope of the line segment on the left-hand side of the current supply level $(x,y) = (5440,162)$ is not the same as the slope of the line segment on the right-hand side. In fact, the difference of the slopes is in one-to-one correspondence with the bid-ask spread in the associated limit order book.
\end{remark}

\begin{example}\label{ex:binance_BTC_USDT}
Figure~\ref{fig:LOB_BTCUSDT_isoutil} illustrates the iso-util of Binance's limit order book of the asset pair $(X,Y)=(\text{USDT}, \text{BTC})$ observed on January 2, 2024 at 05:21:16 a.m.~(UTC); cf.~also Example~\ref{ex:LOB_BTCUSDT_measures}. Best bid is at $45,265.81$ and best ask is at $45,265.82$. The current supply level of the entire order book is $x=170,595,915.88$ USDT and $y=4,078.43$ BTC. The upper panel illustrates the piecwise linear iso-util for the first 100 levels on both sides of the book with markers (squares and triangles) at the different price levels. Observe that the first line segment of the supply part (red) is much longer than the first line segment of the demand part (blue). This is due to an order imbalance at the best prices. Indeed, the available volume at the best ask price is $7.12350$ BTC whereas the available volume at the best bid price is only $0.72667$ BTC. The lower panel shows the iso-util of the entire limit order book together with a tangent line at the current supply level $(x,y)$ with slope $-\frac{1}{45265.815}$, i.e., with midprice $45,265.815$ as marginal price. Because of the small tick size between the different available price levels, the slopes between the line segments of the iso-util are almost identical. However, compared to the linear tangent line, we note that the iso-util is non-linear.
\end{example}

\begin{figure}[ht]
\centering
\includegraphics[scale=0.65]{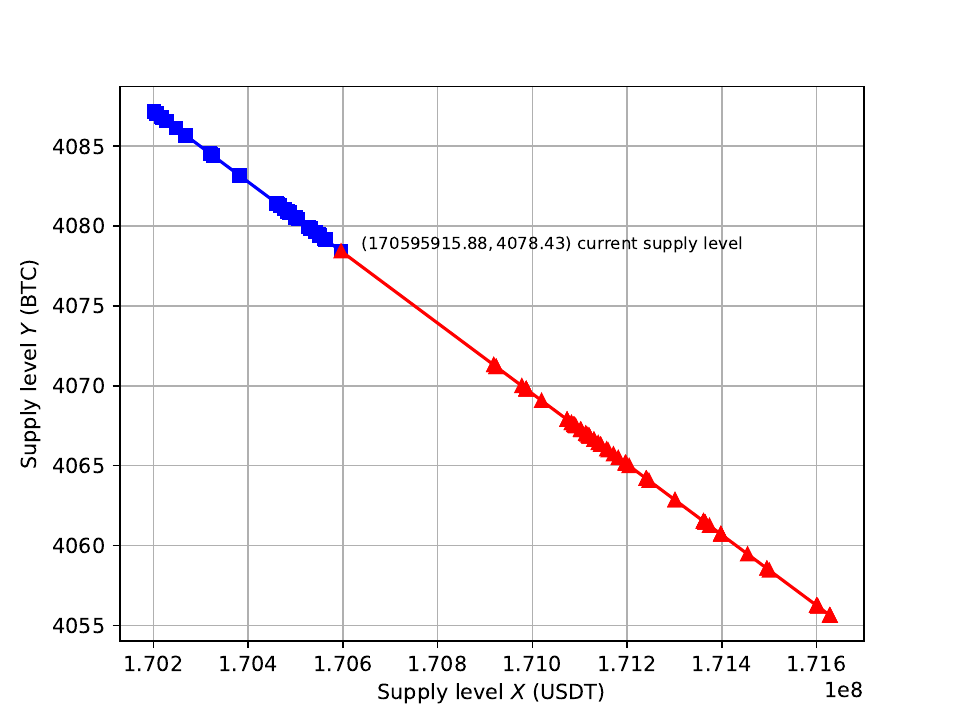} 
\vspace{-1em}
\includegraphics[scale=0.65]{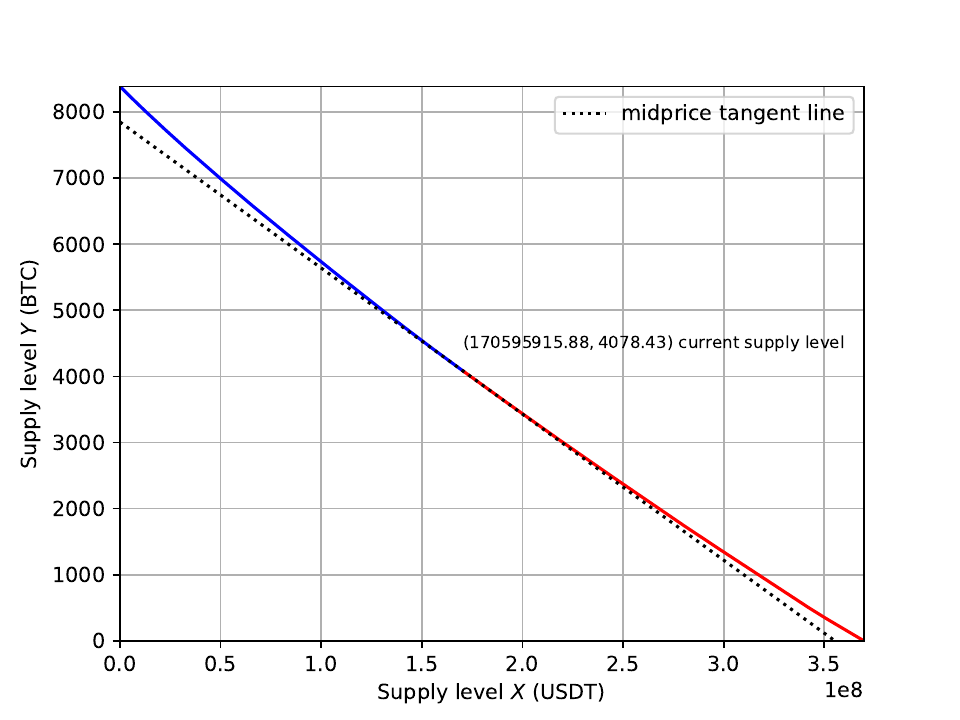} 
\caption{\emph{Top:} Iso-util of Binance's limit order book of the asset pair $(X,Y)=(\text{USDT}, \text{BTC})$ observed on January 2, 2024 at 05:21:16 a.m.~(UTC) for the first 100 levels on both sides. \emph{Bottom:} Iso-util of the entire limit order book. The tangent line at the current supply level $(x,y)$ has slope $-1/45265.815$ with midprice $45,265.815$.}
\label{fig:LOB_BTCUSDT_isoutil}
\end{figure}

Let us now turn to the general case. To this end, we recall the generalized inverse of a function.

\begin{definition}[Generalized inverse]\label{def:generalized_inverse}
For a non-decreasing function~$f:(0, \infty) \to \mathbb{R}$ the generalized inverse~$f^{-1}: \mathbb{R} \to [0, \infty)$ is defined by
\begin{align}
f^{-1} (y) := \sup \left\{ x \in (0, \infty) \ | \ f(x) \leq y \right\},
\end{align}  
where we use the convention~$\sup \emptyset = 0$.
For a non-increasing function~$f:(0, \infty) \to \mathbb{R}$ the generalized inverse~$f^{-1}: \mathbb{R} \to (0, \infty)$ is defined by
\begin{align}
f^{-1} (y) := \sup \left\{ x \in (0, \infty) \ | \ f(x) > y \right\},
\end{align}  
where we again use the convention~$\sup \emptyset =0$.
\end{definition}

\begin{lemma}\label{prop:properties_generalized_inverse}
The generalized inverse defined in Definition~\ref{def:generalized_inverse} satisfies the following properties:
\begin{enumerate}
\item[(i)] \label{point:generalized_inverse_strictly_increasing} The generalized inverse~$f^{-1}$ of a non-decreasing (non-increasing) function is non-decreasing (non-increasing).
\item[(ii)] \label{point:generalized_inverse_right_continuous} The generalized inverse~$f^{-1}$ is right-continuous.
\item[(iii)] \label{point:generalized_inverse_proper_inverse} If the function $f$ is strictly increasing or strictly decreasing, then the generalized inverse coincides with the proper inverse.
\end{enumerate}
\end{lemma}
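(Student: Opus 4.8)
The plan is to establish the three properties in the order listed, since the proof of right-continuity (property~\ref{point:generalized_inverse_right_continuous}) will use the monotonicity from property~\ref{point:generalized_inverse_strictly_increasing}. For property~\ref{point:generalized_inverse_strictly_increasing} I would argue purely at the level of the defining sets. In the non-decreasing case, if $y_1 \leq y_2$ then $\left\{ x \ | \ f(x) \leq y_1 \right\} \subseteq \left\{ x \ | \ f(x) \leq y_2 \right\}$, and passing to suprema of nested sets preserves the ordering, so $f^{-1}(y_1) \leq f^{-1}(y_2)$. In the non-increasing case, if $y_1 \leq y_2$ then $\left\{ x \ | \ f(x) > y_2 \right\} \subseteq \left\{ x \ | \ f(x) > y_1 \right\}$, whence $f^{-1}(y_2) \leq f^{-1}(y_1)$, i.e.\ $f^{-1}$ is non-increasing.

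For property~\ref{point:generalized_inverse_right_continuous} I would fix $y$ and a sequence $z_n \downarrow y$. Since $f^{-1}$ is monotone by the previous part, the limit $L := \lim_n f^{-1}(z_n)$ exists, and monotonicity already supplies one of the two inequalities between $L$ and $f^{-1}(y)$ for free (namely $f^{-1}(y) \leq L$ in the non-decreasing case, and $L \leq f^{-1}(y)$ in the non-increasing case). Treating the non-decreasing case, it remains to show $L \leq f^{-1}(y)$, and here I would argue by contradiction: if $L > f^{-1}(y)$, choose $x_0$ with $f^{-1}(y) < x_0 < L$; then $x_0 > \sup\left\{ x \ | \ f(x) \leq y \right\}$ forces $f(x_0) > y$. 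On the other hand $f^{-1}(z_n) \geq L > x_0$ means the set $\left\{ x \ | \ f(x) \leq z_n \right\}$ contains a point strictly above $x_0$, so that monotonicity of $f$ gives $f(x_0) \leq z_n$; letting $n \to \infty$ yields $f(x_0) \leq y$, a contradiction. The non-increasing case is handled directly rather than by contradiction: for any $x_0$ with $f(x_0) > y$ one has $f(x_0) > z_n$ for large $n$, hence $x_0 \leq f^{-1}(z_n) \leq L$, and taking the supremum over such $x_0$ gives $f^{-1}(y) \leq L$.

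Property~\ref{point:generalized_inverse_proper_inverse} I would settle by evaluating the defining set explicitly using strictness. If $f$ is strictly increasing and $y = f(x_0)$ lies in the range of $f$, then strict monotonicity gives $\left\{ x \ | \ f(x) \leq y \right\} = (0, x_0]$, whose supremum is exactly $x_0$, matching the proper inverse. If $f$ is strictly decreasing and $y = f(x_0)$, then $\left\{ x \ | \ f(x) > y \right\} = (0, x_0)$, again with supremum $x_0$. Thus on the range of $f$, where the proper inverse is defined, the two notions agree.

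The step I expect to be the main obstacle is the reverse inequality in property~\ref{point:generalized_inverse_right_continuous}, where information about the suprema $f^{-1}(z_n)$ must be converted back into a pointwise statement about $f$. The key device is the elementary observation that a supremum strictly exceeding $x_0$ guarantees an actual element of the set above $x_0$; combined with the monotonicity of $f$, this pins down the value $f(x_0)$ and closes the argument.
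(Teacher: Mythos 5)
Your proposal is correct and follows essentially the same route as the paper's proof: monotonicity via nested defining sets, right-continuity by a contradiction that converts ``the supremum $f^{-1}(z_n)$ exceeds $x_0$'' into an actual element of the set above $x_0$ and then applies monotonicity of $f$, and explicit evaluation of the defining set for the strictly monotone case. The only difference is cosmetic: you spell out the non-increasing case directly (which is arguably cleaner than the paper's claim that it reduces to $-f$, since the two definitions use $\leq$ versus $>$), whereas the paper leaves that case to the reader.
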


The proof of the previous lemma is standard and hence omitted. See, for example, Proposition 1 in~\cite{EmHo:13} for a similar statement for a slightly different definition of the generalized inverse.

\begin{proposition}[Associating a limit order book to an iso-util] \label{prop:isoutil_to_lob}
Consider an iso-util~$I$ with current supply level~$(x_0,y_0)$. Let~$f$ denote the function representation of~$I$ from Proposition~\ref{p_iso_util_function}, and let~$f'^{-1}$ denote the generalized inverse of its derivative. Then for~$p \in (0 , \infty)$ it holds that
\begin{align}
F_s(p) := \max \left\{  y_0 - f(f'^{-1} (-p^{-1})) , 0 \right\}
\end{align}
defines the remaining supply function and
\begin{align}
    F_d(p) := \max \left\{   f(f'^{-1} (-p^{-1})) - y_0 , 0 \right\}
    \end{align}
defines the remaining demand function of a settled limit order book. 
\end{proposition}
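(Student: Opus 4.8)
The plan is to reduce all claims to the structural properties of the function representation~$f$ from Proposition~\ref{p_iso_util_function} together with the two properties of the generalized inverse from Lemma~\ref{prop:properties_generalized_inverse}. Recall that $f$ is convex, continuous, strictly decreasing (since $f'<0$), satisfies $f(x_0)=y_0$ and $\lim_{x\to d_f}f(x)=0$, and obeys $-\infty<f'_-\le f'_+<0$ everywhere. Throughout I would invert the everywhere-defined, non-decreasing, right-continuous right derivative~$f'_+$, which is consistent with the price convention $P_{\frac{y}{x}}=-f'_+$ of Definition~\ref{d_prices}: the relation $f'_+(x)=-p^{-1}$ says exactly that the marginal price $P_{\frac{x}{y}}=-1/f'_+(x)$ equals $p$, so $x=f'^{-1}(-p^{-1})$ locates the point of $I$ trading at price $p$. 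Setting $\phi(p):=f\bigl(f'^{-1}(-p^{-1})\bigr)$, we have $F_s=\max\{y_0-\phi,0\}$ and $F_d=\max\{\phi-y_0,0\}$, so everything follows once the monotonicity, continuity, and limiting behaviour of $\phi$ are understood.

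First I would establish monotonicity. The map $p\mapsto-p^{-1}$ is increasing on $(0,\infty)$; by Lemma~\ref{prop:properties_generalized_inverse}\eqref{point:generalized_inverse_strictly_increasing} the generalized inverse $f'^{-1}$ is non-decreasing; and $f$ is decreasing. Hence $\phi$ is non-increasing, which immediately yields that $F_s$ is non-decreasing and $F_d$ is non-increasing. For right-continuity I would compose: $p\mapsto-p^{-1}$ is continuous and increasing, $f'^{-1}$ is right-continuous by Lemma~\ref{prop:properties_generalized_inverse}\eqref{point:generalized_inverse_right_continuous}, and precomposing a right-continuous non-decreasing function with a continuous increasing one preserves right-continuity (take $p_n\downarrow p$, so $-p_n^{-1}\downarrow-p^{-1}$, and apply right-continuity of $f'^{-1}$). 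Since $f$ is continuous, $\phi$ is right-continuous, and taking the maximum with the constant $0$ keeps $F_s$ and $F_d$ right-continuous.

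Next I would treat the vanishing conditions through the sign of $f'^{-1}(-p^{-1})-x_0$. Because $f$ is strictly decreasing with $f(x_0)=y_0$, one has $F_s(p)>0 \iff f'^{-1}(-p^{-1})>x_0$ and $F_d(p)>0 \iff f'^{-1}(-p^{-1})<x_0$. As $p\downarrow0$ the argument $-p^{-1}\to-\infty$ falls below $\inf f'_+=f'_+(0^+)>-\infty$, so $f'^{-1}(-p^{-1})=0\le x_0$ and $F_s$ vanishes identically near $0$, giving $\lim_{p\to0}F_s(p)=0$. As $p\to\infty$ the argument $-p^{-1}\to0^-$, and since $f'_+<0$ throughout $(0,d_f)$ we get $f'^{-1}(-p^{-1})\to d_f>x_0$, so $F_d$ vanishes identically for large $p$, giving $\lim_{p\to\infty}F_d(p)=0$. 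Here I use the conventions $f(d_f)=\lim_{x\to d_f}f(x)=0$ and $f(0^+)\ge y_0$ to read off $\phi$ at the endpoints; the outer maximum makes the precise endpoint value irrelevant.

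Finally, for settledness I would invoke the characterization of Corollary~\ref{prop:settled_lob_via_rsf_rdf} and exhibit a mid-price $m$ with $F_s(m)=F_d(m)=0$, equivalently with $\phi(m)=y_0$, i.e.\ $f'^{-1}(-m^{-1})=x_0$. Writing $p_b:=-1/f'_-(x_0)$ and $p_a:=-1/f'_+(x_0)$ for the best bid and best ask (the inverse slopes at $x_0$, with $p_b\le p_a$), the computation $f'^{-1}(y)=x_0$ for $y\in[f'_-(x_0),f'_+(x_0))$ shows that any $m\in[p_b,p_a)$ works when $x_0$ is a kink. I expect this last step to be the main obstacle: the behaviour of the generalized inverse of $f'_+$ exactly at $x_0$ must be analysed case by case, separating the genuine spread $f'_-(x_0)<f'_+(x_0)$ from the differentiable case, and the borderline configuration in which $x_0$ lies inside a linear segment of $f$ (producing coincident best bid and best ask, and thus needing either the $f'_+$-convention to break the tie or an explicit degeneracy argument). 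Pinning down which one-sided derivative is inverted, and carefully tracking the atoms of $\mu_d,\mu_s$ that the jumps of $f'_+$ create, is where the real work lies; the monotonicity, right-continuity, and endpoint limits above are then routine.
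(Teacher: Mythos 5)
Your treatment of monotonicity, right-continuity, and the two vanishing conditions is essentially the paper's own proof: the same composition $f \circ f'^{-1} \circ (p \mapsto -p^{-1})$ analysed with the same two properties of the generalized inverse from Lemma~\ref{prop:properties_generalized_inverse}, and the endpoint limits obtained by exhibiting a price at which each function vanishes and then using monotonicity plus non-negativity. One inaccuracy: you assert $\inf f'_+ = f'_+(0^+) > -\infty$, but Proposition~\ref{p_iso_util_function} only gives the \emph{pointwise} bound $-\infty < f'_-(x)$, so $f'_+$ may tend to $-\infty$ as $x\to 0$ (e.g.\ $f(x)=1/x$) and the set $\{x : f'(x)\le -p^{-1}\}$ need never be empty. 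Your conclusion survives because its supremum is still $\le x_0$ once $p < -1/f'_-(x_0)$; this is exactly how the paper argues, showing $F_d(p_d)=0$ at $p_d=-1/f'_-(x_0)$ and $F_s(p_s)=0$ at $p_s=-1/f'_+(x_0)$ and then invoking monotonicity.

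Where you stop short is settledness, and here the comparison is interesting. The paper closes this step in two lines with no case analysis: $p_d\le p_s$ by~\eqref{eq:ordering_left_right_hand_derivative1}, so $F_d(p_s)\le F_d(p_d)=0$ and $m=p_s$ is a mid-price by Corollary~\ref{prop:settled_lob_via_rsf_rdf}. So in the generic case the step is easier than you fear. However, your instinct about the borderline configuration is correct, and it exposes a gap in the paper's own argument: if $x_0$ lies in the interior of a maximal linear segment $[a,b]$ of $f$, then $f'(x)=f'_+(x_0)$ on $(x_0,b)$, so $f'^{-1}(-p_s^{-1})=\sup\{x : f'(x)\le f'_+(x_0)\}\ge b > x_0$ and the paper's claim $f'^{-1}(-p_s^{-1})\le x_0$ fails; one computes $F_s(p_s)=y_0-f(b)>0$ while $F_d(p)\ge f(a)-y_0>0$ for all $p<p_d=p_s$, so no $m$ with $F_d(m)=F_s(m)=0$ exists and both $\mu_d$ and $\mu_s$ carry an atom at the common price $-1/f'_+(x_0)$. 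Under the paper's definition of settled (supports in the disjoint open intervals $(0,m)$ and $(m,\infty)$) such a book is not settled, so the proposition needs either to exclude this degenerate supply level or to weaken "settled" to permit coincident best bid and best ask. In short: three of the four claims you prove exactly as the paper does; the fourth is resolved by the paper with a short monotonicity argument that works whenever $x_0$ is not interior to a linear piece, and the exceptional case you flag is a genuine defect that neither your sketch nor the paper's proof resolves.
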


\begin{proof}[Proof of Proposition~\ref{prop:isoutil_to_lob}]
To show that~$F_d$ is a remaining demand function we need to verify the conditions of Definition~\ref{def:remaining_demand_supple_function}, namely that~$F_d$ is non-increasing, right-continuous and satisfies~$\lim_{p \to \infty} F_d (p) = 0$. 

We start with showing that the remaining demand function~$F_d$ is non-increasing and right-continuous. We observe that the function~$i:p \mapsto - p^{-1}$ is strictly increasing. Because the function~$f$ is convex, the derivative~$f'$ is non-decreasing. Therefore, by Lemma~\ref{prop:properties_generalized_inverse} the generalized inverse~$f'^{-1}$ is also non-decreasing. We also observe that the function~$f$ is non-increasing. Hence the combination~$f \circ f'^{-1} \circ i$ is also non-increasing. This yields that the remaining demand function~$F_d(p)$ is non-increasing.

Next, we verify that the remaining demand function~$F_d$ is right-continuous. We observe that the function~$i: p \mapsto - p^{-1}$ is strictly increasing. Also, by Lemma~\ref{prop:properties_generalized_inverse} the generalized inverse is right-continuous. Additionally, as we have seen above, the generalized inverse~$f'^{-1}$ is non-decreasing. Because the function~$f$ is continuous and non-increasing, this implies by verifying the definition that the combination~$f \circ f'^{-1}$ is right-continuous, which yields that the remaining demand function~$F_d(p)$ is also right-continuous. 

Let us now show that~$\lim_{p \to \infty} F_d (p) = 0$. We start with observing that the function~$F_d (p)$ is non-increasing and non-negative, i.e., $F_d(p) \geq 0$. Hence, it suffices to show that~$F_d(p) = 0$ for some~$p$. We recall that~$f'_{-}$ denotes the left-hand derivative of~$f$. Let us choose~$ p_d =  - \frac{1}{f'_{-} (x_0)}$. Then, it holds by definition of the generalized inverse that~$f'^{-1}(-p_d^{-1}) \geq x_0 $. Hence, because the function~$f$ is decreasing it follows that
\begin{align}
f(f'^{-1}(-p_d^{-1})) \leq f(x_0) = y_0,
\end{align}
which implies the desired identity~$F_d(p_d) =0$.

The argument that the remaining supply function~$F_s$ is non-decreasing and right-continuous is similar and left out. To show that~$\lim_{p \to 0} F_s (p) = 0$, we observe that~$F_s (p)$ is non-decreasing and non-negative. Hence, it suffices to find a value~$p_s$ such that~$F_s(p_s) =0$. We recall that~$f'_{+}$ denotes the right-hand derivative of~$f$. Choosing~$p_s=  - \frac{1}{f'_{+} (x_0)}$ we observe that~$f'^{-1}(-p_s^{-1}) \leq x_0$. Now, a similar argument as for~$F_d$ implies the desired identity~$F_s(p_s) =0$.

The last step is to show that the limit order book given by~$F_d$ and~$F_s$ is settled. By~\eqref{eq:ordering_left_right_hand_derivative2} we have $p_d \leq p_s$, which implies the desired conclusion by observing that~$F_d(p_d)= F_s(p_s)=0.$ 
\end{proof}

\begin{figure}
\includegraphics[scale=0.7]{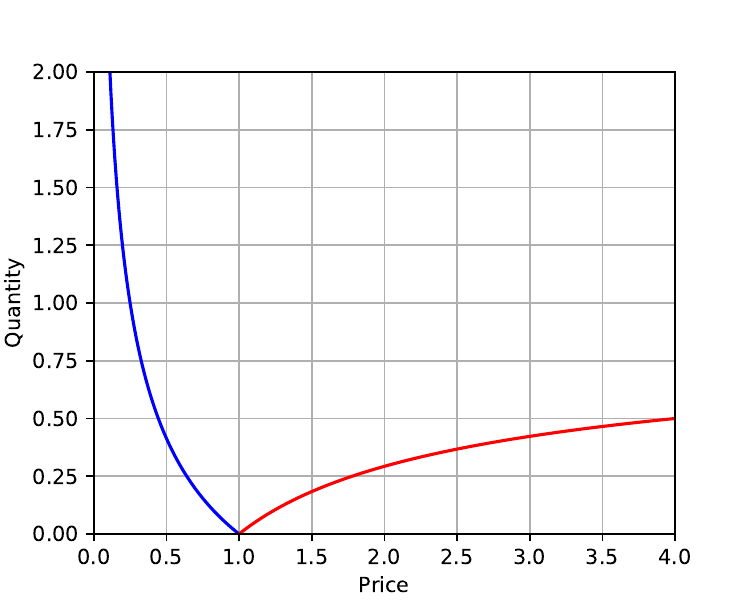}
\caption{Remaining demand function (blue) and remaining supply function (red) for the iso-util $x \cdot y =1$ with current supply level $(1,1)$.}
\label{fig:rsf_rdf_ideal_market}
\end{figure}

\begin{example} \label{ex:rsf_rdf_ideal_market}
Let us examine an ideal market from Definition~\ref{def:ideal_market}, which is given by the utility function $U(x,y)= \log x +  \log y$ and used by Uniswap v2 as its automated market maker protocol. We consider the iso-util $x \cdot y =T=A^2$ at temperature $T$ or, equivalently, with mean activity $A$. Let $(x_0,y_0)$ denote the current supply level. Then, due to Proposition~\ref{p_gradient_characterization_prices}, the marginal price is $p_0 = \frac{x_0}{y_0}$ and a straightforward application of Proposition~\ref{prop:isoutil_to_lob} shows that the remaining demand function (RDF) is given by
\begin{align}
F_d(p) = \begin{cases}
A \left( \frac{1}{\sqrt{p}} - \frac{1}{\sqrt{p_0}}\right) ,  & \text{for } p < p_0 , \\
0, & \text{for } p \geq p_0.
\end{cases}
\end{align}
Therefore, the demand measure $\mu_d$ is given by the Lebesgue density
\begin{align}
f_d(p) = \begin{cases}
\frac{A}{2p^{\frac{3}{2}}}, & \text{for } p < p_0, \\
0, & \text{for } p \geq p_0.
\end{cases}
\end{align} 
Moreover, the remaining supply function (RSF) is given by
\begin{align}
F_s(p) = \begin{cases}
0, & \text{for } p < p_0 , \\
A \left( \frac{1}{\sqrt{p_0}} - \frac{1}{\sqrt{p}},
 \right) & \text{for } p \geq p_0,
\end{cases}
\end{align}
and the supply measure $\mu_s$ has the Lebesgue density
\begin{align}
f_s(p) = \begin{cases}
0 & \text{for } p < p_0 , \\
\frac{A}{2p^{\frac{3}{2}}} & \text{for } p \geq p_0.
\end{cases}
\end{align}
We refer to Figure~\ref{fig:rsf_rdf_ideal_market} for an illustration. From this calculation, we see that in an ideal market temperature modulates the available liquidity; e.g., in a market four times as ``hot'', there is twice as much liquidity.
\end{example}

Let us now turn to associating an iso-util to a limit order book. Again, we need some preparation.

\begin{definition}[Pricing function]
Consider a limit order book $\mathscr{L}$ given by the remaining demand function~$F_d$ and the remaining supply function~$F_s$. Then the bid pricing function~$p_b$ is defined as~$p_b:=F_d^{-1}$ and the ask pricing function~$p_a$ is defined as~$p_a:=F_s^{-1}$. 
\end{definition}

\begin{remark}
The pricing functions~$p_b$ and~$p_a$ have a simple interpretation: If one wants to buy or sell~$y$ many units of~$Y$, then the price of the~$y$-th unit is given by~$p_a(y)=F_s^{-1}(y)$ and~$p_b(y) = F_d^{-1}(y)$, respectively.
\end{remark}

The following statement is an immediate consequence of involved definitions.

\begin{lemma}\label{p_strict_positivity_of_pricing_functions}
Assume that the remaining demand function~$F_d$ and the remaining supply function~$F_s$ are not the zero functions. Then the pricing functions~$p_b$ and~$p_a$ are strictly positive for~$y>0$, i.e., they satisfy~$p_b(y)>0$ and $p_a(y) >0$ for all~$y>0$. 
\end{lemma}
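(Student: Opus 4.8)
The plan is to unwind the definition of the generalized inverse (Definition~\ref{def:generalized_inverse}) separately for the two pricing functions and, in each case, to exhibit an explicit positive price $p$ lying in the set over which the defining supremum is taken; together with the convention $\sup\emptyset=0$ this is precisely what rules out the degenerate value $0$. Recall from Definition~\ref{def:remaining_demand_supple_function} that $F_s$ is non-decreasing whereas $F_d$ is non-increasing, so the two pricing functions invoke the two different branches of Definition~\ref{def:generalized_inverse}, namely $p_a(y)=F_s^{-1}(y)=\sup\{p>0\mid F_s(p)\le y\}$ and $p_b(y)=F_d^{-1}(y)=\sup\{p>0\mid F_d(p)>y\}$. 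Everything then reduces to checking that these two sets are non-empty and bounded away from $0$.

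I would first dispatch the ask pricing function $p_a$, which is the easy half. Fix $y>0$. Since $F_s$ is a remaining supply function we have $\lim_{p\to0}F_s(p)=0$, so there is a $\delta>0$ with $F_s(\delta)<y$, whence $\delta\in\{p>0\mid F_s(p)\le y\}$ and therefore $p_a(y)\ge\delta>0$. Note that this step uses only the boundary behaviour of $F_s$ at the origin and not its non-triviality.

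The bid pricing function $p_b$ is where the non-triviality hypothesis enters and where the main difficulty lies, since now the large value of $F_d$ needed to beat $y$ must come from near $p=0$. Because $F_d\not\equiv0$ and $F_d$ is non-increasing, there is a $p_0>0$ with $F_d(p_0)>0$, and monotonicity upgrades this to $F_d(p)\ge F_d(p_0)>0$ for all $p\in(0,p_0]$; in particular the non-increasing limit $M:=\lim_{p\to0}F_d(p)=\sup_{p>0}F_d(p)$ is strictly positive. For any $y$ with $0<y<M$ one can then pick $p$ close enough to $0$ that $F_d(p)>y$, so $\{p>0\mid F_d(p)>y\}$ is non-empty and contains points bounded below by a fixed positive number, giving $p_b(y)>0$. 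The one genuinely delicate point to pin down is the behaviour of $F_d$ as $p\downarrow0$: for a finite order book $F_d$ is bounded, so the bid set becomes empty once $y$ exceeds the total buy-side volume $M$ and the supremum then collapses to $0$ via $\sup\emptyset=0$. Thus I expect the honest content of the argument to be that $p_b(y)>0$ holds throughout the economically meaningful range $0<y<M$, which the non-triviality hypothesis guarantees to be non-empty, and that it holds for \emph{all} $y>0$ exactly when $F_d$ is unbounded at the origin; I would make this range explicit in the write-up rather than leave it implicit in the convention.
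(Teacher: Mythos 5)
Your proposal is correct, and it actually supplies more than the paper does: the paper gives no proof at all, merely asserting that the lemma ``is an immediate consequence of involved definitions.'' Your treatment of $p_a$ is complete --- the condition $\lim_{p\to 0}F_s(p)=0$ alone guarantees a $\delta>0$ with $F_s(\delta)\le y$, so $p_a(y)\ge\delta>0$, and you are right that the non-triviality of $F_s$ is never used there. More importantly, your caveat about $p_b$ identifies a genuine flaw in the statement as written: with the convention $\sup\emptyset=0$ from Definition~\ref{def:generalized_inverse}, the hypothesis $F_d\not\equiv 0$ only yields $M:=\lim_{p\to 0}F_d(p)>0$, so $p_b(y)>0$ is guaranteed only for $0<y<M$; for any bounded limit order book (e.g.\ the discrete book of Table~\ref{tab:limitorderbook}, where $M$ is the total bid-side volume) one has $\{p>0\mid F_d(p)>y\}=\emptyset$ and hence $p_b(y)=0$ for $y\ge M$, contradicting the claim ``for all $y>0$.'' This is not a pedantic point: it propagates to the subsequent lemma on depth, whose assertion that $d_b(y)=\infty$ for $y>\lim_{p\to 0}F_d(p)$ is inconsistent with $p_b\equiv 0$ on $(M,\infty)$ and would instead require the convention $\sup\emptyset=+\infty$ for the non-increasing branch of the generalized inverse. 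Your suggestion to state the positivity of $p_b$ only on the range $(0,M)$ (or, equivalently, to add the hypothesis that $F_d$ is unbounded at the origin) is the correct fix.
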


\begin{definition}[Depth of a limit order book]
The depth of the bid side of the limit order book is defined as
\begin{align}
d_b(y) := \int_0^y p_b(t) \ dt.
\end{align}
The depth of the ask side of the limit order book is defined as
\begin{align}
d_a(y) := \int_0^y p_a(t) \ dt.
\end{align} 
\end{definition}

\begin{remark}[Interpretation of depth]
The depth $d_a(y)$ of the ask side is the amount of money needed to buy $y$ many units, and the depth $d_b(y)$ of the bid side is the amount of money received from selling $y$ many units. 
\end{remark}

We also have the elementary observation which follows from Lemma~\ref{p_strict_positivity_of_pricing_functions}.
\begin{lemma}
The bid depth~$d_b$ is strictly increasing on the interval~$[0,\lim_{p \to 0 } F_d(p))$ and $d_b(y) = \infty$ for any~$y > \lim_{p \to 0} F_d(p)$. Similarly, the ask depth~$d_a$ is strictly increasing on the interval $[0,\lim_{p \to \infty } F_s(p))$ and ~$d_a(y) = \infty$ for any~$y > \lim_{p \to \infty} F_s(p)$.
\end{lemma}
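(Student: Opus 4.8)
\emph{Proof proposal.} The plan is to pass from the depth functions to the pricing functions $p_b=F_d^{-1}$ and $p_a=F_s^{-1}$ and to exploit one layer-cake identity for each side. Write $M_d:=\lim_{p\to 0}F_d(p)$ and $M_s:=\lim_{p\to\infty}F_s(p)$ for the two threshold volumes appearing in the statement; since $F_d$ is non-increasing and $F_s$ is non-decreasing these limits exist in $(0,\infty]$ and equal the total demand and supply volume, respectively. By Lemma~\ref{prop:properties_generalized_inverse} the function $p_b$ is non-increasing and $p_a$ is non-decreasing, and by Lemma~\ref{p_strict_positivity_of_pricing_functions} both are strictly positive on $(0,\infty)$, hence in particular measurable.

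First I would record the two representations
\[
 d_b(y)=\int_0^\infty \min\{F_d(p),y\}\,dp \qquad\text{and}\qquad d_a(y)=\int_0^\infty \max\{y-F_s(p),0\}\,dp,
\]
obtained from a Fubini computation: writing $p_b(t)=\int_0^\infty \mathds{1}[\,p<p_b(t)\,]\,dp$ and using the elementary equivalence $p<p_b(t)\iff F_d(p)>t$, valid up to a $p$-null set straight from Definition~\ref{def:generalized_inverse}, one exchanges the order of integration in $\int_0^y p_b(t)\,dt$ and evaluates $\int_0^y\mathds{1}[F_d(p)>t]\,dt=\min\{F_d(p),y\}$; the ask side is analogous with the condition $F_s(p)\le t$, giving $\int_0^y\mathds{1}[F_s(p)\le t]\,dt=\max\{y-F_s(p),0\}$. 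These identities make both claims transparent.

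For strict monotonicity, fix $0\le y_1<y_2$ below the relevant threshold. On the bid side the integrand difference $\min\{F_d(p),y_2\}-\min\{F_d(p),y_1\}$ is non-negative and strictly positive exactly on $\{p:F_d(p)>y_1\}$; since $F_d(0^+)=M_d>y_1$ and $F_d$ is non-increasing, $F_d(p)>y_1$ on a whole interval $(0,\delta)$, so the difference integrates to a strictly positive number and $d_b(y_1)<d_b(y_2)$. The ask side is identical, the relevant set being $\{p:F_s(p)<y_2\}$, which is non-degenerate because $F_s(0^+)=0<y_2$. For the infinite-depth claim on the ask side, if $y>M_s$ then $y-F_s(p)\ge y-M_s>0$ for \emph{every} $p$, so the integrand in the representation of $d_a$ is bounded below by a positive constant over the whole half-line and $d_a(y)=+\infty$.

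The step I expect to carry the real weight is the infinite-depth claim on the bid side, where the structure genuinely differs from the ask side. For $y\ge M_d$ the representation collapses to $d_b(y)=\int_0^\infty F_d(p)\,dp=\int_{(0,\infty)}q\,d\mu_d(q)$, the first moment of the demand measure; equivalently, since $p_b$ tends to $0$ rather than to $+\infty$ once $y$ passes $M_d$, the divergence cannot come from the tail in $y$ but must instead come from the non-integrable singularity of $p_b$ at $t=0$, i.e.\ from demand sitting at unboundedly high prices. The crux is therefore to establish that this first moment diverges, which is where one must unpack the precise boundary behaviour of the generalized inverse at the threshold together with the standing hypotheses on $F_d$; this is the one point that does not follow from the soft monotonicity-and-positivity arguments used for the other three assertions.
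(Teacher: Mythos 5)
Your Fubini/layer-cake route is genuinely different from what the paper does (the paper offers no argument at all beyond pointing to Lemma~\ref{p_strict_positivity_of_pricing_functions}), and three of your four steps are correct: the two integral representations, the strict-monotonicity arguments on the two intervals, and the ask-side divergence. The gap you flag at the end, however, is not a missing computation that ``unpacking the boundary behaviour of the generalized inverse'' could supply: under the paper's own conventions the bid-side divergence is false. Definition~\ref{def:generalized_inverse} sets $\sup\emptyset=0$ for non-increasing functions, and $F_d(p)\leq M_d:=\lim_{q\to0}F_d(q)$ for every $p$ by monotonicity, so $p_b(t)=\sup\{p:F_d(p)>t\}=0$ for all $t\geq M_d$. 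Consequently, for $y>M_d$ your representation gives
\begin{align}
 d_b(y)\;=\;\int_0^{M_d}p_b(t)\,dt\;=\;\int_0^{\infty}F_d(p)\,dp\;=\;\int_{(0,\infty)}p\,d\mu_d(p),
\end{align}
the total dollar volume resting on the bid side. This is finite for every bounded limit order book --- for the book of Table~\ref{tab:limitorderbook} it equals $5440$ --- so $d_b$ is \emph{constant}, not infinite, beyond $M_d$. The asymmetry with the ask side is structural: $p_a(t)=\sup\{p:F_s(p)\leq t\}$ automatically becomes $\sup(0,\infty)=+\infty$ once $t$ exceeds the total supply, whereas $p_b$ collapses to $0$. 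The bid half of the lemma can only be rescued by changing the convention to $p_b(t):=+\infty$ for $t$ above the total demand, or by restating the conclusion as $d_b(y)=\int p\,d\mu_d(p)$ there; the honest verdict is that the statement is in error, not that your proof is incomplete.

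A secondary point: your strict-monotonicity argument on the bid side implicitly uses $d_b(y_1)<\infty$ for $y_1<M_d$, which can fail. Take $F_d(p)=\min\{1,1/p\}$, a legitimate remaining demand function; then $\int_0^\infty\min\{F_d(p),y\}\,dp=\infty$ for every $y>0$ and $d_b$ is identically $+\infty$, so ``strictly increasing'' is vacuously false. Hence even the monotonicity half needs the tail-integrability hypothesis $\int_1^\infty F_d(p)\,dp<\infty$, i.e.\ essentially the boundedness condition~\eqref{eq:finite_limit_order_book_x_0}. The ask side does not have this problem because $\max\{y-F_s(p),0\}$ is supported on the bounded set $\{p:F_s(p)<y\}$ whenever $y<M_s$.
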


When translating a limit order book into an iso-util we need to determine the current supply levels~$(x_0,y_0)$, which will only be well-defined if the limit order book is bounded. 

\begin{definition}[Bounded limit order book]
We say that a limit order book $\mathscr{L}$ given by the remaining supply function~$F_s$ and remaining demand function~$F_d$ is bounded if the following limits exists:
\begin{align}\label{eq:finite_limit_order_book_x_0}
x_0 := \int_{0}^\infty p \ d F_d(p)< \infty  
\end{align} 
and
\begin{align}\label{eq:finite_limit_order_book_y_0}
y_0 : = \lim_{p \to \infty} F_s(p) < \infty.
\end{align} 
Here, the integral denotes the Lebesgue-Stieltjes integral.
\end{definition}

\begin{proposition}[Associating an iso-util to a limit order book] \label{prop:lob_to_isoutil}
We consider a bounded limit order book given by a remaining demand function~$F_d$ and a remaining supply function~$F_s$. We define the function~$y_b: (0, x_0] \to [0, \infty)$ by
\begin{align}
y_b (x) :=  y_0 + d_b^{-1}(x_0-x)
\end{align}
and the function~$y_a : [x_0, \infty) \to [0, \infty)$ by
\begin{align}
y_a (x) := \max \left\{ y_0 - d_a^{-1} (x-x_0), 0 \right\}.
\end{align}
Then~$y_b$ defines the bid part of an iso-util~$I_b$ and~$y_a$ defines the ask part of an iso-util~$I_a$. If the limit order book is settled, then the function~$y: (0, \infty) \to [0, \infty)$ given by
\begin{align}
y(x) := 
\begin{cases}
y_b(x) & \mbox{if } x \leq x_0 \\
y_a(x) & \mbox{if } x > x_0
\end{cases}
\end{align}
is convex and its graph~$I = I_b \cup I_a$ defines an iso-util with current supply level~$(x_0, y_0)$, bid part~$I_b$ and ask part~$I_a$. 
\end{proposition}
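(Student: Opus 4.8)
The plan is to reduce everything to the characterization of iso-utils in Remark~\ref{rem:charcaterization_function_representation}: a function on an interval $(0,d_f)$ that is convex and tends to $0$ at its right endpoint is the function representation of an iso-util. Thus the proposition follows once I show that $y_b$ and $y_a$ are convex on their respective domains, that they glue continuously and convexly at $x_0$ in the settled case, and that the resulting function decays to $0$. Before the analysis I would record the economic bookkeeping that motivates the formulas: moving along the ask part means paying $x-x_0$ dollars to buy $d_a^{-1}(x-x_0)$ units of $Y$, so the reachable supply level is $(x,\,y_0-d_a^{-1}(x-x_0))$, and symmetrically selling $d_b^{-1}(x_0-x)$ units of $Y$ raises the $Y$-supply to $y_0+d_b^{-1}(x_0-x)$. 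This both confirms the displayed formulas and shows that $(x_0,y_0)$ is the correct dividing supply level, with $y_b(x_0)=y_a(x_0)=y_0$ since $d_a^{-1}(0)=d_b^{-1}(0)=0$.

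The core step is convexity of the two pieces. First I record the monotonicity of the pricing functions: since $F_s$ is non-decreasing and $F_d$ is non-increasing, Lemma~\ref{prop:properties_generalized_inverse}(\ref{point:generalized_inverse_strictly_increasing}) gives that $p_a=F_s^{-1}$ is non-decreasing and $p_b=F_d^{-1}$ is non-increasing. Consequently the ask depth $d_a(y)=\int_0^y p_a(t)\,dt$ is convex and the bid depth $d_b(y)=\int_0^y p_b(t)\,dt$ is concave, and both are strictly increasing by Lemma~\ref{p_strict_positivity_of_pricing_functions}. Using the standard fact that the inverse of a convex (resp.\ concave) strictly increasing function is concave (resp.\ convex), I obtain that $d_a^{-1}$ is concave and $d_b^{-1}$ is convex. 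Therefore $y_a(x)=\max\{y_0-d_a^{-1}(x-x_0),0\}$ is convex, being the maximum of the constant $0$ and the convex function $y_0-d_a^{-1}(x-x_0)$; and $y_b(x)=y_0+d_b^{-1}(x_0-x)$ is convex, being $d_b^{-1}$ precomposed with the affine map $x\mapsto x_0-x$. Each piece is non-negative and, since $d_a^{-1}$ and $d_b^{-1}$ are increasing, decreasing in $x$; the decay $y_a(x)\to 0$ at the right end of the domain is built into the truncation by $0$, reached at $x_0+d_a(y_0)$ when this is finite and attained asymptotically otherwise.

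To combine the pieces in the settled case I would verify convexity of $y$ across the junction $x_0$. Continuity there is immediate. For convexity it suffices to check $y_-'(x_0)\le y_+'(x_0)$. Reading off the one-sided slopes from the linear behaviour of the depths near the origin, namely $d_a(y)\sim p_a\cdot y$ and $d_b(y)\sim p_b\cdot y$ as $y\downarrow 0$ where $p_a=\inf\supp\mu_s$ and $p_b=\sup\supp\mu_d$ are the best ask and best bid, I find $y_+'(x_0)=-1/p_a$ and $y_-'(x_0)=-1/p_b$. Hence $y_-'(x_0)\le y_+'(x_0)$ is equivalent to $p_b\le p_a$, which is exactly the settled condition. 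This slope comparison at the kink is the main obstacle: it is the only place where the settled hypothesis is used, and it requires care in relating the right-derivatives of the generalized inverses $d_a^{-1}$ and $d_b^{-1}$ at $0$ to the best quotes, controlling the non-smoothness via right-continuity of $p_a$ and $p_b$. With convexity and decay to $0$ in hand, Remark~\ref{rem:charcaterization_function_representation} identifies the graph of $y$ as an iso-util with supply level $(x_0,y_0)$, and by construction its restriction to $x\le x_0$ (resp.\ $x\ge x_0$) is exactly $I_b$ (resp.\ $I_a$) in the sense of Definition~\ref{def:bid_ask_part_isoutil}.

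Finally, for the two statements about the individual parts, which do not presuppose that the book is settled, I would argue that each convex, non-negative, decreasing piece extends to a full iso-util: complete $y_a$ to the left of $x_0$ (resp.\ $y_b$ to the right of $x_0$) by any convex decreasing branch running down to $0$ with a compatible slope at $x_0$, and invoke Remark~\ref{rem:charcaterization_function_representation} again, so that the ask (resp.\ bid) part of the resulting iso-util is the graph of $y_a$ (resp.\ $y_b$). I expect the routine portions to be the convexity/concavity of the depth integrals and their inverses, while the genuinely substantive point is the slope inequality at $x_0$ that isolates precisely where the settled hypothesis enters.
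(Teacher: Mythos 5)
Your proposal is correct and follows essentially the same route as the paper's proof: convexity of each piece via monotonicity of the pricing functions and depths, continuity at $x_0$, and a one-sided slope comparison at the kink that is exactly where the settled hypothesis enters. The only differences are cosmetic --- the paper gets convexity by differentiating $y_a$ directly (noting $y_a'(x) = -1/p_a(d_a^{-1}(x-x_0))$ is non-decreasing) instead of invoking convexity/concavity of the depths and their inverses, and it handles the junction by sandwiching the difference quotients with a mid-price $m$ rather than computing the one-sided derivatives $-1/p_a$ and $-1/p_b$ explicitly, which is a slightly more robust version of the same comparison you identify.
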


\begin{proof}[Proof of Proposition~\ref{prop:lob_to_isoutil}]
We need to show that the functions~$y_b$ and~$y_a$ are convex, that~$\lim_{x \to \infty} y_a(x)=0$, and that in the settled case the function~$y(x)$ is convex (cf.~Remark~\ref{rem:charcaterization_function_representation}). Let us first show that the function~$y_a$ is convex. It suffices to show that its derivative~$y_a'$ is non-decreasing. Straightforward differentiation yields for~$x>x_0$
\begin{align}
y_a'(x) = - \frac{1}{d_a'(d_a (x-x_0))} = - \frac{1}{p_a(d_a (x-x_0))} = - \frac{1}{F_s^{-1}(d_a (x-x_0))}.
\end{align}
We observe that~$F_s$ is non-decreasing and therefore also~$F_s^{-1}$ (see Lemma~\ref{prop:properties_generalized_inverse}). Because~$d_a$ is increasing (see Lemma~\ref{p_strict_positivity_of_pricing_functions}) it follows that~$p_a(d_a (x-x_0))$ is non-decreasing, which in turn implies that~$y_a'(x)$ is non-decreasing.

The argument to show that the function~$y_b$ is convex is similar and left out. Also, the property~$\lim_{x \to \infty} y_a(x)=0$ follows directly from the definitions.

It is left to show that if the limit order book is settled then the function~$y$ is convex. For this let us first note that the function~$y$ is continuous, which follows from the observation~$y_a(x_0) = y_b(x_0)$. To show that~$y$ is convex, it suffices to show that~
\begin{align}\label{eq:iso_util_to_lob_settled}
{y_b}'_{-}(x_0) \leq {y_a}'_{+}(x_0).
\end{align}
The last inequality means that the left-hand derivative of~$y_b$ at~$x_0$ is less than or equal to the right hand derivative of~$y_a$ at~$x_0$. Let~$y_m$ denote a mid price of the limit order book. Because the limit order book is settled it follows from the definitions that for any~$h>0$
\begin{align}
d_{b}^{-1} (h) \leq y_m \leq d_{a}^{-1} (h). 
\end{align}
Straightforward manipulation of the last inequality using the definition yields 
\begin{align}
    \frac{y_0 - y_b(x_0 -h )}{h} \leq \frac{y_a(x+h)-y_0}{h}, 
\end{align}
which implies the desired estimate~\eqref{eq:iso_util_to_lob_settled} by sending~$h \to 0$.
\end{proof}

\begin{definition} [Iso-utils of a limit order book]\label{def:unsettled_isotuil}
   Let us consider the union~$I=I_b\cup I_a $ of the bid part~$I_b$ and ask part~$I_a$ of a limit order book (see Proposition~\ref{prop:lob_to_isoutil}). By a slight misuse of terminology we also call~$I$ an iso-util even if it may be non-convex and thus cannot be an iso-util of a utility function.
\end{definition}
    
In contrast to iso-utils of a utility function $U$, iso-utils of a limit order book might be non-convex, which at the same time identifies an arbitrage opportunity.

\begin{proposition}[Arbitrage and non-convex iso-utils]\label{prop:arbitrage_non_convex} 
The iso-util of a limit order book is non-convex if and only if the limit order book is unsettled. If that is the case then there is an arbitrage opportunity in the market.      
\end{proposition}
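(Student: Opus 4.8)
The plan is to reduce the convexity of the iso-util $I = I_b \cup I_a$ to a single slope comparison at the junction $x_0$, and then to read off that comparison as the settledness condition. By Proposition~\ref{prop:lob_to_isoutil} the functions $y_b$ and $y_a$ representing $I_b$ and $I_a$ are each convex on their respective domains $(0, x_0]$ and $[x_0, \infty)$, and they agree at $x_0$ since $y_b(x_0) = y_a(x_0) = y_0$. A continuous function that is convex on $(0, x_0]$ and on $[x_0, \infty)$ is convex on all of $(0, \infty)$ if and only if its left-hand derivative does not exceed its right-hand derivative at the gluing point, i.e.
\begin{align}
    {y_b}'_{-}(x_0) \leq {y_a}'_{+}(x_0).
\end{align}
Thus the entire statement hinges on computing these two one-sided derivatives, the only place where convexity can fail since each branch is already convex.

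For the right-hand derivative, differentiating $y_a(x) = y_0 - d_a^{-1}(x - x_0)$ for $x$ slightly larger than $x_0$ gives ${y_a}'(x) = -1/p_a(d_a^{-1}(x - x_0))$, where $p_a = F_s^{-1}$ is the ask pricing function. Letting $x \downarrow x_0$ and using $d_a^{-1}(0) = 0$ together with the right-continuity of $p_a$ from Lemma~\ref{prop:properties_generalized_inverse}, I obtain ${y_a}'_{+}(x_0) = -1/p_a(0)$ with $p_a(0) = F_s^{-1}(0) = \inf \supp \mu_s$, the best ask price. The analogous computation for $y_b(x) = y_0 + d_b^{-1}(x_0 - x)$ yields ${y_b}'_{-}(x_0) = -1/p_b(0)$ with $p_b(0) = F_d^{-1}(0) = \sup \supp \mu_d$, the best bid price. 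Since all these prices are strictly positive (Lemma~\ref{p_strict_positivity_of_pricing_functions}), the convexity condition becomes
\begin{align}
    -\frac{1}{p_b(0)} \leq -\frac{1}{p_a(0)} \quad \Longleftrightarrow \quad p_b(0) \leq p_a(0),
\end{align}
that is, best bid does not exceed best ask.

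This last inequality is precisely the criterion for the book to be settled (one may choose a separating mid-price $m$ with $F_d(m) = F_s(m) = 0$ as in Corollary~\ref{prop:settled_lob_via_rsf_rdf}). Consequently $I$ is convex exactly when the book is settled, and therefore non-convex exactly when the book is unsettled; this simultaneously sharpens the one-directional implication (settled implies convex) already contained in Proposition~\ref{prop:lob_to_isoutil} into an equivalence. For the arbitrage claim, suppose the book is unsettled, so $p_b(0) > p_a(0)$: there are limit buy orders priced strictly above limit sell orders. An arbitrageur can buy an infinitesimal unit of $Y$ at the ask $p_a(0)$ and immediately sell it at the bid $p_b(0)$, realizing the risk-free gain $p_b(0) - p_a(0) > 0$. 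Equivalently, the clearing volume $Z$ of Lemma~\ref{prop:clearing_auxiliary_lemma} and the arbitrage profit $P$ of Proposition~\ref{p_entropy_unsettled_market} are strictly positive.

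The main obstacle is the careful evaluation of the one-sided derivatives at the junction $x_0$: one must track the $\sup/\inf$ conventions and right-continuity of the generalized inverses $d_a^{-1}$, $d_b^{-1}$, $F_s^{-1}$, $F_d^{-1}$ at the boundary value $0$, and confirm that $F_s^{-1}(0)$ and $F_d^{-1}(0)$ are indeed the best ask and best bid prices. Everything else is a routine reduction to the single kink at $x_0$, since each branch of the iso-util is already convex by Proposition~\ref{prop:lob_to_isoutil}.
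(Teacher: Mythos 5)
Your proof is correct and takes essentially the same route as the paper's: both reduce the question to comparing the one-sided derivatives of the iso-util at the current supply level $(x_0,y_0)$, identify them with $-1/(\text{best bid})$ and $-1/(\text{best ask})$, obtain the convexity criterion ``best bid $\leq$ best ask,'' and exhibit the crossed-book buy-low/sell-high trade; the converse direction (settled implies convex) is in both cases delegated to Proposition~\ref{prop:lob_to_isoutil}. You merely carry out in full the derivative computation that the paper's two-line proof only gestures at --- and, incidentally, your sign bookkeeping (left-hand derivative strictly \emph{larger} than the right-hand one at a crossed book, violating $f'_-\leq f'_+$) is the correct one, whereas the paper's wording states the inequality the other way around.
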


\begin{proof}[Proof of Proposition~\ref{prop:arbitrage_non_convex}]
If the limit order book is unsettled, there is a bid limit order (buy order) with a higher price than an ask limit order (sell order). Therefore, the left hand derivative of the iso-util at the current supply level is strictly smaller than its right hand derivative, which contradicts convexity. The arbitrage opportunity is straightforward, buying the asset from the ask order and selling it for a better price to the bid order.    
\end{proof}

\begin{remark}[Absence of arbitrage in Definition~\ref{d_markets} of a market]\label{rem:quasi_concave_no_arbitrage}
Let us consider a utility function~$U$ of a market~$\mathcal{M}$. One of the defining conditions of a market is that the utility function~$U$ is quasi-concave, which means that the iso-utils~$I$ must be convex. Hence, quasi-concavity implies that there is no direct price arbitrage opportunity in the market.     
\end{remark}

\begin{example}[Non-convex iso-util of an unsettled limit order book]      
By virtue of Proposition~\ref{prop:arbitrage_non_convex}, if the limit order book is unsettled then the resulting graph of $I_b\cup I_a$ is non-convex. Only after settling matching orders, i.e., after the clearing procedure outlined in Proposition~\ref{prop:settling_a_market} (adiabatic) or Proposition~\ref{prop:iso_util_settling_a_market} (iso-util), one obtains a proper convex iso-util. Recall that Table~\ref{tab:discrete_LOBs} (left) shows an example of an unsettled limit order book. The ask part $I_a$ and bid part $I_b$ of the associated iso-util are illustrated in Figure~\ref{fig:LOB_isoutil_unsettled}. The graph of $I_b \cup I_a$ is not convex. The resulting convex iso-utils from the adiabtic clearing and iso-util clearing are also depicted in Figure~\ref{fig:LOB_isoutil_unsettled}. Recall that the corresponding limit order books are given in Table~\ref{tab:discrete_LOBs} (right)
and Table~\ref{tab:example_adiabatic_isoutil} (right). Observe that both clearing procedures give rise to a new method of convexifying a graph.
\end{example}

\begin{figure}
\centering
\includegraphics[scale=0.65]{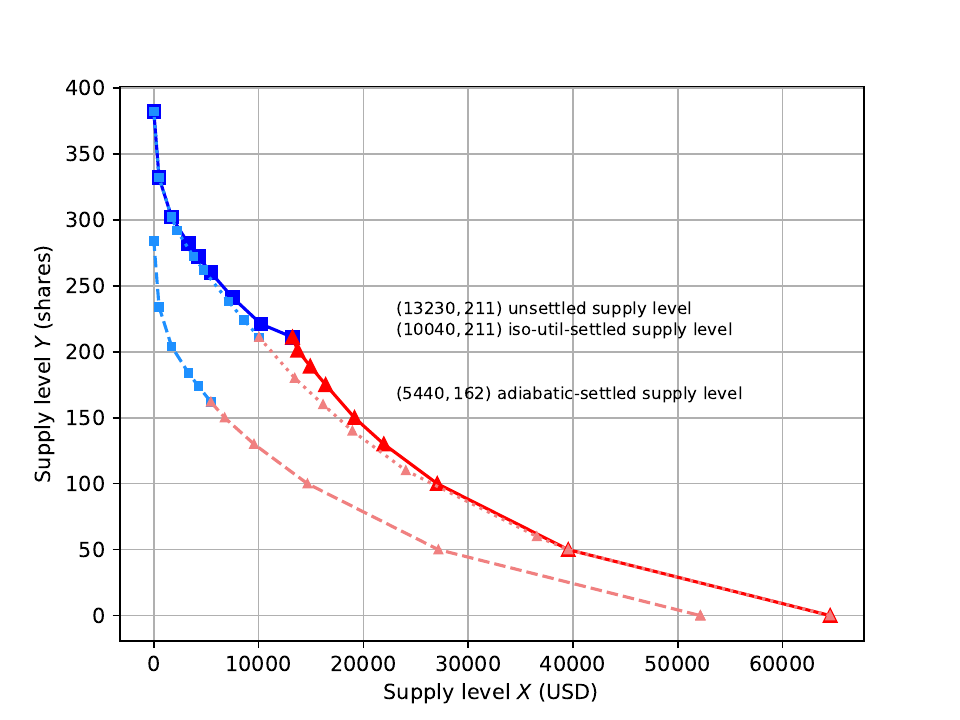} 
\caption{Comparison of the unsettled non-convex iso-util (solid line) generated from the unsettled limit order book given in Table~\ref{tab:discrete_LOBs} (left) with the convex iso-utils associated to the iso-util clearing (dotted line) and adiabatic clearing (dashed line), which is also illustrated in Figure~\ref{fig:LOB_isoutil_settled}.}
\label{fig:LOB_isoutil_unsettled}
\end{figure}

\section{Aggregation of markets}\label{sec:aggregate_markets}

In this section, we describe the arbitrage-mediated aggregation mechanism of financial markets. As discussed above, we consider two different aggregation processes: adiabatic and iso-util aggregation. In Definition~\ref{d_aggregate_market} below we outline the details of the aggregation mechanisms. In Example~\ref{ex_aggregation_ideal_markets} and~\ref{ex_aggregation_ideal_markets_different_prices} we calculate the adiabatic and iso-util aggregation of two ideal markets. Remarkably, it turns out that the iso-util aggregation of two ideal markets is again an ideal market. We then discuss applications of market aggregation, the fundamental law of market dynamics, and the role of market dynamical entropy. 

\begin{definition}[Adiabatic and iso-util market aggregation] \label{d_aggregate_market}
For a pair of assets~$(X,Y)$, we consider $n \in \mathbb{N}$ markets $\mathcal{M}_j$, $j \in \{1,\ldots,n\}$, with associated iso-utils $I_j$ and supply levels $(x_j, y_j)$. We denote by $F_{s,j}$ and $F_{d,j}$ the corresponding remaining supply and remaining demand function of the iso-util $I_j$.
\begin{enumerate}
\item[(i)] The unsettled aggregated limit order book is given by the remaining supply function $F_{s,u} = \sum_{j=1}^n F_{s,j}$ and remaining demand function $F_{d,u} = \sum_{j=1}^n F_{d,j}$. The unsettled aggregated iso-util $I_u$ has supply level $(x_u, y_u) = (\sum_{j=1}^n x_j, \sum_{j=1}^n y_j)$ and is given by the iso-util that is associated to the limit order book $(F_{s,u} , F_{d,u})$ (see Proposition~\ref{prop:lob_to_isoutil}).
\item[(ii)] In the case of adiabatic aggregation, the settled aggregated limit order book $(F_{s,a}, F_{d,a})$ is given by the adiabatic clearing of the unsettled limit order book $(F_{s,u}, F_{d,u})$ following Proposition~\ref{prop:settling_a_market}; and in the case of iso-util aggregation the iso-util-settled limit order book $(F_{s,i}, F_{d,i})$ is obtained from the iso-util clearing of the unsettled limit order book $(F_{s,u}, F_{d,u})$ described in Proposition~\ref{prop:iso_util_settling_a_market}. The associated iso-utils are given by $I_a$ and $I_i$, respectively, with corresponding supply levels $(x_a,y_a)$ and $(x_i,y_i)$ as described in Proposition~\ref{p_adiabatic_iso_util_ebntropy}.
\end{enumerate}
\end{definition}

\begin{notation}
The operation of unsettled aggregation of markets is denoted with the symbol $\circledtriangle$ whereas $\bigtriangleup$ indicates the operation of settled aggregation (adiabatic or iso-util). More precisely, we denote with
\begin{align}
\mathcal{M}_1 \circledtriangle \mathcal{M}_2, \qquad U_1 \circledtriangle U_2, \qquad  I_1 \circledtriangle I_2
\end{align}
the unsettled aggregation of the markets $\mathcal{M}_1$ and $\mathcal{M}_2$, utility functions $U_1$ and $U_2$, and iso-utils $I_1$ and $I_2$; and with 
\begin{align}
\mathcal{M}_1 \bigtriangleup \mathcal{M}_2, \qquad U_1 \bigtriangleup U_2, \qquad  I_1 \bigtriangleup I_2
\end{align}
we denote the settled aggregated counterparts.
\end{notation}
    
\begin{example}[Aggregation of two ideal markets with same marginal price]\label{ex_aggregation_ideal_markets}
Let us consider two ideal markets $\mathcal{M}_1$ and $\mathcal{M}_2$ with utility function $U(x,y) = \log x + \log y$ (recall Definition~\ref{def:ideal_market}). For the first market, we consider an iso-util $x \cdot y = T_1 = A_1^2$ with temperature $T_1$, mean activity $A_1$, and current supply level $(x_1,y_1)$. For the second market, we consider an iso-util $x \cdot y = T_2= A_2^2$ with temperature $T_2$, mean activity $A_2$, and current supply level $(x_2,y_2)$. We assume that the marginal prices of both markets coincide, which implies by Proposition~\ref{p_gradient_characterization_prices} that $p_1 = \frac{x_1}{y_1} = p_2 = \frac{x_2}{y_2}$. Since the marginal prices are the same, there is no overlap of the corresponding remaining demand and remaining supply functions, and adding them up readily yields the remaining demand function and remaining supply function of the aggregated market which is settled. In particular, we have $\mathcal{M}_1 \circledtriangle \mathcal{M}_2 = \mathcal{M}_1 \triangle \mathcal{M}_2$, there are no price arbitrage opportunities (i.e., $P=0$ in~\eqref{e_arbitrage_profit}), no entropy is generated from aggregation ($S_i=S_a=(0,0)$ in Definition~\ref{d_adiabatic_iso_util_entropy}) and the aggregated market has supply level $(x_u,y_u) = (x_i,y_i) = (x_a,y_a) = (x_1+x_2, y_1+y_2)$ (cf.~Proposition~\ref{p_adiabatic_iso_util_ebntropy}). Setting $p_e := p_1 = p_2$, Example~\ref{ex:rsf_rdf_ideal_market} yields that the remaining demand function (RDF) of the aggregated market is given by
\begin{align}
F_d(p) = 
\begin{cases}
(A_1 + A_2) \left( \frac{1}{\sqrt{p}} - \frac{1}{\sqrt{p_e}}\right), & p < p_e , \\
0, & p \geq p_e,
\end{cases}
\end{align}
and the remaining supply function (RSF) of the aggregated market is given by
\begin{align}
F_s(p) = \begin{cases}
0, & p < p_e, \\
(A_1+A_2) \left( \frac{1}{\sqrt{p_e}} - \frac{1}{\sqrt{p}} 
 \right), & p \geq p_e.
\end{cases}
\end{align}
Since RDF and RSF characterize iso-utils, we can conclude that the aggregated market $\mathcal{M}_1 \triangle \mathcal{M}_2$ is again an ideal market with utility function $U(x,y) = \log x +  \log y$ at temperature $\left( \sqrt{T_1}+\sqrt{T_2} \right)^2$ and mean activity $A_1+A_2$. Hence, when aggregating ideal markets with same marginal prices, the mean activity is additive and the temperature is super-additive.
\end{example}

\begin{example}[Aggregation of two ideal markets with different marginal prices]\label{ex_aggregation_ideal_markets_different_prices}
We consider the same setup as in Example~\ref{ex_aggregation_ideal_markets}. This time, however, we assume that the marginal prices of the two markets $\mathcal{M}_1$ and $\mathcal{M}_2$ are different, that is, $p_1= \frac{x_1}{y_1} <  p_2= \frac{x_2}{y_2}$.
\begin{itemize}
\item[(i)] The unsettled aggregated market $\mathcal{M}_1 \circledtriangle \mathcal{M}_2$ has remaining demand function ($\text{RDF}_{u}$)
\begin{align}
F_{d,u}(p) = \begin{cases}
(A_1+A_2) \frac{1}{\sqrt{p}} -  \left( \frac{A_1}{\sqrt{p_1}} + \frac{A_2}{\sqrt{p_2}} \right), & p < p_1 , \\
A_2 \left( \frac{1}{\sqrt{p}} - \frac{1}{\sqrt{ p_2}}\right), & p_1 \leq p < p_2, \\
0, & p \geq  p_2,
\end{cases}
\end{align}
and remaining supply function ($\text{RSF}_{u}$)
\begin{align}
F_{s,u}(p) = 
\begin{cases}
0, & p <  p_1 , \\
A_1 \left( \frac{1}{\sqrt{ p_1}} - \frac{1}{\sqrt{p}} 
\right), & p_1 \leq p < p_2, \\
\left( \frac{A_1}{\sqrt{p_1}} + \frac{A_2}{\sqrt{p_2}} \right) - (A_1 + A_2) \frac{1}{\sqrt{p}}, & p_2 \leq p.
\end{cases}
\end{align}
The corresponding unsettled supply level $(x_u,y_u)$ in~\eqref{p_adiabatic_iso_util_ebntrop_eq} is given by
\begin{equation}
x_u = A_1 \sqrt{p_1} + A_2 \sqrt{p_2}  \quad \text{and} \quad y_u = \frac{A_1}{\sqrt{p_1}} + \frac{A_2}{\sqrt{p_2}}.
\end{equation}
\item[(ii)] For the clearing mechanism, we can compute the clearing volume $Z$ in~\eqref{equ:clearing_volume_2} with unique clearing price $p_e$ in~\eqref{def:auction_price} (i.e., $p_e = p_s = p_d$ in Lemma~\ref{prop:clearing_auxiliary_lemma}) as 
\begin{align}
Z = & \, A_2 \left( \frac{1}{\sqrt{p_e}} - \frac{1}{\sqrt{p_2}} \right) = A_1 \left( \frac{1}{\sqrt{p_1}} - \frac{1}{\sqrt{p_e}} \right), \\ p_e = & \, \frac{p_1p_2(A_1+A_2)^2}{(A_1\sqrt{p_2} + A_2 \sqrt{p_1})^2}.
\end{align}
Using Proposition~\ref{prop:settling_a_market}, the adiabatic-settled aggregated market $\mathcal{M}_1 \triangle \mathcal{M}_2$ has remaining demand function ($\text{RDF}_{a}$)
\begin{align}
F_{d,a}(p) = \begin{cases}
(A_1+A_2) \frac{1}{\sqrt{p}} - \left( \frac{A_1}{\sqrt{p_1}} + \frac{A_2}{\sqrt{p_2}} \right) - Z, & p < p_1 , \\
A_2 \left( \frac{1}{\sqrt{p}} - \frac{1}{\sqrt{ p_2}}\right) - Z, & p_1 \leq p < p_e, \\
0, & p \geq  p_e,
\end{cases}
\end{align}
and remaining supply function ($\text{RSF}_{a}$)
\begin{align}
F_{s,a}(p) = 
\begin{cases}
0, & p < p_e , \\
A_1 \left( \frac{1}{\sqrt{ p_1}} - \frac{1}{\sqrt{p}} 
\right) -  Z, & p_e \leq p < p_2, \\
\left( \frac{A_1}{\sqrt{p_1}} + \frac{A_2}{\sqrt{p_2}} \right) - (A_1 + A_2) \frac{1}{\sqrt{p}} - Z, & p_2 \leq p.
\end{cases}
\end{align}
The resulting arbitrage profit $P$ from aggregation defined in~\eqref{e_arbitrage_profit} is given by 
\begin{equation}
P = A_2 (\sqrt{p_2} - \sqrt{p_e}) - A_1 (\sqrt{p_e} - \sqrt{p_1}),
\end{equation}
and the resulting adiabtic entropy in~\eqref{d_adiabatic_iso_util_entropy_s_a} amounts to
\begin{equation}
S_a = \left( A_2 (\sqrt{p_2} - \sqrt{p_e}), A_1 \left( \frac{1}{\sqrt{p_1}} - \frac{1}{\sqrt{p_e}} \right) \right).
\end{equation}
This is precisely the loss of liquidity of asset $X$ and $Y$ due to the arbitrage-mediated adiabatic aggregation. Specifically, according to~\eqref{p_adiabatic_iso_util_ebntropy_a}, the new supply level after adiabatic aggregation is
\begin{equation}
(x_a,y_a) = (x_u,y_u) - S_a = \left( A_1 \sqrt{p_1} + A_2 \sqrt{p_e}, \frac{A_1}{\sqrt{p_e}} + \frac{A_2}{\sqrt{p_2}} \right).
\end{equation}
Note that the adiabatic-aggregated market $\mathcal{M}_1 \triangle \mathcal{M}_2$ is not an ideal market. The associated iso-util can be computed via Proposition~\ref{prop:lob_to_isoutil}.
\item[(iii)] Following Proposition~\ref{prop:iso_util_settling_a_market}, the remaining demand function ($\text{RDF}_{i}$) and remaining supply function ($\text{RSF}_{i}$) of the iso-util-settled aggregated market $\mathcal{M}_1 \triangle \mathcal{M}_2$ can be computed as 
\begin{align}
F_{d,i}(p) = 
\begin{cases}
(A_1 + A_2) \left( \frac{1}{\sqrt{p}} - \frac{1}{\sqrt{p_e}}\right), & p < p_e , \\
0, & p \geq p_e,
\end{cases}
\end{align}
and
\begin{align}
F_{s,i}(p) = \begin{cases}
0, & p < p_e, \\
(A_1+A_2) \left( \frac{1}{\sqrt{p_e}} - \frac{1}{\sqrt{p}} 
 \right), & p \geq p_e.
\end{cases}
\end{align}
The resulting iso-util entropy in~\eqref{d_adiabatic_iso_util_entropy_s_i} is given by
\begin{equation}
S_i = \left( A_2 (\sqrt{p_2} - \sqrt{p_e}) - A_1 (\sqrt{p_e} - \sqrt{p_1}), 0 \right), 
\end{equation}
and the new supply level in~\eqref{p_adiabatic_iso_util_ebntropy_i} after iso-util aggregation is
\begin{equation}
(x_i,y_i) = (x_u,y_u) - S_i = \left( (A_1 + A_2) \sqrt{p_e}, \frac{A_1+A_2}{\sqrt{p_e}} \right),
\end{equation}
where we use the identity $\frac{A_1}{\sqrt{p_1}} + \frac{A_2}{\sqrt{p_2}} = \frac{A_1+A_2}{\sqrt{p_e}}$. Observe that due to the (arbitrage-mediated) iso-util aggregation, there is actually no loss in liquidity in asset $Y$, only in asset $X$. Remarkably, similar to Example~\ref{ex_aggregation_ideal_markets}, we observe that the iso-util-aggregated market $\mathcal{M}_1 \triangle \mathcal{M}_2$ is again an ideal market with utility function $U(x,y) = \log x +  \log y$ at temperature $\left( \sqrt{T_1}+\sqrt{T_2} \right)^2$ and mean activity $A_1+A_2$. In contrast to Example~\ref{ex_aggregation_ideal_markets}, the entropy generated from iso-util aggregation is not zero, and a quantity of $A_2 (\sqrt{p_2} - \sqrt{p_e}) - A_1 (\sqrt{p_e} - \sqrt{p_1})$ of asset $X$ is lost due to arbitrage. Also note that the clearing price $p_e$ from the aggregation indeed coincides with the new marginal price at the new supply level $(x_i,y_i)$ since $\frac{x_i}{y_i} = p_e$. 
\end{itemize}  
\end{example}

\begin{remark}
From a practical perspective, the aggregation of financial markets allows to reconcile multiple individual markets into one global market. This is particularly useful in settings where markets are fragmented. For example, in the crypto market there are hundreds of centralized and decentralized exchanges trading the same asset pairs, e.g., BTC and USDT. The same applies for the forex market. Aggregation of those markets gives a better picture of the true price, supply and demand relation. Specifically, observe that Example~\ref{ex_aggregation_ideal_markets_different_prices} explicitly shows that the iso-util aggregation of two Uniswap v2 liquidity pools of the same asset pair but with different marginal prices again results in a liquidity pool with Uniswap v2 protocol. Also recall from Remark~\ref{rem:isoutil_liquidprov} that since liquidity providers in these pools act iso-util, iso-util aggregation is the correct notion for the aggregation of these two pools.
\end{remark}

\begin{remark}
One theoretical implication of our proposed aggregation mechanisms of financial markets is that utility functions unify both dual interpretations of a market: on the one hand as a mechanism to exchange assets, and on the other hand as a place where traders interact and come to an agreement. In Section~\ref{s_axiomatic_introduction_markets}, we explain how utility functions (and iso-utils in particular) can be used to describe the possible trades on a market, modeling the market as an exchange mechanism. Market aggregation explains how the utility function of a market arises from the utility functions of the individual traders. For this recall that a trader with her utility function can be regarded as an ``atomic'' market. We make the assumption that all traders are transparent in the sense of Definition~\ref{d_transparent_tradent}. We also assume that they are isolated, that is, they cannot change their initial portfolio/individual supply levels $(x_i,y_i)$ before aggregation (for instance, by interacting with a market) and they cannot communicate their preferences to each other. Additionally, we assume that arbitrageurs are always present, able, and willing to trade on arising arbitrage opportunities. After aggregation, the new portfolios/supply levels of the individual traders will be Pareto-optimal. The corresponding marginal prices of the individual traders will be consistent and coincide in the smooth case.
\end{remark}

When aggregating markets with different marginal prices there will be an overlap of buy and sell limit orders in the joint limit order book. Hence, the aggregated market will be unsettled and will have a non-convex iso-util. After canceling the overlapping buy and sell orders out of the limit order book one gets a \emph{settled} aggregated market with a convex iso-util. We discussed the details on the settling procedure in Section~\ref{s_isoutil_LOB}.\\

When transitioning from the unsettled to the settled market the limit order book gets cleared. Both clearing mechanisms, adiabatic and iso-util, result in a negative supply change as liquidity is leaving the market and lost to arbitrage. Using adiabatic clearing the lost liquidity is given by the adiabatic entropy $S_a$; using iso-util clearing it is given by the iso-util entropy $S_i$. This yields the following law.

\begin{theorem}[Fundamental Law of Market Dynamics]
\emph{When markets aggregate some liquidity is lost to arbitrage.}
\end{theorem}

This law shares similarities with the Second Law of Thermodynamics which states that in a heat engine, i.e., a machine that transforms heat into work, not all heat is transformed into work but some is transformed into entropy. The energy transformed into entropy is \emph{lost} in the sense that it cannot be used to generate work anymore. This is the reason why we choose the terminology adiabatic and iso-util entropy to describe the lost liquidity in market aggregation. In practice market dynamical entropy may be used to quantify the arbitrage profits for brokers and market makers using for example dark pools to match buyers and sellers. It can also be used as a measure of price uncertainty, and might play a role to quantify volatility.\medskip

It is possible to apply this aggregation mechanism to other markets. In Appendix~\ref{sec:entropy_in_economics} we highlight some aspects of applying the theory to consumer and producer markets in economics. One could say, in an exaggerated manner, that market-dynamical entropy is the driving force behind economic activity. 

\begin{remark}[Aggregating markets with different assets]\label{r_aggregating_assets_different_assets} We only consider the aggregation of markets of the same underlying asset pair~$(X,Y)$. It would be an interesting problem to investigate the aggregation of markets that only share one asset. For example, how to join a market of the asset pair $(X_1, X_2)$ with the market of the asset pair $(X_1, X_3)$. The main idea would be to use the pathway via the limit order book correspondence and settling.    
\end{remark}

\section{The emerging theory of market dynamics and open problems}\label{sec:market_dynamics}

Let us briefly describe the main inspiration and goals of the emerging theory of market dynamics. Starting point is the observation that a thermodynamic system, e.g., a piston, can be reduced to its functionality: It is a mechanism to exchange volume for pressure and vice versa. This shares a lot with the functionalist approach to markets which sees a market as a mechanism to exchange one asset into another. Hence, it might not be surprising that meta principles and ideas from thermodynamics are very useful when studying the structure of markets and the interaction between markets. One of the main goals of thermodynamics is to describe how and why energy transfers occur if thermodynamic systems are brought into contact. We propose that the main goal of the theory of market dynamics should be to describe how markets and traders interact if brought into contact. In this article, we made the first step toward achieving this goal: We describe how markets aggregate in the simplistic framework of a static, transparent market of two assets. \\

We introduced new -- and renamed some existing -- notions to point out the connection to thermodynamics. For example, to make precise the similarities to isotherms in thermodynamics, we call indifference curves \emph{iso-utils}. Isotherms denote curves in the state space that conserve the temperature/energy of the system. In market dynamics, iso-utils denote curves that preserve the utility of the portfolio/market. In thermodynamics, the term adiabatic refers to state changes that do not preserve the temperature. Therefore, we call a trader \emph{adiabatic} if her utility increased after a trade. Another example is the terminology of an \emph{ideal market} to describe a market with utility function $U(x,y) = \log x + \log y $. The reason is that the associated iso-utils have the same form as the isotherms of an \emph{ideal gas}. This association was also made in~\cite{MioriCucuringu:23} where the relation is called \emph{ideal crypto law}. \\

The \emph{law of market dynamics} and the term \emph{(market-dynamical) entropy} (see Section~\ref{sec:aggregate_markets}) are also inspired by thermodynamics. The second law of thermodynamics states that when energy flows from the hotter to the colder system, some heat energy is transformed into entropy. This energy is lost in the sense that it cannot be transformed into work anymore. Our fundamental law states that in market aggregation, some liquidity is lost to arbitrage. We call (market-dynamical) entropy the amount of liquidity that got lost through arbitrage. The liquidity is lost because the arbitrageur does not have the intention to re-inject her arbitrage profit back into the market. As we illustrate in Appendix~\ref{sec:entropy_in_economics}, entropy can also be used to measure the size of economic activity that resulted from bringing economic agents into contact with each other. \\

From the present status, there are many directions for future research. For example, one could study in more detail the ideal market, the role of temperature, the notions of \emph{(market-dynamical) work} and \emph{financial engines}. We believe that work might indicate how much value is transferred from one asset to another. Financial engines are inspired by heat engines and seem to be a very useful tool to describe financial bubbles, especially in the context of trading meme stocks, which are driven by bullish sentiments of retail investors; cf., e.g., \cite{SEC2021}. One could also study the role of market-dynamical entropy as another source of \emph{price fluctuations}. The main idea is to split up price volatility into two parts: The first part representing the iso-util activity of traders on the visible market, and the second part representing adiabatic activity caused by the entropy of the hidden market. \\

It might also be natural to ask if there are more concepts in market dynamics that correspond to thermodynamic objects (e.g., energy, free energy, internal energy, temperature, and enthalpy). If so, what relations and main principles do they satisfy? To give an example from thermodynamics, the formula $F = U - S$ connects the Helmholtz free energy $F$ to the internal energy $U$ and the entropy~$S$. Once the theory of market dynamics is sufficiently developed and its main principles are identified, one could turn to developing a theory of market mechanics. Like in statistical mechanics, the main goal of market mechanics would be to derive the main principles of market dynamics from the mechanics that govern the microscopic interaction of traders. \\

There are many ways to refine our model. To keep the presentation as simple as possible we concentrated on markets where only a pair of assets $(X_1,X_2)$ is traded. An extension to multiple assets~$(X_1, \ldots, X_n)$ would be quite beneficial. For example the joint market of the assets dollar, copper, and tin can describe the interrelationship between those assets a lot better than the separate markets of the asset pairs dollar/copper and dollar/tin. The reason is that copper and tin share a combined utility in creating the alloy bronze, a metal with superior properties compared to either element alone. The key to producing bronze lies in achieving the optimal ratio of copper to tin as this mixture offers the highest level of strength and durability. This ratio is close to 12\% tin and 88\% copper.  Consequently, the utility function for these two elements should reach its peak when supply levels of copper and tin are maintained at the precise ratio required for bronze production, underscoring their interdependent nature in the realm of metallurgy. Related, an intriguing question is to identify compatibility conditions under which two markets~$\mathcal{M}_1$ and~$\mathcal{M}_2$ with asset pairs~$(X_1, X_2)$ and~$(X_2, X_3)$ combine into a single joint market~$\mathcal{M}_3$ of the assets~$(X_1, X_2, X_3)$; or to study how markets of different assets are aggregated. \\

Another direction would be to study aggregation in non-transparent markets (see also Appendix~\ref{sec:hidden_markets}). This would be the area of~\emph{hidden market models}, which could be inspired by hidden Markov chains. The main question would be: Is it possible to learn the hidden part of the utility function of the market from data?\\

There are a lot of interesting research questions to address before the aggregation of financial markets is sufficiently understood. Then, the next natural but challenging step would be to study how utility functions of individual traders influence each other once they are brought into contact via the aggregated market.

\section*{Acknowledgment}
The authors want to extend their gratitude to Zachary Feinstein. His excellent presentation at the UCLA research seminar in Mathematical Finance and the follow-up discussions inspired the authors to this research project. Additionally, the first author would like to thank Stefano Olla for pointing out the importance of thermodynamics in the natural sciences. 

\bibliographystyle{alpha}
\bibliography{bib}

\appendix

\section{Aggregation of consumer and producer markets} \label{sec:entropy_in_economics}

In this appendix, we further illustrate adiabatic market aggregation with a simple example in the context of economics. We define the iso-utils of single consumers and producers, study their joint aggregated unsettled market, discuss possible obstacles for settling, and finally consider their adiabatic-settled aggregated market; see Example~\ref{ex_consumer_market}--\ref{ex_aggregated_consumer_producer} below. As a consequence, we see that some economic activities like consumption, production, and trade can be explained as a consequence of market aggregation and resulting entropy. 

\begin{definition}[Consumer and producer market]
A state of a market is called consumer market if the current supply level $(x,y)$ satisfies $y=0$. As a consequence the associated iso-util only has a bid part. Similarly, a state of a market is called producer market if the current supply level~$(x,y)$ satisfies~$x=0$. As a consequence the associated iso-util only has an ask part. 
\end{definition}

\begin{figure}
\centering
\includegraphics[scale=0.6]{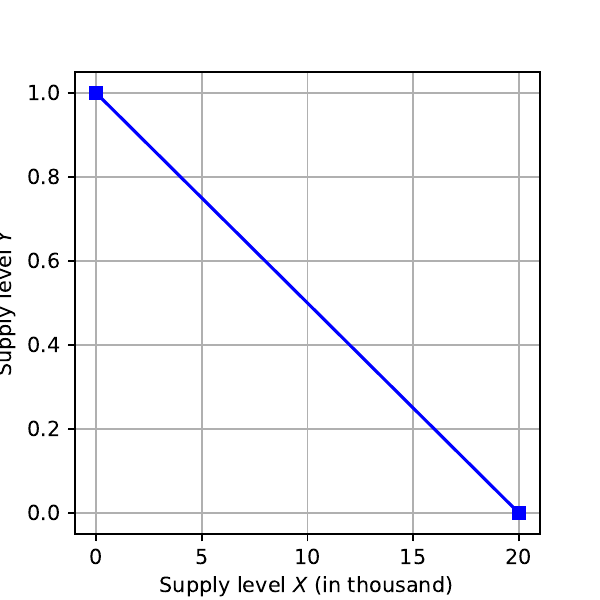}
\includegraphics[scale=0.6]{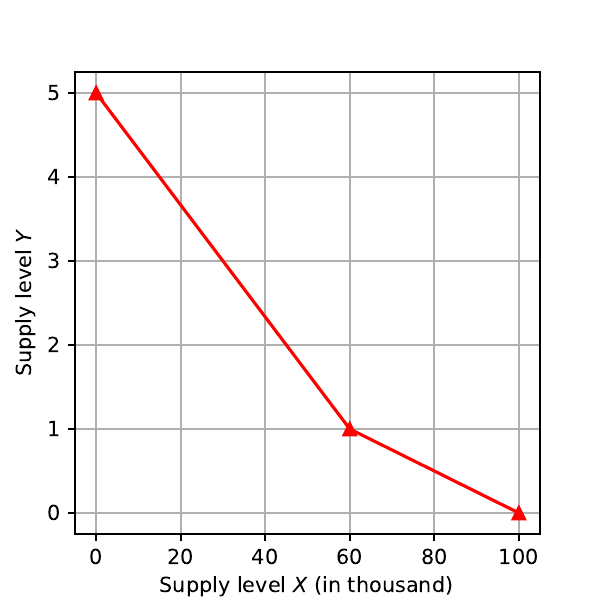}
\caption{\emph{Left:} Iso-util of a single consumer with no car who is willing to buy at most one car (no use for a second one) for at most the price of \$20,000 (not more money). Squares mark admissible supply levels, and the current supply level is given by $(20000,0)$. \emph{Right:} Iso-util of a single car producer that is able to produce a single car at a price of \$40,000 but needs to produce at least 4 cars at a price of~\$15,000. Triangles mark admissible supply levels, and the current supply level is given by $(0,5)$.}
\label{fig:isoutil_single_consumer_producer}
\end{figure}

\begin{example}[Consumer market] \label{ex_consumer_market}
Figure~\ref{fig:isoutil_single_consumer_producer} (left) gives an example of a consumer market. We consider the asset pair $(X,Y)= (\text{dollar}, \text{cars})$. The figure illustrates an hypothetical iso-util of one individual consumer who is willing to buy at most one car (no need or space for a second) for the price of at most \$20000 (not more money available). Hence, the current supply level is given by $(20000, 0)$. In this example the supply levels $y$ are discrete, i.e., the set of admissible supply levels is given by $\mathcal{P} = \left\{ (x, y ) \ | \ x >0 \text{ and } y \in \left\{0,1 \right\} \right\}$ (see also Remark~\ref{r_discrete_supply_levels}).
\end{example}

\begin{example}[Producer market]\label{ex_producer_market}
Figure~\ref{fig:isoutil_single_consumer_producer} (right) gives an example of a producer market. Again, the asset pair $(X,Y)= (\text{dollar} , \text{cars} )$ is considered. The figure illustrates an hypothetical iso-util of one individual producer. The producer is able to produce one car at the price of \$40000. By using economies of scale the producer is also able to produce at least four cars at a price of \$15000 each. The current supply level is given by $(0, 5)$. The supply levels $y$ are discrete, i.e., the set of admissible supply levels is given by $\mathcal{P} = \left\{ (x, y ) \ | \ x >0 \text{ and } y \in \left\{0,1,5 \right\}\right\}$ (see also Remark~\ref{r_discrete_supply_levels}).
\end{example}

\begin{figure}
\centering
\centering
\includegraphics[scale=0.6]{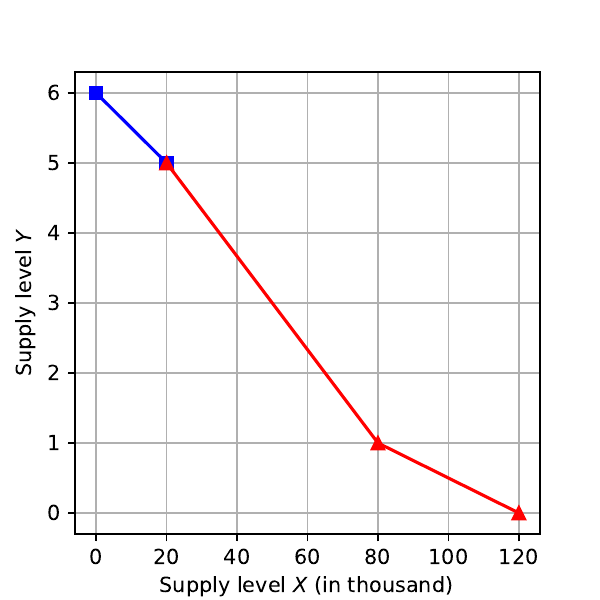}
\includegraphics[scale=0.6]{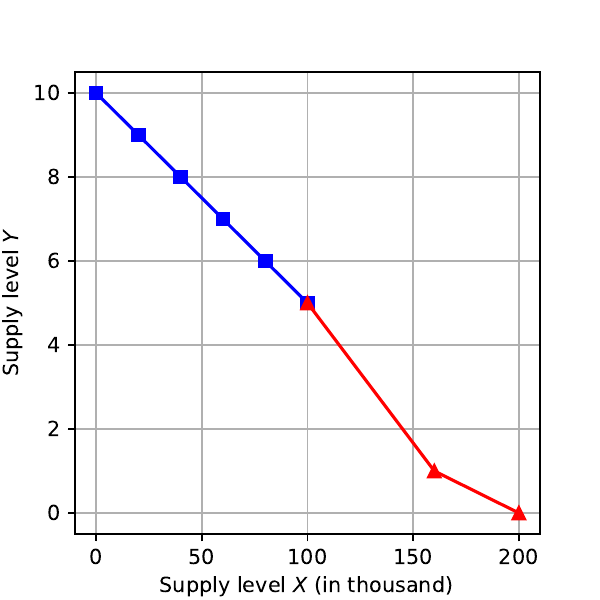}
\caption{\emph{Left:} Iso-util of the aggregated market of the consumer and the car producer of Figure~\ref{fig:isoutil_single_consumer_producer}. We observe that the iso-util is non-convex and therefore the market is not settled. The market cannot settle because the consumer is not willing to pay \$40,000 for a car and the producer cannot produce one car for less than \$20,000. This means that the admissible supply levels (given by the squares and triangles) are not compatible. The current supply level in this figure is given by $(20000,5)$. \emph{Right:} The market from the left plot aggregated with four more consumers with iso-utils as in Figure~\ref{fig:isoutil_single_consumer_producer} (left). In this case, the market can settle as the admissible supply levels of the ask and bid part are compatible. Here, settling means that the producer will produce four cars, sell them to a car dealer (trader) who resells them to the four consumers. We refer to Figure~\ref{fig:isoutil_single_producer_consumer_aggregated_settled} for the associated settled iso-util which will be convex.}
\label{fig:isoutil_single_producer_consumers}
\end{figure}

\begin{example}[Unsettled aggregated consumer producer market]\label{ex_aggregated_consumer_producer}
We aggregate the consumer market of Example~\ref{ex_consumer_market} and the producer market of Example~\ref{ex_producer_market} and obtain an unsettled iso-util given by Figure~\ref{fig:isoutil_single_producer_consumers} (left). Because the producer can only produce at least four cars for a competitive price and the consumer is not able to pay more for one car and cannot purchase four cars, the admissible supply levels of the bid part are not compatible with the admissible supply levels of the ask part. Therefore, the market cannot settle. If three more consumers with the same iso-util as in Example~\ref{ex_consumer_market} can be found, then the admissible supply levels are compatible and the market could settle. In practice, this means that a car dealer (trader) would take advantage of the situation, buy the four cars from the producer and sell them to individual consumers. Hence, a trade can be understood as a consequence of market aggregation in combination with the second law of market dynamics. We refer to Figure~\ref{fig:isoutil_single_producer_consumers} (right) and Figure~\ref{fig:isoutil_single_producer_consumer_aggregated_settled} for more details. Therein, we aggregated the market with four instead of three more consumers to get a better visualization.
\end{example}

\begin{figure}
\centering
\includegraphics[scale=0.6]{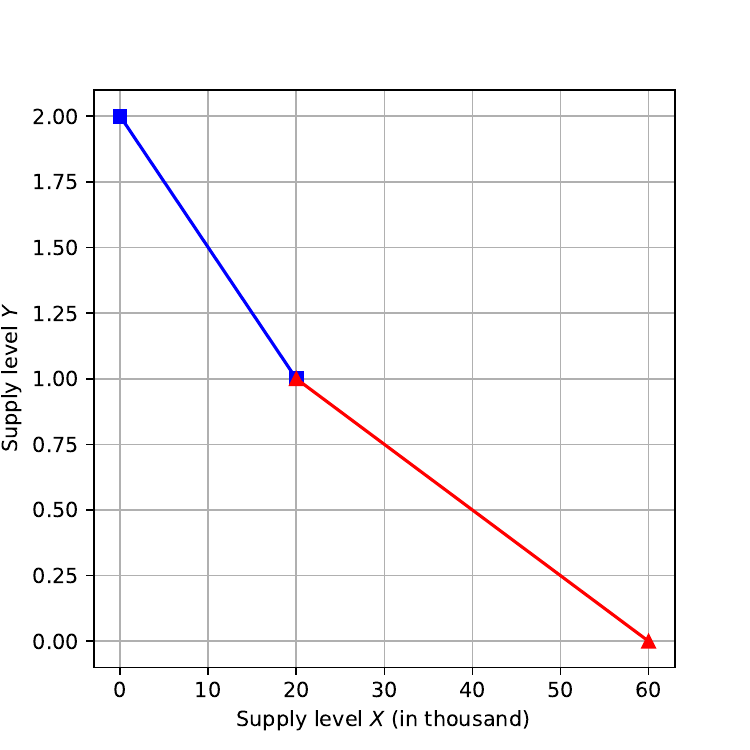}
\caption{Settled iso-util of the unsettled iso-util of Figure~\ref{fig:isoutil_single_producer_consumers} (right). }
\label{fig:isoutil_single_producer_consumer_aggregated_settled}
\end{figure}

From the discussion of the Figures~\ref{fig:isoutil_single_consumer_producer}--\ref{fig:isoutil_single_producer_consumer_aggregated_settled} it becomes obvious that the framework of market dynamics is capable to model complex economic actions from a micro-economic to a macro-economic level. Using the multi-asset definition of markets, i.e., allowing a finite number of assets $(X_1, X_2, \ldots, X_n)$, one could imagine to construct from ground up the world market of all assets. \\

In summary, we have illustrated how entropy can be used to measure the size of economic activity. This opens an interesting perspective to policy makers: Maximizing the market-dynamical entropy corresponds to maximizing economic activity. How can this be achieved? By making economic agents interact as much as possible. As entropy is maximized in transparent markets, the policy maker should set rules which reduce obstacles to interaction as much as possible, and promote and reward transparency such that economic agents reveal their iso-utils. However, just concentrating on achieving transparency is not necessarily the best, as transparency obviously stands in competition with other policy goals like privacy protection.

\section{Hidden markets}\label{sec:hidden_markets}

When applying the arbitrage-mediated aggregation mechanism to real financial markets one faces many challenges. First, it is very hard to observe markets as they are highly fragmented: in decentralized finance by design and in traditional finance by evolution; cf.~\cite{SEC2020}. For instance, the equity market in the US is comprised of several exchanges, broker-dealer platforms, alternative trading systems, and dark pools; and most of them are not directly observable. Second, as we have described in the introduction, traders typically do not act transparent. When aggregating markets, one needs to deal with \emph{hidden information} and \emph{hidden liquidity} (see, e.g., Section 1.4 in~\cite{Bouchaud:18}). \\

There is even a bigger problem than the fact that traders do not act transparent: The utility function of an individual might not be known to the individual herself.\footnote{This might explain the immense value of data gathering via smartphones, social media, web browsing, etc. Actions reveal preferences, which might even be unknown to the individual herself. Through those actions, hidden utility functions can be learned and then monetized.} Many retail traders rely on their intuition, sentiments, etc. The utility function of an individual  trader might also be heavily influenced by the utility function of others, as everyone looks at the overall market in order to derive the value of their assets. \\

To account for these problems, one approach is to split up the utility function in a visible and a hidden part. Aggregating would then yield two markets:  
\begin{itemize}
\item An \emph{aggregated hidden market} which is comprised of the aggregation of the complete utility functions. This market is not observable and unsettled.    
\item An \emph{aggregated visible market} which is comprised of the aggregation of the visible part of the utility functions. This market is observable and settles via arbitrage. 
\end{itemize} 

As a consequence, the hidden aggregated market contains \emph{hidden arbitrage opportunities}. Estimating the hidden aggregated market is a subtle but rewarding problem as it might uncover those opportunities. This might explain the long-term success of certain players in financial markets. Sentiment analysis on social media, which was used very successfully to generate trading signals, can be interpreted as an attempt to estimate the hidden part of the utility function of a typical retail trader. This perspective also sheds a new light on price discovery as the hidden utility function becomes visible and settles around the newly discovered price. \\

Hidden arbitrage also challenges the Efficient Market Hypothesis, which argues that all available information is already incorporated into the asset price. Therefore, fluctuations would only result from new information entering the market. After aggregation, only visible but no hidden information is incorporated in the present price. As some information is hidden, the correct price is not known to the market participants. The overlap of the hidden limit order book yields an additional source of price fluctuations as it means that traders do not agree on the correct price. Only if \emph{all} market participants are able to learn the hidden information faster than it changes, then this source of fluctuation vanishes. Otherwise, one will observe a market with large, persistent, entropic-driven price fluctuation. \\

\end{document}